\title{Smooth min-entropy lower bounds for approximation chains}
\author{Ashutosh Marwah\footnote{email: ashutosh.marwah@outlook.com} \ and Fr\'ed\'eric Dupuis}
\affil{\small{D\'epartement d'informatique et de recherche op\'erationnelle,\\ Universit\'e de Montr\'eal,\\ Montr\'eal QC, Canada}}
\date{\today}							
\newcommand{\charFn}[1]{\chi_{#1}}
\begin{document}
\maketitle

\begin{abstract}
    For a state $\rho_{A_1^n B}$, we call a sequence of states $(\sigma_{A_1^k B}^{(k)})_{k=1}^n$ an approximation chain if for every $1 \leq k \leq n$, $\rho_{A_1^k B} \approx_\epsilon \sigma_{A_1^k B}^{(k)}$. In general, it is not possible to lower bound the smooth min-entropy of such a $\rho_{A_1^n B}$, in terms of the entropies of $\sigma_{A_1^k B}^{(k)}$ without incurring very large penalty factors. In this paper, we study such approximation chains under additional assumptions. We begin by proving a simple entropic triangle inequality, which allows us to bound the smooth min-entropy of a state in terms of the R\'enyi entropy of an arbitrary auxiliary state while taking into account the smooth max-relative entropy between the two. Using this triangle inequality, we create lower bounds for the smooth min-entropy of a state in terms of the entropies of its approximation chain in various scenarios. In particular, utilising this approach, we prove approximate versions of the asymptotic equipartition property and entropy accumulation. In the companion paper \cite{Marwah23-src_corr}, we show that the techniques developed in this paper can be used to prove the security of quantum key distribution in the presence of source correlations. 
\end{abstract}

\tableofcontents

\section{Introduction}
\begin{sloppypar}
One-shot information theory investigates the behaviour of tasks in communication and cryptography under general unstructured processes, as opposed to independent and identically distributed (i.i.d) processes, where the states or the tasks themselves have a certain tensor product structure. This is crucial for information theoretically secure cryptography, where one cannot place any kind of assumption on the actions of the adversary (see, for example, \cite{Tomamichel12,Konig12}). To prove security for such protocols, a common strategy is to show that some smooth min-entropy is sufficiently large. For this reason, the smooth min-entropy \cite{Renner06,Renner05} is one of the most important quantities in one-shot information theory. \\

The smooth min-entropy $H^{\epsilon}_{\min}(K|E)_{\rho}$ for the classical-quantum state ${\rho = \sum_{k} p(k)\ket{k}\bra{k}\otimes \rho_{E|k}}$ characterises the amount of randomness one can extract from the classical register $K$ independent of the adversary's register $E$~\cite{tssr10}. It behaves very differently from the von Neumann conditional entropy, which characterises tasks in the i.i.d setting, and the difference between the two can be very large. Roughly speaking, the smooth min-entropy places a much higher weight on the worst possible scenario of the conditioning register, whereas the von Neumann entropy places an equal weight on all possible scenarios. \\

An important and interesting argument, which works with the von Neumann conditional entropy but fails with the smooth min-entropy, is that of proving lower bounds on the entropy using an \emph{approximation chain}. We call a sequence of states\footnote{For $n$ quantum registers $(X_1, X_2, \cdots, X_n)$, the notation $X_i^j$ refers to the set of registers $(X_i, X_{i+1}, \cdots, X_{j})$.} $(\sigma^{(k)}_{A_1^k B})_{k=1}^n$ an $\epsilon$-approximation chain for the state $\rho_{A_1^n B}$ if for every $k$, we can approximate the partial state $\rho_{A_1^k B}$ as $\Vert{\rho_{A_1^k B} - \sigma^{(k)}_{A_1^k B}}\Vert_1 \leq {\epsilon}$. If one can further prove that these states satisfy $H(A_k | A_{1}^{k-1} B)_{\sigma^{(k)}} \geq c$ for some $c>0$ sufficiently large, then the following simple argument shows that $H(A_1^n | B)_{\rho}$ is large:
\begin{align*}
    H(A_1^n | B)_{\rho} &= \sum_{k=1}^n H(A_k | A_{1}^{k-1} B)_{\rho} \\
    &\geq \sum_{k=1}^n \rndBrk{H(A_k | A_{1}^{k-1} B)_{\sigma^{(k)}} - g(\epsilon)} \\
    &\geq n(c- g(\epsilon))
\end{align*}
where we used continuity of the von Neumann conditional entropy in the second line ($g(\epsilon) = O(\epsilon \log \frac{|A|}{\epsilon})$ is a ``small'' function of $\epsilon$). It is well known that a similar argument is not possible with the smooth min-entropy. Consequently, identities for the smooth min-entropy, like the chain rules \cite{Vitanov13}, are much more restrictive. Tools like entropy accumulation \cite{Dupuis20,Metger22} also seem quite rigid, in the sense that they cannot be applied unless certain (Markov chain or non-signalling) conditions apply. It is also not clear how one could relax the conditions for such tools. In this paper, we consider scenarios consisting of approximation chains, similar to the above, along with additional conditions and prove lower bounds on the appropriate smooth min-entropies. \\

We begin by considering the scenario of \emph{approximately independent registers}, that is, a state $\rho_{A_1^n B}$, which for every $1 \leq k \leq n$ satisfies
\begin{align}
    \frac{1}{2}\norm{\rho_{A_1^k B} - \rho_{A_k} \otimes \rho_{A_1^{k-1} B}}_1 \leq \epsilon. 
    \label{eq:scenario1_eqn}
\end{align}
for some small $\epsilon>0$ and arbitrarily large $n$ (in particular $n \gg \frac{1}{\epsilon}$). That is, for every $k$, the system $A_k$ is almost independent of the system $B$ and everything else, which came before it. For simplicity, let us further assume that for all $k$ the state $\rho_{A_k} = \rho_{A_1}$. Intuitively, one expects that the smooth min-entropy (with the smoothing parameter depending on $\epsilon$ and not on $n$)\footnote{The smoothing parameter must depend on $\epsilon$ in such a scenario. This can be seen by considering the probability distribution $P_{A_1^n B}$ such that $B$ is $0$ with probability $\epsilon$ and $1$ otherwise and $A_1^n $ is a random $n$-bit string if $B=1$ and constant if $B=0$. \label{fn:eps_conc_bad}} for such a state will be large and close to $\approx n (H(A_1)-g'(\epsilon))$ (for some small function $g'(\epsilon)$). However, it is not possible to prove this result using techniques, which rely only on the triangle inequality and smoothing. The triangle inequality, in general, can only be used to bound the trace distance between $\rho_{A_1^n B}$ and $\otimes_{k=1}^n \rho_{A_k} \otimes \rho_B$ by $n \epsilon$, which will result in a trivial bound when $n \gg \frac{1}{\epsilon}$ \footnote{Consider the distribution $Q_{A_1^{2n} B_1^{2n}}$, where for every $i \in [2n]$, the bit $B_i$ is chosen independently and is equal to $0$ with probability $\epsilon$ and is $1$ otherwise. The bit $A_i$ is chosen randomly if $B_i=1$, otherwise it is chosen to be equal to $A_{i-1}$. In this case, $Q_{A_k}$ is the uniformly random distribution for bits and Eq. \ref{eq:scenario1_eqn} is satisfied. Let $I = |\{i \in [n]: A_{2i-1} = A_{2i}\}|$. Then, for $Q_{A_1^{2n} B_1^{2n}}$, this value concentrates around $\frac{n(1+\epsilon)}{2}$, whereas for $\prod_{i=1}^{2n} Q_{A_i} \cdot Q_{B_1^{2n}}$, it concentrates around $\frac{n}{2}$. This shows that $\norm{Q_{A_1^{2n} B_1^{2n}}-\prod_{i=1}^{2n} Q_{A_i} \cdot Q_{B_1^{2n}}}_1 \rightarrow 2$.\label{fn:on_avg_eps_bd}}. Instead, we show that a bound on the entropic distance given by the smooth max-relative entropy between these two states can be used to prove a lower bound for the smooth min-entropy in this scenario. \\

While an upper bound of $n\epsilon$ is trivial and meaningless for the trace distance for large $n$, it is still a meaningful bound for the relative entropy between two states, which is unbounded in general. We can show that the above approximation conditions (Eq. \ref{eq:scenario1_eqn}) also imply that relative entropy distance between $\rho_{A_1^n B}$ and $\otimes_{k=1}^n \rho_{A_k} \otimes \rho_B$ is $n f(\epsilon)$ for some small function $f(\epsilon)$. The substate theorem \cite{Jain02} allows us to transform this relative entropy bound into a smooth max-relative entropy bound. For two general states $\rho_{AB}$ and $\eta_{AB}$, such that $d := D^{\delta}_{\max}(\rho_{AB} || \eta_{AB})$, we can easily bound the smooth min-entropy of $\rho$ in terms of the min-entropy of $\eta$ by observing that
\begin{align}
    \rho_{AB} \approx_{\delta} \tilde{\rho}_{AB} \leq 2^d \eta_{AB} \leq 2^{-(H_{\min}(A|B)_{\eta} - d)} \Id_A \otimes \sigma_B
    \label{eq:simple_Hmin_triangle_ineq}
\end{align}
for some state $\sigma_B$, which satisfies $D_{\max}(\eta_{AB} || \Id_A \otimes \sigma_B) = -H_{\min}(A|B)_{\eta}$. This implies that 
$$H_{\min}^{\delta}(A|B)_{\rho} \geq H_{\min}(A|B)_{\eta} - D^{\delta}_{\max}(\rho_{AB} || \eta_{AB}).$$
We call this an \emph{entropic triangle inequality}, since it is based on the triangle inequality property of $D_{\max}$. We can further improve this smooth min-entropy triangle inequality to (Lemma \ref{lemm:Hmin_rho_to_Halpha_sigma_using_Dmax})
\begin{align}
    H_{\min}^{\epsilon+\delta}(A|B)_{\rho} &\geq \tilde{H}^{\uparrow}_{\alpha}(A|B)_{\eta} - \frac{\alpha}{\alpha-1} D^\epsilon_{\max} (\rho_{AB}|| \eta_{AB}) - \frac{g_1(\delta, \epsilon)}{\alpha-1}
    \label{eq:intro_Hmin_trnf_bd}
\end{align}
for some function $g_1$, $\epsilon + \delta <1$ and $1 < \alpha \leq 2$. Our general strategy for the scenarios considered in this paper is to first bound the ``one-shot information theoretic'' distance (the smooth max-relative entropy distance) between the real state $\rho$ ($\rho_{A_1^n B}$ in the above scenario) and a virtual, but \emph{nicer} state, $\eta$ ($\otimes_{k=1}^n \rho_{A_k} \otimes \rho_B$ above) by $nf(\epsilon)$ for some small $f(\epsilon)$. Then, we use Eq. \ref{eq:intro_Hmin_trnf_bd} above to reduce the problem of bounding the smooth min-entropy on state $\rho$ to that of bounding a $\alpha$-R\'enyi entropy on the state $\eta$. Using this strategy, in Corollary \ref{cor:approx_ind_reg}, we prove that for states satisfying the approximately independent registers assumptions, we have for $\delta = O\rndBrk{\epsilon \log\frac{|A|}{\epsilon}}$ that 
\begin{align}
    H_{\min}^{\delta^{\frac{1}{4}}}(A_1^n|B)_{\rho} &\geq n \rndBrk{H (A_1)_{\rho} - O(\delta^{\frac{1}{4}})} - O\rndBrk{\frac{1}{\delta^{3/4}}}.
\end{align}

Another scenario we consider here is that of approximate entropy accumulation. In the setting for entropy accumulation, a sequence of channels $\cM_k: R_{k-1} \rightarrow A_k B_k R_k$ for $1 \leq k \leq n$ sequentially act on a state $\rho^{(0)}_{R_0 E}$ to produce the state $\rho_{A_1^n B_1^n E} = \cM_n \circ \cdots \circ \cM_1(\rho^{(0)}_{R_0 E})$. It is assumed that the channels $\cM_k$ are such that the Markov chain $A_1^{k-1} \leftrightarrow B_1^{k-1} E \leftrightarrow B_k$ is satisfied for every $k$. This ensures that the register $B_k$ does not reveal any additional information about $A_1^{k-1}$ than what was previously revealed by $B_1^{k-1}E$. The entropy accumulation theorem \cite{Dupuis20}, then provides a tight lower bound for the smooth min-entropy $H^{\delta}_{\min}(A_1^n | B_1^n E)$. We consider an approximate version of the above setting where the channels $\cM_k$ themselves do not necessarily satisfy the Markov chain condition, but they can be $\epsilon$-approximated by a sequence of channels $\cM'_k$, which satisfies certain Markov chain conditions. Such relaxations are important to understand the behaviour of cryptographic protocols, like device-independent quantum key distribution \cite{Ekert91,Friedman18}, which are implemented with imperfect devices \cite{Jain23,Tan23}. Once again we can model this scenario as an approximation chain: for every $1 \leq k \leq n$, the state produced in the $k$th step satisfies
\begin{align*}
    \rho_{A_1^k B_1^k E} &= \tr_{R_k} \circ \cM_k \rndBrk{ \cM_{k-1} \circ \cdots \circ \cM_1(\rho^{(0)}_{R_0 E})} \\
    &\approx_{\epsilon} \tr_{R_k} \circ \cM'_k \rndBrk{ \cM_{k-1} \circ \cdots \circ \cM_1(\rho^{(0)}_{R_0 E})} := \sigma^{(k)}_{A_1^k B_1^k E}.
\end{align*}
Moreover, the assumptions on the channel $\cM'_k$ guarantee that the state $\sigma^{(k)}_{A_1^k B_1^k E}$ satisfies the Markov chain condition $A_1^{k-1} \leftrightarrow B_1^{k-1} E \leftrightarrow B_k$, and so the chain rules and bounds used for entropy accumulation apply for it too. Roughly speaking, we use the chain rules for divergences \cite{Fawzi21} to show that the divergence distance between the states $\rho_{A_1^n B_1^n E} = \cM_n \circ \cdots \circ \cM_1(\rho^{(0)}_{R_0 E})$ and the virtual state $\sigma_{A_1^n B_1^n E} = \cM'_n \circ \cdots \circ \cM'_1(\rho^{(0)}_{R_0 E})$ is relatively small, and then reduce the problem of lower bounding the smooth min-entropy of $\rho_{A_1^n B_1^n E}$ to that of lower bounding an $\alpha$-R\'enyi entropy of $\sigma_{A_1^n B_1^n E}$, which can be done by using the chain rules developed for entropy accumulation\footnote{The channel divergence bounds we are able to prove are too weak for this idea to work as stated here. The actual proof is more complicated. However, this idea works in the classical case. }. In Theorem \ref{th:approx_EAT}, we show the following smooth min-entropy lower bound for the state $\rho_{A_1^n B_1^n E}$ for sufficiently small $\epsilon$ and an arbitrary $\delta>0$
\begin{align}
    H_{\min}^{\delta}(A_1^n|B_1^n E)_{\rho} \geq \sum_{k=1}^n \inf_{\omega} & H(A_k | B_k \tilde{R}_{k-1})_{\cM'_k(\omega)} - nO(\epsilon^{\frac{1}{24}}) - O\rndBrk{\frac{1}{\epsilon^{\frac{1}{24}}}}
\end{align}
where the infimum is over all possible input states $\omega_{R_{k-1} \tilde{R}_{k-1}}$ for reference register $\tilde{R}_{k-1}$ isomorphic to $R_{k-1}$, and the dimensions $|A|$ and $|B|$ are assumed constant while using the asymptotic notation. \\

In the companion paper \cite{Marwah23-src_corr}, we use the techniques developed in this paper to provide a solution for the source correlation problem in quantum key distribution (QKD) \cite{Pereira22}. Briefly speaking, the security proofs of QKD require that one of the honest parties produce randomly and independently sampled quantum states in each round of the protocol. However, the states produced by a realistic quantum source will be somewhat correlated across different rounds due to imperfections. These correlations are called source correlations. Proving security for QKD under such a correlated source is challenging and no general satisfying solution was known. In \cite{Marwah23-src_corr}, we use the entropic triangle inequality to reduce the security of a QKD protocol with a correlated source to that of the QKD protocol with a depolarised variant of the perfect source, for which security can be proven using existing techniques.
\end{sloppypar} 

\section{Background and Notation}

For $n$ quantum registers $(X_1, X_2, \cdots, X_n)$, the notation $X_i^j$ refers to the set of registers $(X_i, X_{i+1}, \cdots, X_{j})$. We use the notation [n] to denote the set $\{1,2, \cdots, n\}$. For a register $A$, $|A|$ represents the dimension of the underlying Hilbert space. If $X$ and $Y$ are Hermitian operators, then the operator inequality $X \geq Y$ denotes the fact that $X-Y$ is a positive semidefinite operator and $X>Y$ denotes that $X-Y$ is a strictly positive operator. A quantum state (or briefly just \emph{state}) refers to a positive semidefinite operator with unit trace. At times, we will also need to consider positive semidefinite operators with trace less than equal to $1$. We call these operators subnormalised states. We will denote the set of registers a quantum state describes (equivalently, its Hilbert space) using a subscript. For example, a quantum state on the register $A$ and $B$, will be written as $\rho_{AB}$ and its partial states on registers $A$ and $B$, will be denoted as $\rho_{A}$ and $\rho_{B}$. The identity operator on register $A$ is denoted using $\Id_A$. A classical-quantum state on registers $X$ and $B$ is given by $\rho_{XB} = \sum_{x} p(x) \ket{x}\bra{x} \otimes \rho_{B|x}$, where $\rho_{B|x}$ are normalised quantum states on register $B$.\\

The term ``channel'' is used for completely positive trace preserving (CPTP) linear maps between two spaces of Hermitian operators. A channel $\cN$ mapping registers $A$ to $B$ will be denoted by $\cN_{A \rightarrow B}$. We write $\text{supp}(X)$ to denote the support of the Hermitian operator $X$ and use $X \ll Y$ to denote that $\text{supp}(X) \subseteq \text{supp}(Y)$. \\

The trace norm is defined as $\norm{X}_1 := \tr\big(\rndBrk{X^\dag X}^{\frac{1}{2}}\big)$. The fidelity between two positive operators $P$ and $Q$ is defined as $F(P,Q)= \norm{\sqrt{P}\sqrt{Q}}_1^2$. The generalised fidelity between two subnormalised states $\rho$ and $\sigma$ is defined as 
\begin{align}
    F_\ast(\rho, \sigma) := \rndBrk{\norm{\sqrt{\rho}\sqrt{\sigma}}_1 + \sqrt{(1- \tr\rho)(1- \tr\sigma)}}^2.
\end{align}
The purified distance between two subnormalised states $\rho$ and $\sigma$ is defined as 
\begin{align}
    P(\rho, \sigma) = \sqrt{1- F_{\ast}(\rho, \sigma)}.
\end{align}
We will also use the diamond norm distance as a measure of the distance between two channels. For a linear transform $\cN_{A \rightarrow B}$ from operators on register $A$ to operators on register $B$, the diamond norm distance is defined as 
\begin{align}
    \norm{\cN_{A \rightarrow B}}_{\diamond} := \max_{X_{AR}: \norm{X_{AR}}_1 \leq 1} \norm{\cN_{A \rightarrow B}(X_{AR})}_1
\end{align}
where the supremum is over all Hilbert spaces $R$ (fixing $|R|= |A|$ is sufficient) and operators $X_{AR}$ such that $\norm{X_{AR}}_1 \leq 1$. \\

Throughout this paper, we use base $2$ for both the functions $\log$ and $\exp$. We follow the notation in Tomamichel's book~\cite{TomamichelBook16} for R\'enyi entropies. For $\alpha \in (0,1) \cup (1,2)$, the Petz $\alpha$-R\'enyi relative entropy between the positive operators $P$ and $Q$ is defined as
\begin{align}
    \bar{D}_{\alpha}(P || Q) = \begin{cases}
        \frac{1}{\alpha-1} \log \tr \frac{\rndBrk{P^\alpha Q^{1-\alpha}}}{\tr(P)} & \text{ if } (\alpha<1 \text{ and } P \not\perp Q) \text{ or } (P \ll Q)\\
        \infty & \text{ else}.
    \end{cases}
\end{align}
The sandwiched $\alpha$-R\'enyi relative entropy for $\alpha \in [\frac{1}{2},1) \cup (1,\infty]$ between the positive operator $P$ and $Q$ is defined as 
\begin{align}
    \tilde{D}_{\alpha}(P || Q) = \begin{cases}
        \frac{1}{\alpha-1} \log \frac{\tr(Q^{-\frac{\alpha'}{2}} P Q^{-\frac{\alpha'}{2}})^{\alpha}}{\tr(P)} & \text{ if } (\alpha<1 \text{ and } P \not\perp Q) \text{ or } (P \ll Q)\\
        \infty & \text{ else}.
    \end{cases}
\end{align}
where $\alpha' = \frac{\alpha-1}{\alpha}$. In the limit $\alpha \rightarrow 
\infty$, the sandwiched divergence becomes equal to the max-relative entropy, $D_{\max}$, which is defined as 
\begin{align}
    D_{\max} (P|| Q) := \inf \curlyBrk{\lambda \in \mathbb{R}: P \leq 2^{\lambda}Q}.
\end{align}
In the limit of $\alpha \rightarrow 1$, both the Petz and the sandwiched relative entropies equal the quantum relative entropy, $D(P ||Q)$, which is defined as 
\begin{align}
    D(P|| Q) := \begin{cases}
        \frac{\tr\rndBrk{P \log P - P \log Q}}{\tr(P)} & \text{ if } (P \ll Q)\\
        \infty & \text{ else}.
    \end{cases}
\end{align}
Given any divergence $\mathbb{D}$, we can define the (stabilised) channel divergence based on $\mathbb{D}$ between two channels $\cN_{A \rightarrow B}$ and $\cM_{A \rightarrow B}$ as \cite{Cooney16,Leditzky18}
\begin{align}
    \mathbb{D}(\cN || \cM) := \sup_{\rho_{AR}} \mathbb{D}(\cN_{A \rightarrow B}(\rho_{AR}) || \cM_{A \rightarrow B}(\rho_{AR})) 
\end{align}
where $R$ is reference register of arbitrary size ($|R| = |A|$ can be chosen when $\mathbb{D}$ satisfies the data processing inequality). \\

We can use the divergences defined above to define the following conditional entropies for the subnormalised state $\rho_{AB}$:
\begin{align*}
    \bar{H}_{\alpha}^{\uparrow} (A|B)_{\rho} &:= \sup_{\sigma_B} - \bar{D}_{\alpha}(\rho_{AB} || \Id_A \otimes \sigma_B) \\
    \tilde{H}_{\alpha}^{\uparrow} (A|B)_{\rho} &:= \sup_{\sigma_B} - \tilde{D}_{\alpha}(\rho_{AB} || \Id_A \otimes \sigma_B) \\
    \bar{H}_{\alpha}^{\downarrow} (A|B)_{\rho} &:= - \bar{D}_{\alpha}(\rho_{AB} || \Id_A \otimes \rho_B) \\
    \tilde{H}_{\alpha}^{\downarrow} (A|B)_{\rho} &:= - \tilde{D}_{\alpha}(\rho_{AB} || \Id_A \otimes \rho_B)
\end{align*}
for appropriate $\alpha$ in the domain of the divergences. The supremum in the definition for $\bar{H}_{\alpha}^{\uparrow}$ and $ \tilde{H}_{\alpha}^{\uparrow}$ is over all quantum states $\sigma_B$ on register $B$. \\

For $\alpha \rightarrow 1$, all these conditional entropies are equal to the von Neumann conditional entropy $H(A|B)$. $\tilde{H}_{\infty}^{\uparrow} (A|B)_{\rho}$ is usually called the min-entropy. The min-entropy is usually denoted as $H_{\min}(A|B)_{\rho}$ and for a subnormalised state can also be defined as 
\begin{align}
    H_{\min}(A|B)_{\rho} &:= \sup \curlyBrk{\lambda \in \mathbb{R}: \text{ there exists state }\sigma_B \text{ such that } \rho_{AB} \leq 2^{-\lambda} \Id_A \otimes \sigma_B}.
\end{align}  
For the purpose of smoothing, define the $\epsilon$-ball around the subnormalised state $\rho$ as the set
\begin{align}
    B_{\epsilon}(\rho) = \{ \tilde{\rho} \geq 0 : P(\rho, \tilde{\rho}) \leq \epsilon \text{ and } \tr\tilde{\rho} \leq 1\}.
\end{align}
We define the smooth max-relative entropy as 
\begin{align}
    D_{\max}^{\epsilon}(\rho || \sigma) = \min_{\tilde{\rho} \in B_{\epsilon}(\rho)} D_{\max}(\tilde{\rho} || \sigma)
\end{align}
The smooth min-entropy of $\rho_{AB}$ is defined as 
\begin{align}
    H_{\min}^{\epsilon}(A|B)_{\rho} = \max_{\tilde{\rho} \in B_{\epsilon}(\rho)} H_{\min}(A|B)_{\tilde{\rho}}.
\end{align} 

\section{Entropic triangle inequality for the smooth min-entropy}
\label{sec:D_max_based_triangle_ineq}

In this section, we derive a simple entropic triangle inequality (Lemma \ref{lemm:Hmin_rho_to_Halpha_sigma_using_Dmax}) for the smooth min-entropy of the form in Eq. \ref{eq:intro_Hmin_trnf_bd}. This Lemma is a direct consequence of the following triangle inequality for $\tilde{D}_{\alpha}$ (see \cite[Theorem 3.1]{Christandl17} for a triangle inequality, which changes the second argument of $\tilde{D}_{\alpha}$). 

\begin{lemma}
    Let $\rho$ and $\eta$ be subnormalised states and $Q$ be a positive operator, then for $\alpha>1$, we have 
    \begin{align*}
        \tilde{D}_{\alpha}(\rho || Q) \leq  \tilde{D}_{\alpha}(\eta || Q) + \frac{\alpha}{\alpha-1} D_{\max}(\rho || \eta) + \frac{1}{\alpha-1} \log \frac{\tr(\eta)}{\tr(\rho)}
    \end{align*}
    and for $\alpha<1$ if one of $\tilde{D}_{\alpha}(\eta || Q)$ and $D_{\max}(\rho || \eta)$ is finite (otherwise we cannot define their difference), we have
    \begin{align*}
        \tilde{D}_{\alpha}(\rho || Q) \geq  \tilde{D}_{\alpha}(\eta || Q) - \frac{\alpha}{1- \alpha} D_{\max}(\rho || \eta)- \frac{1}{1-\alpha} \log \frac{\tr(\eta)}{\tr(\rho)}.
    \end{align*}
    \label{lemm:Dalpha_triangle_ineq}
\end{lemma}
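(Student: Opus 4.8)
The plan is to reduce both inequalities to a single statement about the unnormalised trace quantity
\[
\tilde{Q}_{\alpha}(\rho \| Q) := \tr\!\left[ \left( Q^{-\frac{\alpha'}{2}}\, \rho\, Q^{-\frac{\alpha'}{2}} \right)^{\alpha} \right],
\]
in terms of which $\tilde{D}_{\alpha}(\rho\|Q) = \frac{1}{\alpha-1}\big(\log \tilde{Q}_{\alpha}(\rho\|Q) - \log \tr \rho\big)$, and likewise for $\eta$. The core claim I would establish is that whenever $\rho \leq 2^{\lambda}\eta$, i.e.\ $\lambda \geq D_{\max}(\rho\|\eta)$, one has
\[
\tilde{Q}_{\alpha}(\rho\|Q) \leq 2^{\alpha \lambda}\, \tilde{Q}_{\alpha}(\eta\|Q).
\]
Both stated inequalities then follow by taking logarithms and dividing by $\alpha - 1$: for $\alpha > 1$ the division preserves the direction and produces the claimed upper bound, while for $\alpha < 1$ it reverses the direction and produces the lower bound. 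In both cases the $\log\tr\rho$ and $\log\tr\eta$ contributions recombine into exactly the $\frac{1}{\alpha-1}\log\frac{\tr\eta}{\tr\rho}$ correction, and one finishes by choosing $\lambda = D_{\max}(\rho\|\eta)$.

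To prove the core claim, I would first use that conjugation by a fixed operator preserves the positive semidefinite order: applying the map $X \mapsto Q^{-\frac{\alpha'}{2}} X Q^{-\frac{\alpha'}{2}}$ to $\rho \leq 2^{\lambda}\eta$ gives
\[
Q^{-\frac{\alpha'}{2}}\, \rho\, Q^{-\frac{\alpha'}{2}} \;\leq\; 2^{\lambda}\, Q^{-\frac{\alpha'}{2}}\, \eta\, Q^{-\frac{\alpha'}{2}}.
\]
The remaining ingredient is the monotonicity of the functional $X \mapsto \tr(X^{p})$ on positive operators for every $p > 0$: if $0 \leq X \leq Y$, then by Weyl's monotonicity theorem the ordered eigenvalues satisfy $\lambda_k^{\downarrow}(X) \leq \lambda_k^{\downarrow}(Y)$, and since $t \mapsto t^{p}$ is increasing on $[0,\infty)$ we obtain $\tr(X^{p}) = \sum_k \lambda_k^{\downarrow}(X)^{p} \leq \sum_k \lambda_k^{\downarrow}(Y)^{p} = \tr(Y^{p})$. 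Applying this with $p = \alpha$ to the displayed operator inequality, and pulling the scalar $2^{\lambda}$ out through the $\alpha$-th power, yields the core claim. I want to stress that this step works uniformly for $\alpha > 1$ and $\alpha < 1$ — the sign of $\alpha - 1$ enters only afterwards — which is precisely why one computation delivers both directions. Note also that $X \mapsto X^{\alpha}$ is \emph{not} operator monotone for $\alpha > 1$, so it is essential that the monotonicity is used only under the trace.

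The main subtlety is not any deep inequality but the bookkeeping of support and finiteness conditions so that the manipulations are legitimate. For $\alpha > 1$, the power $Q^{-\frac{\alpha'}{2}}$ is understood as the pseudo-inverse power on $\text{supp}(Q)$; if $\tilde{D}_{\alpha}(\eta\|Q) = \infty$ the bound is vacuous, and otherwise $\eta \ll Q$ together with $\rho \leq 2^{\lambda}\eta$ forces $\rho \ll Q$, so the left-hand side is finite and the operator order relation is unaffected by restricting to $\text{supp}(Q)$. For $\alpha < 1$, the hypothesis that one of $\tilde{D}_{\alpha}(\eta\|Q)$ and $D_{\max}(\rho\|\eta)$ be finite is exactly what excludes an ill-defined $\infty - \infty$: if $D_{\max}(\rho\|\eta) = \infty$ the right-hand side is $-\infty$ and the claim is trivial, whereas if $D_{\max}(\rho\|\eta)$ is finite the difference on the right is well defined and the core claim applies as stated. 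I expect essentially no analytic difficulty; the care required lies entirely in phrasing the eigenvalue-monotonicity fact correctly and in discharging these degenerate cases.
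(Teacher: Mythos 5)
Your proposal is correct and is essentially the paper's own proof: the core estimate $\tr\big(Q^{-\frac{\alpha'}{2}}\rho\,Q^{-\frac{\alpha'}{2}}\big)^{\alpha} \leq 2^{\alpha\lambda}\tr\big(Q^{-\frac{\alpha'}{2}}\eta\,Q^{-\frac{\alpha'}{2}}\big)^{\alpha}$, obtained by conjugating $\rho \leq 2^{\lambda}\eta$ and using monotonicity of $\tr(X^{p})$ under the operator order, is exactly the paper's single computation, with the sign of $\alpha-1$ entering only at the final division and the same degenerate-case bookkeeping. No gaps to report.
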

\begin{proof}
    If $D_{\max}(\rho || \eta)= \infty$, then both statements are true trivially. Otherwise, we have that $\rho \leq 2^{D_{\max}(\rho || \eta)} \eta$ and also $\rho \ll \eta$. Now, if $\rho \not\ll Q$ then $\eta \not\ll Q$. Hence, for $\alpha>1$ if $\tilde{D}_{\alpha}(\rho || Q) = \infty$, then $\tilde{D}_{\alpha}(\eta || Q) = \infty$, which means the Lemma is also satisfied in this condition. For $\alpha<1$, if $\tilde{D}_{\alpha}(\rho || Q) = \infty$, then the Lemma is also trivially satisfied. For the remaining cases we have,  
    \begin{align*}
        2^{(\alpha-1) \tilde{D}_{\alpha}(\rho || Q)} &= \frac{\tr\rndBrk{Q^{-\frac{\alpha-1}{2\alpha}} \rho Q^{-\frac{\alpha-1}{2\alpha}}}^{\alpha}}{\tr(\rho)} \\
        &\leq \frac{\tr\rndBrk{Q^{-\frac{\alpha-1}{2\alpha}} 2^{D_{\max}(\rho || \eta)} \eta Q^{-\frac{\alpha-1}{2\alpha}}}^{\alpha}}{\tr(\rho)}\\
        &= \frac{\tr(\eta)}{\tr(\rho)} 2^{\alpha D_{\max}(\rho || \eta)} 2^{(\alpha-1) \tilde{D}_{\alpha}(\eta || Q)}
    \end{align*}
    where we used the fact that $\tr(f(X))$ is monotone increasing if the function $f$ is monotone increasing. Dividing by $(\alpha-1)$ now gives the result.
\end{proof}

We define smooth $\alpha$-R\'enyi conditional entropy as follows to help us amplify the above inequality.

\begin{definition}[$\epsilon$-smooth $\alpha$-R\'enyi conditional entropy]
    \label{defn:smooth_renyi_cond_ent}
    For $\alpha \in (1, \infty]$ and $\epsilon \in [0,1]$, we define the $\epsilon$-smooth $\alpha$-R\'enyi conditional entropy as 
    \begin{align}
        \tilde{H}^{\uparrow}_{\alpha, \epsilon}(A|B)_{\rho}:= \max_{\tilde{\rho}_{AB} \in B_{\epsilon}(\rho_{AB})} \tilde{H}^{\uparrow}_{\alpha}(A|B)_{\tilde{\rho}}.
    \end{align}
\end{definition}

\begin{lemma}
    \label{lemm:using_smooth_Dmax_to_lower_bd_renyi_cond_ent}
    For $\alpha \in (1, \infty]$ and $\epsilon \in [0,1)$, and states $\rho_{AB}$ and $\eta_{AB}$ we have
    \begin{align*}
        \tilde{H}^{\uparrow}_{\alpha, \epsilon}(A|B)_{\rho} \geq \tilde{H}^{\uparrow}_{\alpha}(A|B)_{\eta} - \frac{\alpha}{\alpha-1} D^\epsilon_{\max} (\rho_{AB}|| \eta_{AB}) - \frac{1}{\alpha-1} \log \frac{1}{1-\epsilon^2}.
    \end{align*}
\end{lemma}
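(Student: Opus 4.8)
The plan is to combine the single-shot triangle inequality (Lemma~\ref{lemm:Dalpha_triangle_ineq}) with the smoothing definitions. The quantity $\tilde{H}^{\uparrow}_{\alpha,\epsilon}(A|B)_\rho$ is a maximization over $\tilde\rho \in B_\epsilon(\rho)$, so to produce a lower bound it suffices to exhibit a \emph{single} good candidate $\tilde\rho_{AB}$ inside the ball and bound $\tilde{H}^{\uparrow}_\alpha(A|B)_{\tilde\rho}$ from below. The natural candidate is the optimizer for the smooth max-relative entropy: let $\tilde\rho_{AB} \in B_\epsilon(\rho_{AB})$ be the state achieving $D_{\max}(\tilde\rho_{AB} \| \eta_{AB}) = D^\epsilon_{\max}(\rho_{AB}\|\eta_{AB}) =: d$.

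**Next I would** unfold the definition $\tilde{H}^{\uparrow}_\alpha(A|B)_{\tilde\rho} = \sup_{\sigma_B} -\tilde{D}_\alpha(\tilde\rho_{AB}\|\,\Id_A\otimes\sigma_B)$ and apply Lemma~\ref{lemm:Dalpha_triangle_ineq} with $Q = \Id_A\otimes\sigma_B$, the roles being $\rho \to \tilde\rho$ and $\eta \to \eta_{AB}$. Since $\alpha>1$, that lemma gives
\begin{align*}
    \tilde{D}_\alpha(\tilde\rho_{AB}\|\,\Id_A\otimes\sigma_B) \leq \tilde{D}_\alpha(\eta_{AB}\|\,\Id_A\otimes\sigma_B) + \frac{\alpha}{\alpha-1}\,d + \frac{1}{\alpha-1}\log\frac{\tr(\eta_{AB})}{\tr(\tilde\rho_{AB})}.
\end{align*}
Negating and taking the supremum over $\sigma_B$ on both sides turns the two divergence terms into conditional entropies, yielding
\begin{align*}
    \tilde{H}^{\uparrow}_\alpha(A|B)_{\tilde\rho} \geq \tilde{H}^{\uparrow}_\alpha(A|B)_\eta - \frac{\alpha}{\alpha-1}\,d - \frac{1}{\alpha-1}\log\frac{\tr(\eta_{AB})}{\tr(\tilde\rho_{AB})}.
\end{align*}
Since $\eta_{AB}$ is a state, $\tr(\eta_{AB}) \leq 1$, so the last term is bounded by $\frac{1}{\alpha-1}\log\frac{1}{\tr(\tilde\rho_{AB})}$.

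**The one quantitative point to nail down** is the trace of the smoothed state: I must show $\tr(\tilde\rho_{AB}) \geq 1-\epsilon^2$, so that $\log\frac{1}{\tr(\tilde\rho_{AB})} \leq \log\frac{1}{1-\epsilon^2}$ matches the stated form. This follows from the purified-distance constraint $P(\rho_{AB},\tilde\rho_{AB})\leq\epsilon$: by definition $P^2 = 1 - F_\ast(\rho,\tilde\rho)$, and the generalized fidelity of a normalized state $\rho$ with a subnormalized $\tilde\rho$ satisfies $F_\ast(\rho,\tilde\rho) \leq \tr(\tilde\rho)$ (the cross term $\sqrt{(1-\tr\rho)(1-\tr\tilde\rho)}$ vanishes since $\tr\rho=1$, and $\|\sqrt\rho\sqrt{\tilde\rho}\|_1^2 \leq \tr(\tilde\rho)$). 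Hence $\tr(\tilde\rho_{AB}) \geq F_\ast \geq 1-\epsilon^2$, giving the bound.

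**The main obstacle** is really just the bookkeeping around edge cases and the $\alpha=\infty$ endpoint: one should check that the argument degenerates gracefully (the coefficient $\frac{\alpha}{\alpha-1}\to 1$ and $\frac{1}{\alpha-1}\to 0$), and that when $D^\epsilon_{\max}$ or the entropy $\tilde{H}^{\uparrow}_\alpha(A|B)_\eta$ is infinite the inequality holds trivially — both of which are inherited from the corresponding trivialities already handled in Lemma~\ref{lemm:Dalpha_triangle_ineq}. Since the heavy lifting is done by that lemma, the proof is short: pick the $D^\epsilon_{\max}$-optimizer, apply the triangle inequality pointwise in $\sigma_B$, take suprema, and control the trace via purified distance.
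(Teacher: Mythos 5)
Your proposal is correct and follows essentially the same route as the paper's proof: take the $D^{\epsilon}_{\max}$-optimizer $\tilde{\rho}_{AB} \in B_{\epsilon}(\rho_{AB})$, apply Lemma~\ref{lemm:Dalpha_triangle_ineq} with $Q = \Id_A \otimes \sigma_B$, negate and take the supremum over $\sigma_B$, and finish with $\tilde{H}^{\uparrow}_{\alpha,\epsilon}(A|B)_{\rho} \geq \tilde{H}^{\uparrow}_{\alpha}(A|B)_{\tilde{\rho}}$. Your only addition is spelling out why $\tr(\tilde{\rho}_{AB}) \geq 1-\epsilon^2$ via the generalized fidelity, a step the paper asserts without detail, and your Cauchy--Schwarz justification of it is correct.
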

\begin{proof}
    Let $\tilde{\rho}_{AB} \in B_{\epsilon}(\rho_{AB})$ be a subnormalised state such that $D_{\max}(\tilde{\rho}_{AB}|| \eta_{AB})= D^\epsilon_{\max} (\rho_{AB}|| \eta_{AB})$. Using Lemma \ref{lemm:Dalpha_triangle_ineq} for $\alpha>1$, we have that for every state $\sigma_B$, we have 
    \begin{align}
        \tilde{D}_{\alpha} (\tilde{\rho}_{AB} || \Id_A \otimes \sigma_B) &\leq \tilde{D}_{\alpha} (\eta_{AB} || \Id_A \otimes \sigma_B) + \frac{\alpha}{\alpha-1} D^{\epsilon}_{\max}({\rho}_{AB} || \eta_{AB}) + \frac{1}{\alpha-1}\log\frac{1}{1- \epsilon^2}
    \end{align}
    where we used the fact that $\tilde{\rho}_{AB} \in B_{\epsilon}(\rho_{AB})$ which implies that $\tr(\tilde{\rho}_{AB}) \geq 1- \epsilon^2$. Since, the above bound is true for arbitrary states $\sigma_B$, we can multiply it by $-1$ and take the supremum to derive
    \begin{align*}
        \tilde{H}^{\uparrow}_{\alpha}(A|B)_{\tilde{\rho}} \geq \tilde{H}^{\uparrow}_{\alpha}(A|B)_{\eta} - \frac{\alpha}{\alpha-1} D^\epsilon_{\max}(\rho_{AB}|| \eta_{AB}) - \frac{1}{\alpha-1} \log \frac{1}{1-\epsilon^2}.
    \end{align*}
    The desired bound follows by using the fact that $\tilde{H}^{\uparrow}_{\alpha, \epsilon}(A|B)_{{\rho}} \geq \tilde{H}^{\uparrow}_{\alpha}(A|B)_{\tilde{\rho}}$. 
\end{proof}
\begin{lemma}
    \label{lemm:lower_bd_smooth_min_ent_with_smooth_cond_ent}
    For a state $\rho_{AB}$, $\epsilon \in [0,1)$, and $\delta \in (0,1)$ such that $\epsilon+ \delta < 1$ and $\alpha \in (1,2]$, we have 
    \begin{align*}
        H_{\min}^{\epsilon+\delta}(A|B)_{\rho} \geq \tilde{H}^{\uparrow}_{\alpha, \epsilon}(A|B)_{\rho} - \frac{g_0(\delta)}{\alpha-1}
    \end{align*}
    where $g_0(x):= - \log(1- \sqrt{1-x^2})$. 
\end{lemma}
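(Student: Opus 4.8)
The plan is to reduce the statement to a \emph{single-state} smoothing relation and then glue it together with the triangle inequality for the purified distance. First I would unpack Definition~\ref{defn:smooth_renyi_cond_ent}: let $\tilde{\rho}_{AB}\in B_{\epsilon}(\rho_{AB})$ be the subnormalised state attaining $\tilde{H}^{\uparrow}_{\alpha,\epsilon}(A|B)_{\rho}=\tilde{H}^{\uparrow}_{\alpha}(A|B)_{\tilde{\rho}}$. It then suffices to produce a subnormalised $\hat{\rho}_{AB}\in B_{\delta}(\tilde{\rho}_{AB})$ with $H_{\min}(A|B)_{\hat{\rho}}\geq \tilde{H}^{\uparrow}_{\alpha}(A|B)_{\tilde{\rho}}-\frac{g_0(\delta)}{\alpha-1}$, since the triangle inequality $P(\rho,\hat{\rho})\leq P(\rho,\tilde{\rho})+P(\tilde{\rho},\hat{\rho})\leq\epsilon+\delta$ places $\hat{\rho}\in B_{\epsilon+\delta}(\rho)$, whence $H_{\min}^{\epsilon+\delta}(A|B)_{\rho}\geq H_{\min}(A|B)_{\hat{\rho}}$ and the claim follows.

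For the single-state core I would fix a (near-)optimal $\sigma_B$ with $\tilde{H}^{\uparrow}_{\alpha}(A|B)_{\tilde{\rho}}=-\tilde{D}_{\alpha}(\tilde{\rho}_{AB}\|\Id_A\otimes\sigma_B)$, write $Q:=\Id_A\otimes\sigma_B$ and $d:=\tilde{D}_{\alpha}(\tilde{\rho}_{AB}\|Q)$, and note that since $H_{\min}(A|B)_{\hat{\rho}}\geq -D_{\max}(\hat{\rho}_{AB}\|Q)$, the core amounts exactly to the smoothing estimate $D_{\max}^{\delta}(\tilde{\rho}_{AB}\|Q)\leq d+\frac{1}{\alpha-1}\log\frac{1}{1-\sqrt{1-\delta^2}}$. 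The natural route is spectral truncation: threshold the relevant relative operator at a level $\lambda$, define $\hat{\rho}$ by compressing $\tilde{\rho}$ onto the eigenspaces below $\lambda$ (so that $\hat{\rho}\leq\lambda Q$ and $\hat{\rho}\leq\tilde{\rho}$ hold automatically), bound the discarded weight $\tr(\tilde{\rho}-\hat{\rho})$ by a Markov-type inequality in terms of an $(\alpha-1)$-th moment, and calibrate $\lambda$ so that this weight is at most $1-\sqrt{1-\delta^2}$. Because $\hat{\rho}\leq\tilde{\rho}$, a generalised-fidelity estimate bounds $P(\tilde{\rho},\hat{\rho})$ in terms of the discarded weight, and the calibration $\tr(\tilde{\rho}-\hat{\rho})\leq 1-\sqrt{1-\delta^2}$ produces exactly the constant $g_0(\delta)=-\log(1-\sqrt{1-\delta^2})$.

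The main obstacle is forcing the \emph{sandwiched} divergence $\tilde{D}_{\alpha}$ — rather than the larger geometric Rényi divergence — to appear in the Markov step. Truncating the likelihood-ratio operator $Q^{-1/2}\tilde{\rho}Q^{-1/2}$ and taking its $(\alpha-1)$-th moment produces $\tr\!\big[Q(Q^{-1/2}\tilde{\rho}Q^{-1/2})^{\alpha}\big]$, which by Araki–Lieb–Thirring is $\geq 2^{(\alpha-1)\tilde{D}_{\alpha}(\tilde{\rho}\|Q)}$; this yields the estimate only with the geometric divergence, hence a strictly weaker bound than the one claimed for the sandwiched $\tilde{H}^{\uparrow}_{\alpha}$. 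To recover the sandwiched quantity I would pass to the dual picture: purify $\tilde{\rho}_{AB}$ to $\tilde{\rho}_{ABC}$, use the sandwiched-Rényi duality $\tilde{H}^{\uparrow}_{\alpha}(A|B)=-\tilde{H}^{\uparrow}_{\beta}(A|C)$ with $\tfrac1\alpha+\tfrac1\beta=2$ (so $\beta\in[\tfrac23,1)$ for $\alpha\in(1,2]$) together with the smooth-entropy duality $H_{\min}^{\delta}(A|B)=-H_{\max}^{\delta}(A|C)$, reducing the core to the fidelity-based smoothing relation $H_{\max}^{\delta}(A|C)\leq \tilde{H}^{\uparrow}_{\beta}(A|C)+\frac{g_0(\delta)}{\alpha-1}$ for $\beta<1$, where the truncation is compatible with the sandwiched divergence. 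Most economically, one may instead invoke the known finite smoothing bound relating $D_{\max}^{\delta}$ and $\tilde{D}_{\alpha}$ from the literature (e.g.~\cite{TomamichelBook16}) and apply it with $Q=\Id_A\otimes\sigma_B$, after which the reduction above closes the argument.
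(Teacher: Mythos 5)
Your proposal is correct and follows essentially the same route as the paper: the paper likewise first establishes $H_{\min}^{\epsilon+\delta}(A|B)_{\rho} \geq \sup_{\tilde{\rho}\in B_{\epsilon}(\rho_{AB})} H_{\min}^{\delta}(A|B)_{\tilde{\rho}}$ via the purified-distance triangle inequality and then closes with the known single-state smoothing bound $H_{\min}^{\delta}(A|B)_{\tilde{\rho}} \geq \tilde{H}^{\uparrow}_{\alpha}(A|B)_{\tilde{\rho}} - \frac{g_0(\delta)}{\alpha-1}$ (\cite[Lemma B.10]{Dupuis20}, originally from \cite{Tomamichel09}), which is exactly your ``most economical'' option, so your from-scratch duality sketch is unnecessary. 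The only detail you elide is that $\tilde{\rho}$ may be subnormalised, for which the cited lemma still applies provided $\delta < \sqrt{2\tr(\tilde{\rho}) - \tr(\tilde{\rho})^2}$ --- guaranteed here by the hypothesis $\epsilon+\delta<1$, as the paper notes in a footnote.
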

\begin{proof}
    First, note that 
    \begin{align}
        H_{\min}^{\epsilon+\delta}(A|B)_{\rho} &\geq \sup_{\tilde{\rho} \in B_{\epsilon}(\rho_{AB})} H_{\min}^{\delta}(A|B)_{\tilde{\rho}}.
        \label{eq:Hmin_break_down_eps_plus_delta}
    \end{align}
    To prove this, consider a $\tilde{\rho}_{AB} \in B_{\epsilon}(\rho_{AB})$ and $\rho'_{AB} \in B_{\delta}(\tilde{\rho}_{AB})$ such that $H_{\min}(A|B)_{{\rho'}} = H^{\delta}_{\min}(A|B)_{\tilde{\rho}}$. Then, using the triangle inequality for the purified distance, we have 
    \begin{align*}
        P(\rho_{AB}, \rho'_{AB}) &\leq P(\rho_{AB}, \tilde{\rho}_{AB})  + P(\tilde{\rho}_{AB}, \rho'_{AB}) \\
        &\leq \epsilon + \delta
    \end{align*}
    which implies that $H_{\min}^{\epsilon+\delta}(A|B)_{\rho} \geq H_{\min}(A|B)_{{\rho'}} = H^{\delta}_{\min}(A|B)_{\tilde{\rho}}$. Since, this is true for all $\tilde{\rho} \in B_{\epsilon}(\rho_{AB})$ the bound in Eq. \ref{eq:Hmin_break_down_eps_plus_delta} is true.\\

    Using this, we have
    \begin{align*}
        H_{\min}^{\epsilon+\delta}(A|B)_{\rho} &\geq \sup_{\tilde{\rho} \in B_{\epsilon}(\rho_{AB})} H_{\min}^{\delta}(A|B)_{\tilde{\rho}}\\
        &\geq \sup_{\tilde{\rho} \in B_{\epsilon}(\rho_{AB})} \curlyBrk{\tilde{H}^{\uparrow}_{\alpha}(A|B)_{\tilde{\rho}} - \frac{g_0(\delta)}{\alpha-1}}\\
        &= \tilde{H}^{\uparrow}_{\alpha, \epsilon}(A|B)_{\rho} - \frac{g_0(\delta)}{\alpha-1}
    \end{align*}
    where we have used \cite[Lemma B.10]{Dupuis20}\footnote{This Lemma is also valid for subnormalised states as long as $\delta \in (0, \sqrt{2 \tr(\tilde{\rho}) - \tr(\tilde{\rho})^2})$ according to \cite[Lemma B.4]{Dupuis20}.} (originally proven in \cite{Tomamichel09}) in the second step.
\end{proof}
\noindent We can combine these two lemmas to derive the following result. 
\begin{lemma}
    For $\alpha \in (1,2]$, $\epsilon \in [0,1)$, and $\delta \in (0,1)$ such that $\epsilon + \delta < 1$ and two states $\rho$ and $\eta$, we have 
    \begin{align}
        H_{\min}^{\epsilon+\delta}(A|B)_{\rho} &\geq \tilde{H}^{\uparrow}_{\alpha}(A|B)_{\eta} - \frac{\alpha}{\alpha-1} D^\epsilon_{\max} (\rho_{AB}|| \eta_{AB}) - \frac{g_1(\delta, \epsilon)}{\alpha-1}
        \label{eq:Hmin_rho_to_Halpha_sigma_using_Dmax}
    \end{align}
    where $g_1(x, y):= - \log(1- \sqrt{1-x^2}) - \log (1-y^2)$. 
    \label{lemm:Hmin_rho_to_Halpha_sigma_using_Dmax}
\end{lemma}
\begin{proof}
    We can combine Lemmas \ref{lemm:using_smooth_Dmax_to_lower_bd_renyi_cond_ent} and \ref{lemm:lower_bd_smooth_min_ent_with_smooth_cond_ent} as follows to derive the bound in the Lemma:
    \begin{align*}
        H_{\min}^{\epsilon+\delta}(A|B)_{\rho} &\geq \tilde{H}^{\uparrow}_{\alpha, \epsilon}(A|B)_{\rho} - \frac{g_0(\delta)}{\alpha-1}\\
        &\geq \tilde{H}^{\uparrow}_{\alpha}(A|B)_{\eta} - \frac{\alpha}{\alpha-1} D^\epsilon_{\max} (\rho_{AB}|| \eta_{AB}) - \frac{1}{\alpha-1}\rndBrk{g_0(\delta) + \log \frac{1}{1-\epsilon^2}}. 
    \end{align*}
\end{proof}
We can use the asymptotic equipartition theorem for smooth min-entropy and max-relative entropy \cite{Tomamichel09,TomamichelThesis12,Tomamichel13} to derive the following novel triangle inequality for the von Neumann conditional entropy. Although we do not use this inequality in this paper, we believe it is interesting and may prove useful in the future.
\begin{corollary}
    For $\alpha \in (1,2]$ and states $\rho_{AB}$ and $\eta_{AB}$, we have that
    \begin{align}
        H(A|B)_{\rho} &\geq \tilde{H}^{\uparrow}_{\alpha}(A|B)_{\eta} - \frac{\alpha}{\alpha-1} D(\rho_{AB}|| \eta_{AB}).
        \label{eq:H_rho_to_Halpha_sigma_using_D}
    \end{align}
    \label{cor:H_rho_to_Halpha_sigma_using_D}
\end{corollary}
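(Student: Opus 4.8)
The plan is to obtain Corollary~\ref{cor:H_rho_to_Halpha_sigma_using_D} as an asymptotic (i.e. $n \to \infty$, i.i.d.) consequence of the one-shot inequality in Lemma~\ref{lemm:Hmin_rho_to_Halpha_sigma_using_Dmax}. The strategy is to apply that lemma to the tensor-power states $\rho^{\otimes n}$ and $\eta^{\otimes n}$, divide by $n$, and take the limit $n \to \infty$, invoking the relevant asymptotic equipartition properties to identify the limits of each term. Concretely, first I would fix $\alpha \in (1,2]$ and small smoothing parameters, and apply Lemma~\ref{lemm:Hmin_rho_to_Halpha_sigma_using_Dmax} with $\rho \mapsto \rho_{AB}^{\otimes n}$ and $\eta \mapsto \eta_{AB}^{\otimes n}$ on the $n$-fold registers $A^n B^n$, yielding
\begin{align*}
    H_{\min}^{\epsilon+\delta}(A^n|B^n)_{\rho^{\otimes n}} &\geq \tilde{H}^{\uparrow}_{\alpha}(A^n|B^n)_{\eta^{\otimes n}} - \frac{\alpha}{\alpha-1} D^{\epsilon}_{\max}(\rho_{AB}^{\otimes n} || \eta_{AB}^{\otimes n}) - \frac{g_1(\delta, \epsilon)}{\alpha-1}.
\end{align*}

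Next I would divide through by $n$ and let $n \to \infty$, identifying each of the three asymptotic rates. For the left-hand side, the asymptotic equipartition property for the smooth min-entropy \cite{Tomamichel09} gives $\lim_{n \to \infty} \tfrac{1}{n} H_{\min}^{\epsilon+\delta}(A^n|B^n)_{\rho^{\otimes n}} = H(A|B)_{\rho}$ for any fixed smoothing $\epsilon + \delta \in (0,1)$. For the first term on the right, additivity of the sandwiched R\'enyi conditional entropy under tensor products gives $\tfrac{1}{n}\tilde{H}^{\uparrow}_{\alpha}(A^n|B^n)_{\eta^{\otimes n}} = \tilde{H}^{\uparrow}_{\alpha}(A|B)_{\eta}$ (here one should check whether additivity holds exactly for the ``up'' variant with the optimised $\sigma_{B^n}$, or whether a product ansatz for the optimiser suffices in the limit). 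For the second term, the AEP for the smooth max-relative entropy \cite{Tomamichel13,TomamichelThesis12} gives $\lim_{n\to\infty} \tfrac{1}{n} D^{\epsilon}_{\max}(\rho_{AB}^{\otimes n} || \eta_{AB}^{\otimes n}) = D(\rho_{AB} || \eta_{AB})$ for fixed $\epsilon$. The final additive term $\tfrac{g_1(\delta,\epsilon)}{\alpha-1}$ is a constant independent of $n$, so it vanishes after division by $n$.

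Assembling these three limits gives precisely $H(A|B)_{\rho} \geq \tilde{H}^{\uparrow}_{\alpha}(A|B)_{\eta} - \frac{\alpha}{\alpha-1} D(\rho_{AB} || \eta_{AB})$, which is the claimed inequality; since $\alpha \in (1,2]$ is arbitrary, the bound holds for the whole range. I expect the main technical obstacle to be the additivity/asymptotics of the $\tilde{H}^{\uparrow}_{\alpha}$ term: because this entropy is defined through a supremum over $\sigma_B$, one must justify that the optimal $\sigma_{B^n}$ can be taken to be (or is asymptotically matched by) a product $\sigma_B^{\otimes n}$, so that the rate is exactly $\tilde{H}^{\uparrow}_{\alpha}(A|B)_{\eta}$; fortunately the additivity of $\tilde{D}_{\alpha}$ under tensor products together with the standard fact that the conditional R\'enyi entropy is additive makes this clean. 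A secondary point to handle carefully is that the AEP statements require the smoothing parameters to be held fixed while $n \to \infty$, which is exactly the regime above, so no conflict arises; one only needs $\epsilon + \delta < 1$, consistent with the hypotheses of Lemma~\ref{lemm:Hmin_rho_to_Halpha_sigma_using_Dmax}.
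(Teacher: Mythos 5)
Your proposal is correct and follows essentially the same route as the paper's own proof: apply Lemma~\ref{lemm:Hmin_rho_to_Halpha_sigma_using_Dmax} to $\rho_{AB}^{\otimes n}$ and $\eta_{AB}^{\otimes n}$, divide by $n$, and take $n \to \infty$ using the AEPs for $H_{\min}^{\epsilon}$ and $D_{\max}^{\epsilon}$ together with additivity of $\tilde{H}^{\uparrow}_{\alpha}$ (which does hold exactly for the optimised variant, so the point you flag is harmless). No gaps.
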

\begin{proof}
    Using Lemma \ref{lemm:Hmin_rho_to_Halpha_sigma_using_Dmax} with $\alpha \in (1,2]$, the states $\rho^{\otimes n}_{AB}$, and $\eta^{\otimes n}_{AB}$ and any $\epsilon > 0$ and $\delta > 0$ satisfying the conditions for the Lemma, we get
    \begin{align*}
        & H_{\min}^{\epsilon+\delta}(A_1^n|B_1^n)_{\rho^{\otimes n}} \geq \tilde{H}^{\uparrow}_{\alpha}(A_1^n|B_1^n)_{\eta^{\otimes n}} - \frac{\alpha}{\alpha-1} D^\epsilon_{\max} (\rho_{AB}^{\otimes n}|| \eta^{\otimes n}_{AB}) - \frac{g_1(\delta, \epsilon)}{\alpha-1} \\
        \Rightarrow & \frac{1}{n} H_{\min}^{\epsilon+\delta}(A_1^n|B_1^n)_{\rho^{\otimes n}} \geq \tilde{H}^{\uparrow}_{\alpha}(A|B)_{\eta} - \frac{\alpha}{\alpha-1} \frac{1}{n} D^\epsilon_{\max} (\rho_{AB}^{\otimes n}|| \eta^{\otimes n}_{AB}) - \frac{1}{n} \frac{g_1(\delta, \epsilon)}{\alpha-1}.
    \end{align*}
    Taking the limit of the above for $n \rightarrow \infty$, we get
    \begin{align*}
        & \lim_{n \rightarrow \infty} \frac{1}{n} H_{\min}^{\epsilon+\delta}(A_1^n|B_1^n)_{\rho^{\otimes n}} \geq \tilde{H}^{\uparrow}_{\alpha}(A|B)_{\eta} - \lim_{n \rightarrow \infty}\frac{\alpha}{\alpha-1} \frac{1}{n} D^\epsilon_{\max} (\rho_{AB}^{\otimes n}|| \eta^{\otimes n}_{AB}) - \lim_{n \rightarrow \infty} \frac{1}{n} \frac{g_1(\delta, \epsilon)}{\alpha-1} \\
        \Rightarrow & H(A|B)_{\rho} \geq \tilde{H}^{\uparrow}_{\alpha}(A|B)_{\eta} - \frac{\alpha}{\alpha-1} D(\rho_{AB}|| \eta_{AB})
    \end{align*}
    which proves the claim.
\end{proof}

\section{Approximately independent registers}
\label{sec:approx_ind_reg}

In this section, we introduce our technique for using the smooth min-entropy triangle inequality for considering approximations by studying a state $\rho_{A_1^n B}$ such that for every $k \in [n]$
\begin{align}
    \norm{\rho_{A_1^{k}B} - \rho_{A_k} \otimes \rho_{A_1^{k-1}B}}_1 \leq \epsilon.
    \label{eq:approx_ind_cond}
\end{align}
We assume that the registers $A_k$ all have the same dimension equal to $|A|$. One should think of the registers $A_k$ as the secret information produced during some protocol, which also provides the register $B$ to an adversary. We would like to prove that $H^{f(\epsilon)}_{\min}(A_1^n | B)$ is large (lower bounded by $\Omega(n)$) under the above \emph{approximate independence conditions} for some reasonably small function $f$ of $\epsilon$ and close to $nH(A_1)$, if we assume the states $\rho_{A_k}$ are identical. Let us first examine the case when the states above are completely classical. To show that in this case the smooth min-entropy is high, we will show that the set where the conditional probability $\rho(a_1^n |b) := \frac{\rho(a_1^n b)}{\rho(b)}$ can be large, has a small probability using the Markov inequality. We will use the following lemma for this purpose. 

\begin{lemma}
    Suppose $p, q$ are probability distributions on $\mathcal{X}$ such that $\frac{1}{2}\norm{p-q}_1 \leq \epsilon$, then $S \subseteq \mathcal{X}$ defined as $S:= \{x \in \mathcal{X}: p(x) \leq (1+\epsilon^{1/2}) q(x) \}$ is such that $q(S) \geq 1- \epsilon^{1/2}$ and $p(S)\geq 1- \epsilon^{1/2}- \epsilon$.
    \label{lemm:small_tr_norm_implies_small_Dmax}
\end{lemma}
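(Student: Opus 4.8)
The plan is to work with the complementary \emph{bad} set $S^{c} := \mathcal{X} \setminus S = \{x \in \mathcal{X} : p(x) > (1+\epsilon^{1/2}) q(x)\}$ and show that it carries small mass under both $p$ and $q$; the two claimed bounds then follow by complementation. The essential tool is the positive-part characterisation of the total variation distance, namely
\begin{align*}
    \tfrac{1}{2}\norm{p-q}_1 = \sum_{x:\, p(x) > q(x)} \rndBrk{p(x) - q(x)} \leq \epsilon,
\end{align*}
which I would invoke throughout. This is essentially a Markov-inequality argument applied to the excess $p(x)-q(x)$ relative to the weight $q(x)$.

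For the bound on $q(S)$, I would first note that every $x \in S^{c}$ satisfies $p(x) - q(x) > \epsilon^{1/2} q(x) \geq 0$, so in particular $S^{c} \subseteq \{x : p(x) > q(x)\}$ and hence summing over $S^{c}$ only underestimates the full positive part. Combining the strict inequality with the displayed identity gives
\begin{align*}
    \epsilon^{1/2}\, q(S^{c}) < \sum_{x \in S^{c}} \rndBrk{p(x) - q(x)} \leq \sum_{x:\, p(x)>q(x)} \rndBrk{p(x)-q(x)} \leq \epsilon,
\end{align*}
so that $q(S^{c}) \leq \epsilon^{1/2}$, i.e.\ $q(S) \geq 1 - \epsilon^{1/2}$.

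For the bound on $p(S)$, I would write $p(S^{c}) = \sum_{x \in S^{c}}\rndBrk{p(x) - q(x)} + q(S^{c})$. The first term is at most $\epsilon$ by the total variation identity (again using $S^{c} \subseteq \{p > q\}$), and the second is at most $\epsilon^{1/2}$ by the previous step, giving $p(S^{c}) \leq \epsilon + \epsilon^{1/2}$ and hence $p(S) \geq 1 - \epsilon^{1/2} - \epsilon$. I do not expect a genuine obstacle here; the only point requiring care is the correct use of the positive-part formula for the total variation distance and the observation that $S^{c}$ lies entirely within the region where $p$ exceeds $q$, which is what licenses bounding $\sum_{x \in S^{c}}(p(x)-q(x))$ by $\tfrac{1}{2}\norm{p-q}_1$.
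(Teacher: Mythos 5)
Your proof is correct, and it is essentially the argument the paper gives: both proofs lower-bound the trace distance by the excess mass $\sum_{x \in S^c}\rndBrk{p(x)-q(x)} \geq \epsilon^{1/2} q(S^c)$ to conclude $q(S^c) \leq \epsilon^{1/2}$, and then obtain $p(S^c) \leq q(S^c) + \epsilon$ from the same trace-distance bound (you via the positive-part identity, the paper via the max-over-events characterisation, which are interchangeable here). No gaps; your handling of the edge case through $S^c \subseteq \{x : p(x) > q(x)\}$ is sound.
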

\begin{proof}
    For $S^c := \mathcal{X} \setminus S$, where $S$ is the set defined above we have that 
    \begin{align*}
        \epsilon \geq \frac{1}{2}\norm{p-q}_1 &= \max_{H \subseteq \mathcal{X}} \vert p(H)- q(H) \vert \\
        &\geq q(S^c) \left\vert \frac{p(S^c)}{q(S^c)} - 1 \right\vert \\
        &\geq q(S^c)\rndBrk{\frac{p(S^c)}{q(S^c)} - 1} \\
        &= q(S^c)\rndBrk{\frac{\sum_{x \in S^c} p(x)}{\sum_{x \in S^c} q(x)} - 1} \\
        &\geq q(S^c)\rndBrk{\frac{\sum_{x \in S^c} (1+\epsilon^{\frac{1}{2}})q(x)}{\sum_{x \in S^c} q(x)} - 1} \\
        &\geq q(S^c) \epsilon^{\frac{1}{2}}
    \end{align*}
    which implies that $q(S^c) \leq \epsilon^{\frac{1}{2}}$. Now, the statement of the Lemma follows. 
\end{proof}
We will also assume for the sake of simplicity that $\rho_{A_k}$ are identical for all $k \in [n]$. Using the Lemma above, for every $k \in [n]$, we know that the set 
\begin{align*}
    B_k :&= \curlyBrk{(a_1^n, b): \rho(a_1^k, b) > (1+\sqrt{\epsilon})\rho(a_1^{k-1}, b)\rho(a_k)} \\
    &= \curlyBrk{(a_1^n, b): \rho(a_k | a_1^{k-1}, b) > (1+\sqrt{\epsilon})\rho(a_k)}
\end{align*}
satisfies $\Pr_\rho(B_k) \leq 2\sqrt{\epsilon}$. We can now define $L = \sum_{k=1}^n \charFn{B_k}$, which is a random variable that simply counts the number of bad sets $B_k$ an element $(a_1^n, b)$ belongs to. Using the Markov inequality, we have
\begin{align*}
    \Pr_{\rho}\sqBrk{L > n\epsilon^{\frac{1}{4}}} \leq \frac{\Expect_\rho[L]}{n\epsilon^{\frac{1}{4}}} \leq 2\epsilon^{\frac{1}{4}}.
\end{align*}
We can define the bad set $\mathcal{B} := \curlyBrk{(a_1^n, b): L(a_1^n, b)>n\epsilon^{\frac{1}{4}}}$, then we can define the subnormalised distribution $\tilde{\rho}_{A_1^n B}$ as 
\begin{align*}
    \tilde{\rho}_{A_1^n B} (a_1^n, b) = 
    \begin{cases}
        \rho_{A_1^n B} (a_1^n, b) & (a_1^n, b) \not\in \mathcal{B} \\
        0 & \text{else}
    \end{cases}.
\end{align*}
We have $P(\tilde{\rho}_{A_1^n B}, \rho_{A_1^n B}) \leq \sqrt{2}\epsilon^{1/8}$. Further, note that for every $(a_1^n, b) \not\in \mathcal{B}$, we have 
\begin{align*}
    \rho(a_1^n | b) &= \prod_{k=1}^n \rho(a_k | a_1^{k-1}, b) \\
    &= \prod_{k: (a_1^n, b) \not\in B_k} \rho(a_k | a_1^{k-1}, b) \prod_{k: (a_1^n, b) \in B_k} \rho(a_k | a_1^{k-1}, b) \\
    &\leq (1+\sqrt{\epsilon})^n \prod_{k: (a_1^n, b) \not\in B_k} \rho_{A_k}(a_k)\\
    &\leq (1+\sqrt{\epsilon})^n 2^{-n(1-\epsilon^{\frac{1}{4}})H_{\min}(A_1)}
\end{align*}
where in the third line we have used the fact that if $(a_1^n, b) \not\in B_k$, then $\rho(a_k | a_1^{k-1}b) \leq (1+\sqrt{\epsilon})\rho_{A_k}(a_k)$ and in the last line we have used the fact that for $(a_1^k, b) \not\in \mathcal{B}$, we have $|\{k \in [n]: (a_1^{n},b) \not\in B_k\}| = n- L(a_1^n, b) \geq n(1-\epsilon^{\frac{1}{4}})$, that all the states $\rho_{A_k}$ are identical and $2^{-H_{\min}(A_k)} = \max_{a_k} \rho_{A_k}(a_k)$. Note that we have essentially proven and used a $D_{\max}$ bound above. This proves the following lower bound for the smooth min-entropy of $\rho$
\begin{align}
    H^{\sqrt{2}\epsilon^{\frac{1}{8}}}_{\min} (A_1^n | B) \geq n(1-\epsilon^{\frac{1}{4}})H_{\min}(A_1) - n \log (1+\sqrt{\epsilon}).
\end{align}
The right-hand side above can be improved to get the Shannon entropy $H$ instead of the min-entropy $H_{\min}$. However, we will not pursue this here, since this bound is sufficient for the purpose of our discussion. \\

Although, we are unable to generalise the classical argument above to the quantum case, it provides a great amount of insight into the approximately independent registers problem. Two important examples of distributions, which satisfy the approximate independence conditions above were mentioned in Footnotes \ref{fn:eps_conc_bad} and \ref{fn:on_avg_eps_bd} earlier. To create the first distribution, we flip a biased coin $B$, which is $0$ with probability $\epsilon$ and $1$ otherwise. If $B=0$, then $A_1^n$ is set to the constant all zero string otherwise it is sampled randomly and independently. For this distribution, once the bad event ($B=0$) is removed, the new distribution has a high min-entropy. On the other hand, for the second distribution, $Q_{A_1^{2n}B_1^{2n}}$, we have that the random bits $B_i$ are chosen independently, with each being equal to $0$ with probability $\epsilon$ and $1$ otherwise. If the bit $B_i$ is $0$, then $A_i$ is set equal to $A_{i-1}$ otherwise it is sampled independently. In this case, there is no small probability (small as a function of $\epsilon$) event, that one can simply remove, so that the distribution becomes i.i.d. However, we expect that with high probability the number of $B_i = 0$ is close to $2n \epsilon$. Given that the distribution samples all the other $A_i$ independently, the smooth min-entropy for the distribution should be close to $2n(1-\epsilon)H(A_1)$. The above argument shows that any distribution satisfying the approximate independence conditions in Eq. \ref{eq:approx_ind_cond} can be handled by combining the methods used for these two example distributions, that is, deleting the bad part of the distribution and recognising that the probability for every element in the rest of the space behaves independently on average.\\

The above classical argument is difficult to generalise to quantum states primarily because the quantum equivalents of Lemma \ref{lemm:small_tr_norm_implies_small_Dmax} are not as nice and simple. Furthermore, quantum conditional probabilities themselves are also difficult to use. Fortunately, the substate theorem serves as the perfect tool for developing a smooth max-relative entropy bound, which we can then use with the min-entropy triangle inequality. The quantum substate theorem \cite{Jain02,Jain11} provides an upper bound on the smooth max relative entropy $D^{\epsilon}_{\max}(\rho || \sigma)$ between two states in terms of their relative entropy $D(\rho || \sigma)$. 

\begin{theorem}[Quantum substate theorem \cite{Jain11}]
    Let $\rho$ and $\sigma$ be two states on the same Hilbert space. Then for any $\epsilon \in (0,1)$, we have
    \begin{align}
        D^{\sqrt{\epsilon}}_{\max}(\rho || \sigma) \leq \frac{D(\rho || \sigma)+1}{\epsilon} + \log \frac{1}{1-\epsilon}.
        \label{eq:subset_th_eq2}
    \end{align}
\end{theorem}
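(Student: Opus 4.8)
The plan is to prove the theorem in two stages: first settle the commuting (classical) case, where the relative entropy has an explicit sum representation, and then lift to arbitrary states by a minimax reduction to a single worst-case measurement. Throughout, the goal is to exhibit a subnormalised $\tilde{\rho} \in B_{\sqrt{\epsilon}}(\rho)$ with $D_{\max}(\tilde{\rho} \| \sigma) \le \lambda$, where $\lambda := \frac{d+1}{\epsilon} + \log\frac{1}{1-\epsilon}$ and $d := D(\rho\|\sigma)$.

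Suppose first that $\rho$ and $\sigma$ commute, so they are simultaneously diagonal and correspond to distributions $p$ and $q$. Write $f(x) = \log\frac{p(x)}{q(x)}$ for the log-likelihood ratio, so that $d = \sum_x p(x) f(x)$. The substate condition $\tilde{\rho} \le 2^\lambda \sigma$ is exactly the requirement that $f \le \lambda$ on the support of $\tilde{\rho}$, so I would take $\tilde{\rho}$ to be the restriction of $\rho$ to the complement of the bad set $B = \{x : f(x) > \lambda\}$. By construction $\tilde{\rho} \le 2^\lambda \sigma$, and since $\tilde{\rho}$ is obtained by deleting mass we have $P(\rho,\tilde{\rho})^2 = 1 - (1 - p(B))^2 \le 2\,p(B)$, so it only remains to make $p(B)$ small.

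The key estimate is a Markov-type bound on $p(B)$. The obstacle is that $f$ can be very negative, so $d$ is not an expectation of a nonnegative quantity and Markov does not apply directly. I would first bound the negative contribution: with $S^- = \{x : f(x) \le 0\}$, concavity of the logarithm (Jensen) gives $\sum_{x \in S^-} p(x)\log\frac{q(x)}{p(x)} \le p(S^-)\log\frac{q(S^-)}{p(S^-)} \le p(S^-)\log\frac{1}{p(S^-)} \le \frac{\log e}{e} < 1$, so the positive part carries almost all of $d$, i.e. $\sum_{x : f(x) > 0} p(x) f(x) \le d + 1$. Markov's inequality on this nonnegative sum then yields $\lambda\, p(B) \le d+1$, so $p(B) \le \frac{d+1}{\lambda}$, and the choice $\lambda \approx \frac{d+1}{\epsilon}$ forces $p(B) \lesssim \epsilon$; careful bookkeeping of the $\sqrt{\epsilon}$-smoothing and the trace normalisation produces the additive $\log\frac{1}{1-\epsilon}$ term and the stated bound. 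I expect this part to be routine once the negative-part estimate is in hand. The $1/\epsilon$ factor, which is the hallmark of the substate theorem, comes precisely from the Markov step; unlike the hypothesis-testing relative entropy it cannot be improved to $1/(1-\epsilon)$, because $D_{\max}$ demands control of the likelihood ratio \emph{everywhere} rather than merely on average.

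The hard part is the general non-commuting case, where there is no pointwise likelihood ratio, and the naive fix of pinching $\rho$ into the eigenbasis of $\sigma$ fails: pinching can move $\rho$ arbitrarily far in purified distance while only decreasing $D(\cdot\|\sigma)$ by data processing, so it does not yield a substate for $\rho$ itself. My plan is instead to argue by duality rather than to build $\tilde{\rho}$ from a measured state. Existence of $\tilde{\rho} \in B_{\sqrt{\epsilon}}(\rho)$ with $\tilde{\rho} \le 2^\lambda \sigma$ is a semidefinite feasibility problem whose infeasibility is certified by a measurement operator $0 \le M \le \Id$. Using Sion's minimax theorem I would interchange the optimisation over the smoothing/substate variables with the optimisation over $M$, collapsing the problem to the two-outcome classical distributions $(\tr(M\rho), 1-\tr(M\rho))$ and $(\tr(M\sigma), 1-\tr(M\sigma))$ induced by a single worst-case $M$. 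Applying the classical argument above to these Bernoulli distributions, combined with the data-processing inequality that bounds the relative entropy of the induced distributions by $D(\rho\|\sigma) = d$ (equivalently, bounding the observational divergence $D_{\mathrm{obs}}(\rho\|\sigma)$ by $d$), transfers the bound back to the quantum setting. I anticipate the main obstacle to be setting up the correct minimax formulation and verifying the convexity and compactness hypotheses for Sion's theorem, together with checking that the fidelity between $\rho$ and the constructed substate is genuinely controlled by these two-outcome statistics; this is the technical heart of the Jain--Nayak argument.
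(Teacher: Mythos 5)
The paper does not actually prove this theorem: it is imported verbatim from \cite{Jain11}, so the only meaningful comparison is with the Jain--Nayak proof itself, and your proposal is, in outline, exactly that proof. Your classical half is correct and essentially complete: the truncation of $\rho$ to the complement of $B=\{x: \log(p(x)/q(x))>\lambda\}$, the Jensen estimate $\sum_{x\in S^-}p(x)\log\frac{q(x)}{p(x)}\le p(S^-)\log\frac{1}{p(S^-)}\le \frac{\log e}{e}<1$ isolating the $+1$, and Markov applied to the truncated positive part giving $p(B)\le (d+1)/\lambda$ are all right. One bookkeeping point you leave implicit: pure mass-deletion only gives $P(\rho,\tilde\rho)\le\sqrt{2\,p(B)}$, so to reach the stated smoothing parameter $\sqrt{\epsilon}$ you must renormalise, and it is precisely the renormalised state $\tilde\rho/\tr(\tilde\rho)\le \frac{2^{\lambda}}{1-\epsilon}\sigma$, with $F\bigl(\rho,\tilde\rho/\tr(\tilde\rho)\bigr)=1-p(B)\ge 1-\epsilon$, that simultaneously yields $P\le\sqrt{\epsilon}$ and the additive $\log\frac{1}{1-\epsilon}$ --- consistent with your attribution of that term to ``trace normalisation.''

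For the quantum half you name the right strategy --- the observational divergence $D_{\mathrm{obs}}(\rho\,\|\,\sigma)=\max_{0\le M\le \Id}\tr(M\rho)\log\frac{\tr(M\rho)}{\tr(M\sigma)}$, the bound $D_{\mathrm{obs}}\le D+1$ via measurement data processing plus your binary Jensen estimate, and a minimax/SDP-duality argument showing that failure of the smooth substate property is witnessed by a single two-outcome measurement --- and your observation that pinching into the eigenbasis of $\sigma$ fails (it can destroy the purified distance while only helping the relative entropy) is the correct motivation for going dual. But the decisive step is asserted rather than executed: the dual of the fidelity-maximisation SDP over $\{\tilde\rho\ge 0,\ \tilde\rho\le 2^{\lambda}\sigma,\ \tr\tilde\rho\le 1\}$ does not directly hand you an operator $0\le M\le\Id$; Jain and Nayak need Alberti's variational characterisation of fidelity and a rounding/rescaling of the dual optimal variables before the two statistics $\tr(M\rho)$ and $\tr(M\sigma)$ control both the fidelity shortfall and the substate violation at once, and verifying Sion's hypotheses is the easy part compared to this extraction. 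Since you flag exactly this as the technical heart and the route is known to close (it is the published proof), I would assess your proposal as a faithful reconstruction of the cited argument, with the classical part done and the quantum minimax core an honestly scoped, but real, remaining gap.
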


In this section, we will also frequently use the multipartite mutual information \cite{Watanabe60,Horodecki94, Cerf02}. For a state $\rho_{X_1^n}$, the multipartite mutual information between the registers $(X_1, X_2, \cdots , X_n)$ is defined as 
\begin{align}
    I(X_1: X_2: \cdots : X_n)_\rho:= D(\rho_{X_1^n}|| \rho_{X_1} \tensor \rho_{X_2} \tensor \cdots \tensor \rho_{X_n}). 
\end{align}
In other words, it is the relative entropy between $\rho_{X_1^n}$ and $\rho_{X_1} \tensor \rho_{X_2} \tensor \cdots \tensor \rho_{X_n}$. It can easily be shown that the multipartite mutual information satisfies the following identities:
\begin{align}
    I(X_1: X_2: \cdots : X_n)_\rho &= \sum_{k=1}^n H(X_k)_\rho - H(X_1 \cdots\ X_n)_\rho \label{eq:MMI_sum_of_H} \\
    &= \sum_{k=2}^n I(X_k : X_1^{k-1}). \label{eq:MMI_sum_of_I}
\end{align}

Going back to proving a bound for the quantum approximately independent registers problem, note that using the Alicki-Fannes-Winter (AFW) bound \cite{Alicki04,Winter16} for mutual information \cite[Theorem 11.10.4]{Wilde13}, Eq. \ref{eq:approx_ind_cond} implies that for every $k \in [n]$
\begin{align}
    I(A_k : A_1^{k-1}B)_\rho \leq \epsilon \log |A| + g_2\rndBrk{\frac{\epsilon}{2}} \label{eq:mut_info_bd_for_approx_ind}
\end{align}
where $g_2(x):= (x+1) \log(x+1) - x \log (x)$. With this in mind, we can now focus our efforts on proving the following theorem. 
\begin{theorem}
    Let registers $A_k$ have dimension $|A|$ for all $k \in [n]$. Suppose a quantum state $\rho_{A_1^n B}$ is such that for every $k \in [n],$ we have 
    \begin{align}
        I(A_k : A_1^{k-1}B)_{\rho} \leq \epsilon
    \end{align}
    for some $0 < \epsilon< 1$. Then, we have that 
    \begin{align}
        H_{\min}^{\epsilon^{\frac{1}{4}}+\epsilon}(A_1^n|B)_{\rho} &\geq \sum_{k=1}^n H (A_k)_{\rho} - 3 n \epsilon^{\frac{1}{4}} \log(1+2|A|) \nonumber \\
        & \qquad - \frac{2 \log(1+2|A|)}{\epsilon^{3/4}} - \frac{2 \log(1+2|A|)}{\epsilon^{1/4}} \rndBrk{\log (1- \sqrt{\epsilon}) + g_1(\epsilon, \epsilon^{\frac{1}{4}})}
    \end{align}
    where $g_1(x, y):= - \log(1- \sqrt{1-x^2}) - \log (1-y^2)$. In particular, when all the states $\rho_{A_k}$ are identical, we have
    \begin{align}
        H_{\min}^{\epsilon^{\frac{1}{4}}+\epsilon}(A_1^n|B)_{\rho} &\geq n \rndBrk{H (A_1)_{\rho} - 3 \epsilon^{\frac{1}{4}} \log(1+2|A|)} \nonumber\\
        &\qquad - \frac{2 \log(1+2|A|)}{\epsilon^{3/4}} - \frac{2 \log(1+2|A|)}{\epsilon^{1/4}} \rndBrk{\log (1- \sqrt{\epsilon}) + g_1(\epsilon, \epsilon^{\frac{1}{4}})}.
    \end{align}
    \label{th:approx_ind_reg_mut_info_based}
\end{theorem}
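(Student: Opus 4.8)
The plan is to run the general strategy of Section~\ref{sec:D_max_based_triangle_ineq} with the virtual product state
\[
  \eta_{A_1^n B} := \rho_{A_1} \otimes \rho_{A_2} \otimes \cdots \otimes \rho_{A_n} \otimes \rho_B ,
\]
in which $A_1^n$ is completely decoupled from $B$ and the marginals $\rho_{A_k}$ are mutually independent. Everything then reduces to two subproblems that are tractable precisely because $\eta$ is a product: controlling the smooth max-relative entropy $D_{\max}^{\epsilon^{1/4}}(\rho_{A_1^n B}\|\eta_{A_1^n B})$, and computing a R\'enyi entropy of $\eta$.

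The crucial observation is that, although the trace distance between $\rho$ and $\eta$ can be as large as $\Theta(1)$ (cf. Footnote~\ref{fn:on_avg_eps_bd}), their relative entropy grows only linearly. Since $\eta$ is a product, $D(\rho_{A_1^n B}\|\eta_{A_1^n B})$ is exactly the multipartite mutual information $I(B:A_1:\cdots:A_n)_\rho$, and ordering the registers as $B,A_1,\dots,A_n$ in the chain-rule identity~\ref{eq:MMI_sum_of_I} telescopes this into
\[
  D(\rho_{A_1^n B}\|\eta_{A_1^n B}) = \sum_{k=1}^n I(A_k : A_1^{k-1} B)_\rho \le n\epsilon ,
\]
using the hypothesis $I(A_k:A_1^{k-1}B)_\rho \le \epsilon$. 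I would then feed this into the quantum substate theorem (Eq.~\ref{eq:subset_th_eq2}) with parameter $\sqrt{\epsilon}$, so that the smoothing becomes $\epsilon^{1/4}$, giving
\[
  D_{\max}^{\epsilon^{1/4}}(\rho_{A_1^n B}\|\eta_{A_1^n B}) \le \frac{n\epsilon+1}{\sqrt\epsilon} + \log\frac{1}{1-\sqrt\epsilon} = n\sqrt\epsilon + \frac{1}{\sqrt\epsilon} + \log\frac{1}{1-\sqrt\epsilon}.
\]

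Next I would invoke the entropic triangle inequality, Lemma~\ref{lemm:Hmin_rho_to_Halpha_sigma_using_Dmax}, with substate-smoothing $\epsilon^{1/4}$ and free parameter $\delta=\epsilon$, which yields exactly the target smoothing $\epsilon^{1/4}+\epsilon$ and the term $g_1(\epsilon,\epsilon^{1/4})$:
\[
  H_{\min}^{\epsilon^{1/4}+\epsilon}(A_1^n|B)_\rho \ge \tilde H_\alpha^{\uparrow}(A_1^n|B)_\eta - \frac{\alpha}{\alpha-1} D_{\max}^{\epsilon^{1/4}}(\rho\|\eta) - \frac{g_1(\epsilon,\epsilon^{1/4})}{\alpha-1}.
\]
Since $\eta$ is a product with $A_1^n$ independent of $B$, additivity of the sandwiched R\'enyi conditional entropy collapses the first term to $\tilde H_\alpha^{\uparrow}(A_1^n|B)_\eta = \sum_{k=1}^n \tilde H_\alpha(A_k)_\rho$. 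It is essential to perform this splitting \emph{before} converting to the von Neumann entropy: applying a second-order expansion of the form $\tilde H_\alpha(A_k)_\rho \ge H(A_k)_\rho - (\alpha-1)\log^2(1+2|A|)$ term by term costs only $n(\alpha-1)\log^2(1+2|A|)$, whereas applying it to the joint register $A_1^n$ of dimension $|A|^n$ would incur a ruinous $\log^2(|A|^n)\sim n^2$ factor.

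The delicate step is the choice of $\alpha$. The R\'enyi correction contributes a penalty $\sim n(\alpha-1)\log^2(1+2|A|)$, while the max-relative entropy term contributes $\tfrac{\alpha}{\alpha-1}\cdot n\sqrt\epsilon$, which diverges as $\alpha\to1$; balancing the two forces $\alpha = 1 + \tfrac{\epsilon^{1/4}}{\log(1+2|A|)}$, which lies in $(1,2]$ for small $\epsilon$. With this choice the two linear-in-$n$ penalties become $n\epsilon^{1/4}\log(1+2|A|)$ and (after $\tfrac{\alpha}{\alpha-1}\le\tfrac{2}{\alpha-1}$) $2n\epsilon^{1/4}\log(1+2|A|)$, summing to the claimed $3n\epsilon^{1/4}\log(1+2|A|)$, while the $\tfrac1{\sqrt\epsilon}$ and $g_1$ constants get multiplied by $\tfrac{1}{\alpha-1}=\tfrac{\log(1+2|A|)}{\epsilon^{1/4}}$ to produce the $O(\epsilon^{-3/4})$ and $O(\epsilon^{-1/4})$ terms. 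I expect the only real obstacle to be the bookkeeping here: carefully collecting the additive constants from the substate theorem and from $g_1$ (whose internal $\log(1-\sqrt\epsilon)$ term combines with the substate contribution) so that they assemble into the stated closed form. The ``in particular'' statement is then immediate, since identical marginals give $\sum_k H(A_k)_\rho = nH(A_1)_\rho$.
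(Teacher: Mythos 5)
Your proposal is correct and follows essentially the same route as the paper's own proof: the same auxiliary product state $\eta = \bigotimes_{k}\rho_{A_k}\otimes\rho_B$, the telescoping of $D(\rho\|\eta)$ into $\sum_k I(A_k:A_1^{k-1}B)\le n\epsilon$ via Eq.~\ref{eq:MMI_sum_of_I}, the substate theorem with parameter $\sqrt{\epsilon}$, Lemma~\ref{lemm:Hmin_rho_to_Halpha_sigma_using_Dmax} with $\delta=\epsilon$, additivity to split $\tilde H^{\uparrow}_{\alpha}(A_1^n|B)_\eta$ per register before invoking \cite[Lemma B.9]{Dupuis20}, and the balancing choice $\alpha = 1+\epsilon^{1/4}/\log(1+2|A|)$. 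Your added remark about why the per-register splitting must precede the R\'enyi-to-von-Neumann conversion (avoiding a $\log^2(|A|^n)\sim n^2$ penalty) is a correct observation that the paper leaves implicit.
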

\begin{proof}
    First note that we have, 
    \begin{align*}
        I(A_1: A_2: \cdots: A_n: B) &= D(\rho_{A_1^n B} || \bigotimes_{k=1}^n \rho_{A_k} \otimes \rho_B) \\
        &= \sum_{k=1}^n I(A_k : A_1^{k-1}B) \\
        &\leq n \epsilon.
    \end{align*}
    Using the substate theorem, we now have 
    \begin{align*}
        D_{\max}^{\epsilon^{\frac{1}{4}}}\rndBrk{\rho_{A_1^n B} \middle\| \bigotimes_{k=1}^n \rho_{A_k} \otimes \rho_B} &\leq \frac{D(\rho_{A_1^n B} || \bigotimes_{k=1}^n \rho_{A_k} \otimes \rho_B) + 1}{\sqrt{\epsilon}} - \log (1- \sqrt{\epsilon}) \\
        &\leq n \sqrt{\epsilon} + \frac{1}{\sqrt{\epsilon}} - \log (1- \sqrt{\epsilon}). \numberthis
    \end{align*}
    We now define the auxiliary state $\eta_{A_1^n B}:= \bigotimes_{k=1}^n \rho_{A_k} \otimes \rho_B$. Using Lemma \ref{lemm:Hmin_rho_to_Halpha_sigma_using_Dmax}, for $\alpha \in (1,2)$, we can transform the smooth min-entropy into an $\alpha$-R\'enyi entropy on the auxiliary product state $\eta_{A_1^n B}$ as follows:
    \begin{align*}
        H&_{\min}^{\epsilon^{\frac{1}{4}}+\epsilon}(A_1^n|B)_{\rho} \\
        &\geq \tilde{H}^{\uparrow}_{\alpha}(A_1^n|B)_{\eta} - \frac{\alpha}{\alpha-1} D^{\epsilon^{\frac{1}{4}}}_{\max} (\rho_{A_1^n B}|| \eta_{A_1^n B}) - \frac{g_1(\epsilon, \epsilon^{\frac{1}{4}})}{\alpha-1} \\
        &= \sum_{k=1}^n \tilde{H}^{\uparrow}_{\alpha}(A_k)_{\rho} - \frac{\alpha}{\alpha-1} D^{\epsilon^{\frac{1}{4}}}_{\max} (\rho_{A_1^n B}|| \eta_{A_1^n B}) - \frac{g_1(\epsilon, \epsilon^{\frac{1}{4}})}{\alpha-1} \\
        &\geq \sum_{k=1}^n H (A_k)_{\rho} - n(\alpha-1) \log^2(1+2|A|)- \frac{\alpha}{\alpha-1} D^{\epsilon^{\frac{1}{4}}}_{\max} (\rho_{A_1^n B}|| \eta_{A_1^n B}) - \frac{g_1(\epsilon, \epsilon^{\frac{1}{4}})}{\alpha-1} \\
        &\geq \sum_{k=1}^n H (A_k)_{\rho} - n(\alpha-1) \log^2(1+2|A|)- \frac{\alpha}{\alpha-1} n \sqrt{\epsilon} - \frac{\alpha}{\alpha-1} \frac{1}{\sqrt{\epsilon}} - \frac{\alpha}{\alpha-1} \log (1- \sqrt{\epsilon})- \frac{g_1(\epsilon, \epsilon^{\frac{1}{4}})}{\alpha-1}.
    \end{align*}
    In the third line above, we have used \cite[Lemma B.9]{Dupuis20} (which is an improvement of \cite[Lemma 8]{Tomamichel09}), which is valid as long as $\alpha<1+ \frac{1}{\log(1+ 2|A|)}$. We will select $\alpha = 1 + \frac{\epsilon^{1/4}}{\log(1+ 2|A|)}$ for which the above $\alpha$ bound is satisfied, this gives us 
    \begin{align*}
        H_{\min}^{\epsilon^{\frac{1}{4}}+\epsilon}(A_1^n|B)_{\rho} &\geq \sum_{k=1}^n H (A_k)_{\rho} - 3 n \epsilon^{\frac{1}{4}} \log(1+2|A|) - \frac{2 \log(1+2|A|)}{\epsilon^{3/4}} \nonumber \\
        &\qquad - \frac{2 \log(1+2|A|)}{\epsilon^{1/4}} \rndBrk{\log (1- \sqrt{\epsilon}) + g_1(\epsilon, \epsilon^{\frac{1}{4}})}. 
    \end{align*}
\end{proof}
\noindent We can now plug the bound in Eq. \ref{eq:mut_info_bd_for_approx_ind} to derive the following Corollary. 
\begin{corollary}
    Let registers $A_k$ have dimension $|A|$ for all $k \in [n]$. Suppose a quantum state $\rho_{A_1^n B}$ is such that for every $k \in [n],$ we have
    \begin{align}
        \norm{\rho_{A_1^k B} - \rho_{A_k} \otimes \rho_{A_1^{k-1} B}}_1 \leq \epsilon.
    \end{align}
    Then, we have that for $\delta = \epsilon \log |A| + g_2\rndBrk{\frac{\epsilon}{2}}$ such that $0< \delta<1$, 
    \begin{align}
        H_{\min}^{\delta^{\frac{1}{4}}+\delta}(A_1^n|B)_{\rho} &\geq \sum_{k=1}^n H (A_k)_{\rho} - 3 n \delta^{\frac{1}{4}} \log(1+2|A|) \nonumber \\
        & \qquad - \frac{2 \log(1+2|A|)}{\delta^{3/4}} - \frac{2 \log(1+2|A|)}{\delta^{1/4}} \rndBrk{\log (1- \sqrt{\delta}) + g_1(\delta, \delta^{\frac{1}{4}})}
    \end{align}
    where $g_1(x, y) = - \log(1- \sqrt{1-x^2}) - \log(1-y^2)$ and $g_2(x) = (x+1)\log(x+1) - x \log(x)$. In particular, when all the states $\rho_{A_k}$ are identical, we have
    \begin{align}
        H_{\min}^{\delta^{\frac{1}{4}}+\delta}(A_1^n|B)_{\rho} &\geq n \rndBrk{H (A_1)_{\rho} - 3 \delta^{\frac{1}{4}} \log(1+2|A|)} \nonumber \\
        &\qquad - \frac{2 \log(1+2|A|)}{\delta^{3/4}} - \frac{2 \log(1+2|A|)}{\delta^{1/4}} \rndBrk{\log (1- \sqrt{\delta}) + g_1(\delta, \delta^{\frac{1}{4}})}.
    \end{align}
    \label{cor:approx_ind_reg}
\end{corollary}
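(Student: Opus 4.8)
The plan is to obtain this corollary as an immediate consequence of Theorem~\ref{th:approx_ind_reg_mut_info_based}, by first converting the trace-distance hypothesis into the mutual-information hypothesis that theorem requires. The only genuine content is the continuity estimate effecting this conversion; once it is in place, the bound follows by verbatim substitution, since all the analytic work has already been done inside Theorem~\ref{th:approx_ind_reg_mut_info_based}.

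First I would fix $k \in [n]$ and observe that the product state $\rho_{A_k} \otimes \rho_{A_1^{k-1}B}$ factorizes across the cut $A_k$ versus $A_1^{k-1}B$, so it has vanishing mutual information, $I(A_k : A_1^{k-1}B)_{\rho_{A_k}\otimes\rho_{A_1^{k-1}B}} = 0$. The hypothesis $\norm{\rho_{A_1^k B} - \rho_{A_k} \otimes \rho_{A_1^{k-1}B}}_1 \leq \epsilon$ says the two states are $\tfrac{\epsilon}{2}$-close in the normalised distance $\tfrac12\norm{\cdot}_1$. Applying the Alicki--Fannes--Winter continuity bound for the mutual information \cite[Theorem 11.10.4]{Wilde13} with continuity parameter $\epsilon/2$ then yields
\begin{align*}
    I(A_k : A_1^{k-1}B)_{\rho} \leq 2\cdot\tfrac{\epsilon}{2}\log|A| + g_2\rndBrk{\tfrac{\epsilon}{2}} = \epsilon \log|A| + g_2\rndBrk{\tfrac{\epsilon}{2}},
\end{align*}
which is exactly Eq.~\ref{eq:mut_info_bd_for_approx_ind}. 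Here the factor $2\log|A|$ is the dimension penalty of the smaller subsystem $A_k$, and the additive remainder $(1+x)\log(1+x)-x\log x = g_2(x)$ arises from the binary-entropy correction $(1+x)\,h\!\big(\tfrac{x}{1+x}\big)$ evaluated at $x=\epsilon/2$. I would verify this identification by the short direct expansion of $(1+x)\,h\!\big(\tfrac{x}{1+x}\big)$, which reproduces $g_2$ exactly.

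With $\delta := \epsilon \log|A| + g_2(\epsilon/2)$, the displayed inequality states precisely that $\rho_{A_1^n B}$ satisfies the hypothesis $I(A_k : A_1^{k-1}B)_{\rho} \leq \delta$ of Theorem~\ref{th:approx_ind_reg_mut_info_based} for every $k$, with $\delta$ now playing the role of the parameter called $\epsilon$ in that theorem; the standing assumption $0<\delta<1$ is exactly what that theorem requires. I would therefore invoke Theorem~\ref{th:approx_ind_reg_mut_info_based} and read off its conclusion with every occurrence of $\epsilon$ replaced by $\delta$, giving the general bound of the corollary. The special case of identical marginals then follows simply by substituting $\sum_{k=1}^n H(A_k)_\rho = n\,H(A_1)_\rho$.

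The main---and essentially only---obstacle is the bookkeeping in the continuity step: one must apply AFW with parameter $\epsilon/2$ rather than $\epsilon$, because the hypothesis is phrased for the unnormalised trace norm, and one must confirm that the AFW remainder coincides with $g_2(\epsilon/2)$. Beyond this elementary check, no fresh estimate is needed, since the smooth max-relative-entropy bound via the substate theorem and the entropic triangle inequality of Lemma~\ref{lemm:Hmin_rho_to_Halpha_sigma_using_Dmax} are already packaged inside Theorem~\ref{th:approx_ind_reg_mut_info_based}.
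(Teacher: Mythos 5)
Your proposal is correct and matches the paper's own proof: the paper likewise obtains Eq.~\ref{eq:mut_info_bd_for_approx_ind} by applying the AFW bound (with normalised trace distance $\epsilon/2$, giving $\epsilon\log|A| + g_2(\epsilon/2)$) and then invokes Theorem~\ref{th:approx_ind_reg_mut_info_based} with $\delta$ in place of $\epsilon$. Your verification that the AFW remainder $(1+x)\,h\bigl(\tfrac{x}{1+x}\bigr)$ equals $g_2(x)$ is also accurate.
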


\subsection{Weak approximate asymptotic equipartition}

We can modify the proof of Theorem \ref{th:approx_ind_reg_mut_info_based} to prove a \emph{weak} approximate asymptotic equipartition property (AEP). 

\begin{theorem}
    Let registers $A_k$ have dimension $|A|$ for all $k \in [n]$ and the registers $B_k$ have dimension $|B|$ for all $k \in [n]$. Suppose a quantum state $\rho_{A_1^n B_1^n E}$ is such that for every $k \in [n],$ we have
    \begin{align}
        \norm{\rho_{A_1^k B_1^k E} - \rho_{A_k B_k} \otimes \rho_{A_1^{k-1} B_1^{k-1} E}}_1 \leq \epsilon.
    \end{align}
    Then, we have that for $\delta = \epsilon \log \rndBrk{|A||B|} + g_2\rndBrk{\frac{\epsilon}{2}}$ such that $0< \delta<1$,
    \begin{align}
        H_{\min}^{\delta^{\frac{1}{4}}+\delta}(A_1^n|B_1^n E)_{\rho} &\geq \sum_{k=1}^n H (A_k|B_k)_{\rho} - 3 n \delta^{\frac{1}{4}} \log(1+2|A|) \nonumber \\
        & \qquad - \frac{2 \log(1+2|A|)}{\delta^{3/4}} - \frac{2 \log(1+2|A|)}{\delta^{1/4}} \rndBrk{\log (1- \sqrt{\delta}) + g_1(\delta, \delta^{\frac{1}{4}})}
    \end{align}
    where $g_1(x, y) = - \log(1- \sqrt{1-x^2}) - \log(1-y^2)$ and $g_2(x) = (x+1)\log(x+1) - x \log(x)$. In particular, when all the states $\rho_{A_k B_k}$ are identical, we have
    \begin{align}
        H_{\min}^{\delta^{\frac{1}{4}}+\delta}(A_1^n|B_1^n E)_{\rho} &\geq n \rndBrk{H (A_1|B_1)_{\rho} - 3 \delta^{\frac{1}{4}} \log(1+2|A|)} \nonumber \\
        &\qquad - \frac{2 \log(1+2|A|)}{\delta^{3/4}} - \frac{2 \log(1+2|A|)}{\delta^{1/4}} \rndBrk{\log (1- \sqrt{\delta}) + g_1(\delta, \delta^{\frac{1}{4}})}.
    \end{align}
    \label{th:weak_approx_AEP}
\end{theorem}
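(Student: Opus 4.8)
\section*{Proof proposal}

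The plan is to transcribe the proof of Theorem \ref{th:approx_ind_reg_mut_info_based} almost verbatim, with the single registers $A_k$ replaced by the blocks $A_k B_k$ as the ``approximately independent'' systems and the final entropy conditioned on $B_1^n E$. First I would convert the trace-norm hypothesis into a mutual information bound: applying the Alicki--Fannes--Winter continuity bound for mutual information exactly as in Eq. \ref{eq:mut_info_bd_for_approx_ind}, but with the freshly appended system being $A_k B_k$ of dimension $|A||B|$, the assumption $\norm{\rho_{A_1^k B_1^k E} - \rho_{A_k B_k}\otimes \rho_{A_1^{k-1}B_1^{k-1}E}}_1 \leq \epsilon$ yields, for every $k$,
\begin{align*}
I(A_k B_k : A_1^{k-1}B_1^{k-1}E)_\rho \leq \epsilon \log(|A||B|) + g_2(\epsilon/2) = \delta .
\end{align*}

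Next I introduce the auxiliary product state $\eta_{A_1^n B_1^n E} := \bigotimes_{k=1}^n \rho_{A_k B_k} \otimes \rho_E$ and control its relative-entropy distance from $\rho$ via the chain rule for multipartite mutual information (Eq. \ref{eq:MMI_sum_of_I}), ordering the parts as $E, A_1 B_1, \dots, A_n B_n$:
\begin{align*}
D\rndBrk{\rho_{A_1^n B_1^n E} \middle\| \eta_{A_1^n B_1^n E}} = \sum_{k=1}^n I(A_k B_k : A_1^{k-1} B_1^{k-1} E)_\rho \leq n\delta .
\end{align*}
The substate theorem, applied with smoothing chosen so that the inner parameter is $\sqrt\delta$ (giving outer smoothing $\delta^{1/4}$), then produces
\begin{align*}
D^{\delta^{1/4}}_{\max}\rndBrk{\rho_{A_1^n B_1^n E} \middle\| \eta_{A_1^n B_1^n E}} \leq n\sqrt\delta + \frac{1}{\sqrt\delta} - \log(1-\sqrt\delta),
\end{align*}
precisely as in the proof of Theorem \ref{th:approx_ind_reg_mut_info_based}.

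The remaining steps mirror that proof. Applying Lemma \ref{lemm:Hmin_rho_to_Halpha_sigma_using_Dmax} with smoothing parameters $\delta^{1/4}$ (on $D_{\max}$) and $\delta$ reduces $H_{\min}^{\delta^{1/4}+\delta}(A_1^n|B_1^n E)_\rho$ to $\tilde{H}^{\uparrow}_{\alpha}(A_1^n | B_1^n E)_\eta$ minus the now-standard penalty terms. Because $\eta$ is a tensor product in which the $A_1^n$ systems lie entirely inside the blocks $\rho_{A_k B_k}$, additivity of the sandwiched R\'enyi conditional entropy gives $\tilde{H}^{\uparrow}_{\alpha}(A_1^n | B_1^n E)_\eta = \sum_{k=1}^n \tilde{H}^{\uparrow}_{\alpha}(A_k | B_k)_\rho$. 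I then lower bound each term by $H(A_k|B_k)_\rho - (\alpha-1)\log^2(1+2|A|)$ using \cite[Lemma B.9]{Dupuis20}, valid for $\alpha < 1 + 1/\log(1+2|A|)$ and involving only the dimension $|A|$ of the unconditioned system. Finally, choosing $\alpha = 1 + \delta^{1/4}/\log(1+2|A|)$ and collecting terms exactly as in Theorem \ref{th:approx_ind_reg_mut_info_based} delivers the stated bound, and the specialisation to identical $\rho_{A_k B_k}$ is immediate.

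I expect the only genuinely new point to verify carefully to be the conditional-entropy decomposition for $\eta$: one must check that appending the independent register $E$ to the conditioning system leaves $\tilde{H}^{\uparrow}_{\alpha}$ unchanged, so that $\tilde{H}^{\uparrow}_{\alpha}(A_1^n|B_1^n E)_\eta = \tilde{H}^{\uparrow}_{\alpha}(A_1^n|B_1^n)_{\otimes_k \rho_{A_k B_k}}$, and that this in turn splits additively over $k$. Everything else is a direct transcription with the substitutions $|A| \mapsto |A||B|$ inside the definition of $\delta$ and $H(A_k)\mapsto H(A_k|B_k)$ in the leading term.
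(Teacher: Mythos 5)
Your proposal is correct and follows essentially the same route as the paper: the paper's proof of Theorem \ref{th:weak_approx_AEP} likewise takes $\eta_{A_1^n B_1^n E} = \bigotimes_k \rho_{A_k B_k} \otimes \rho_E$, bounds $D(\rho\|\eta) = \sum_k I(A_k B_k : A_1^{k-1} B_1^{k-1} E)_\rho \leq n\delta$ via the AFW bound, and then repeats the proof of Theorem \ref{th:approx_ind_reg_mut_info_based} with the substitution $\tilde{H}^{\uparrow}_{\alpha}(A_1^n|B_1^n E)_\eta = \sum_k \tilde{H}^{\uparrow}_{\alpha}(A_k|B_k)_\rho$. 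The one point you flag for careful verification — that conditioning on the independent $E$ drops out and the entropy splits additively over the tensor factors — is exactly the single step the paper itself identifies as the only difference, and your observation that the Lemma B.9 penalty depends only on $|A|$ (not the conditioning system) is what keeps the $\log(1+2|A|)$ terms unchanged.
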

\begin{proof}
    To prove this, we use the auxiliary state $\eta_{A_1^n B_1^n E} := \bigotimes \rho_{A_k B_k} \otimes \rho_E$. Then, we have 
    \begin{align*}
        D(\rho_{A_1^n B_1^n E} || \eta_{A_1^n B_1^n E}) &= I(A_1 B_1: A_2 B_2: \cdots: A_n B_n: E)_{\rho} \\
        &= \sum_{k=1}^n I(A_k B_k: A_1^{k-1} B_1^{k-1} E)_{\rho} \\
        &\leq n\rndBrk{\epsilon \log \rndBrk{|A||B|} + g\rndBrk{\frac{\epsilon}{2}}} = n \delta
    \end{align*}
    where we used the AFW bound for mutual information in the last line \cite[Theorem 11.10.4]{Wilde13}. The rest of the proof follows the proof of Theorem \ref{th:approx_ind_reg_mut_info_based}, only difference being that now we have $\tilde{H}^{\uparrow}_{\alpha}(A_1^n | B_1^n E)_{\eta} = \sum_{k=1}^n \tilde{H}^{\uparrow}_{\alpha}(A_k | B_k)_{\rho}$. 
\end{proof}
We call this generalisation \emph{weak} because the smoothing term ($\delta$) depends on size of the side information $|B|$. In Appendix \ref{sec:size_of_B_approx_EAT}, we show that under the assumptions of the theorem, some sort of bound on the dimension of the registers $B$ is necessary otherwise one cannot have a non-trivial bound on the smooth min-entropy.

\subsection{Simple security proof for sequential device independent quantum key distribution}

The approximately independent register scenario and the associated min-entropy lower bound can be used to provide simple ``proof of concept'' security proofs for cryptographic protocols. In this section, we will briefly sketch a proof for sequential device independent quantum key distribution (DIQKD) to demonstrate this idea. The protocol for sequential DIQKD used in \cite{Friedman20} is presented as Protocol \ref{frame:Seq_DI-QKD_prot}. \\

\begin{figure}
    \begin{mdframed}
    \textbf{Sequential DIQKD protocol}\\

    \textbf{Parameters:}
    \begin{itemize}
        \item $\omega_{\text{exp}}$ is the expected winning probability for the honest implementation of the device
        \item $n \geq 1$ is the number of rounds in the protocol
        \item $\gamma \in (0,1]$ is the fraction of test rounds
    \end{itemize}

    \textbf{Protocol:}
    \begin{enumerate}
        \item For every $0 \leq i \leq n$ perform the following steps:
        \begin{enumerate}
            \item Alice chooses a random $T_i \in \{ 0,1\}$ with $\Pr [T_i =1] =\gamma$. 
            \item Alice sends $T_i$ to Bob. 
            \item If $T_i = 0$, Alice and Bob set the questions $(X_i,Y_i) = (0,2)$, otherwise they sample $(X_i,Y_i)$ uniformly at random from $\{ 0,1\}$.
            \item Alice and Bob use their device with the questions $(X_i,Y_i)$ and obtain the outputs $A_i, B_i$.
        \end{enumerate}
        \item Alice announces her questions $X_1^n$ to Bob. 
        \item \textbf{Error correction:} Alice and Bob use an error correction procedure, which lets Bob obtain the raw key $\tilde{A}_1^n$ (if the error correction protocol succeeds, then $A_1^n = \tilde{A}_1^n$). In case the error correction protocol aborts, they abort the QKD protocol too. 
        \item \textbf{Parameter Estimation:} Bob uses $B_1^n$ and $\tilde{A}_1^n$ to compute the average winning probability $\omega_{\text{avg}}$ on the test rounds. He aborts if $\omega_{\text{avg}} < \omega_{\text{exp}}$
        \item \textbf{Privacy Amplification:} Alice and Bob use a privacy amplification protocol to create a secret key $K$ from $A_1^n$ (using $\tilde{A}_1^n$ for Bob).
    \end{enumerate}
    \end{mdframed}
    {\captionof{Protocol}{}
    \label{frame:Seq_DI-QKD_prot}}
\end{figure}

We consider a simple model for DIQKD, where Eve (the adversary) distributes a state $\rho^{(0)}_{E_A E_B E}$ between Alice and Bob at the beginning of the protocol. Alice and Bob then use their states sequentially as given in Protocol \ref{frame:Seq_DI-QKD_prot}. The $k$th round of the protocol produces the questions $X_k, Y_k$ and $T_k$, the answers $A_k$ and $ B_k$ and transforms the shared state from $\rho^{(k-1)}_{E_A E_B E}$ to $\rho^{(k)}_{E_A E_B E}$. \\

Given the questions and answers of the previous rounds, the state shared between Alice and Bob and their devices in each round can be viewed as a device for playing the CHSH game. Suppose in the $k$th round, the random variables produced in the previous $k-1$ rounds are $r_{k-1}:= x_1^{k-1}, y_1^{k-1}, t_1^{k-1}, a_1^{k-1}, b_1^{k-1}$ and that the state shared between Alice and Bob is $\rho^{(k-1)}_{E_A E_B E| r_{k-1}}$. We can then define $\Pr[W_k |r_{k-1}]$ to be the winning probability of the CHSH game played by Alice and Bob using the state and their devices in the $k$th round. Note that Alice's device cannot distinguish whether the CHSH game is played in a round or is used for key generation. We can further take an average over all the previous round's random variables to derive the probability of winning the $k$th game
\begin{align}
    \Pr[W_k] = \Expect_{r_{k-1}}\sqBrk{\Pr[W_k |r_{k-1}]}.
\end{align}
Alice and Bob randomly sample a subset of the rounds (using the random variable $T_k$) and play the CHSH game on this subset. If the average winning probability of CHSH game on this subset is small, they abort the protocol. For simplicity and brevity, we will assume here that the state $\rho^{(0)}_{E_A E_B E}$ distributed between Alice and Bob at the start of the protocol by Eve has an average winning probability at least $\omega_{{\text{exp}}}$, that is, 
\begin{align}
    \frac{1}{n}\sum_{k=1}^n \Pr[W_k] \geq \omega_{{\text{exp}}} - \delta
\end{align}
for some small $\delta>0$. Using standard sampling arguments it can be argued that either this is true or the protocol aborts with high probability. \\

For any shared state $\sigma_{E_A E_B E}$ (where $E_A$ is held by Alice, $E_B$ is held by Bob and $E$ is held by the adversary) and local measurement devices, if Alice and Bob win the CHSH game with a probability $\omega \in (\frac{3}{4}, \frac{2+\sqrt{2}}{4}]$, then Alice's answer $A$ to the game is random given the questions $X,Y$ and the register $E$ held by adversary. This is quantified by the following entropic bound \cite{Pironio09} (see \cite[Lemma 5.3]{Friedman20} for the following form)
\begin{align}
    H(A| XY E) \geq f(\omega) = \begin{cases}
        1 - h\rndBrk{\frac{1}{2} + \frac{1}{2}\sqrt{16 \omega(\omega-1) +3}} & \text{if } \omega \in [\frac{3}{4}, \frac{2+\sqrt{2}}{4}] \\
        0 & \text{if } \omega \in [0, \frac{3}{4})
    \end{cases}
    \label{eq:single_rnd_bnd}
\end{align}
where $h(\cdot)$ is the binary entropy. The function $f$ is convex over the interval $\sqBrk{0, \frac{2+\sqrt{2}}{4}}$. We plot it in the interval $[\frac{3}{4}, \frac{2+\sqrt{2}}{4}]$ in Figure \ref{fig:sing_rnd_bnd}.\\

\begin{figure}
    \centering
    \includegraphics*[scale=0.6]{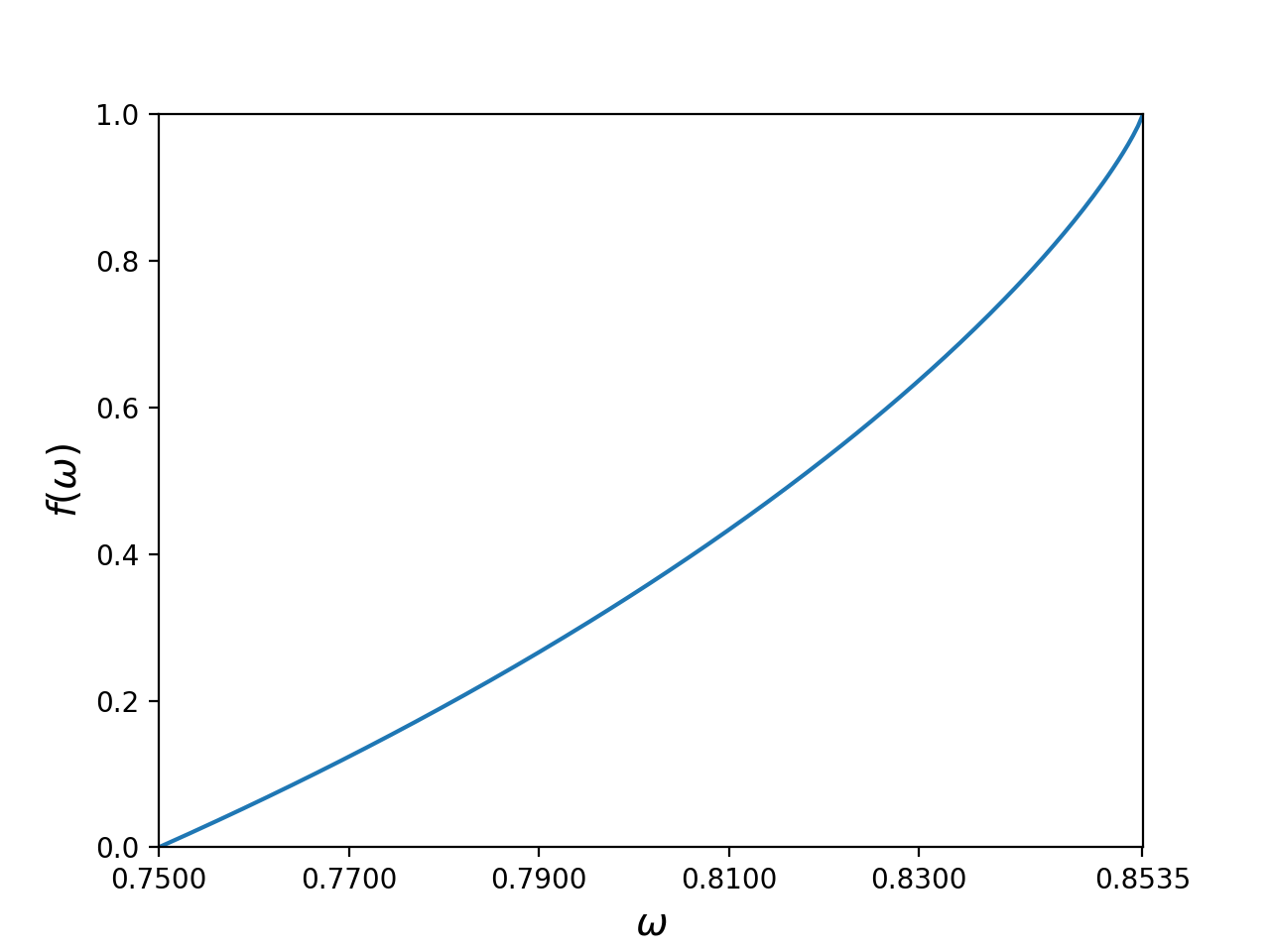}
    \caption{The lower bound in Eq. \ref{eq:single_rnd_bnd} for the interval $[\frac{3}{4}, \frac{2+\sqrt{2}}{4}]$}
    \label{fig:sing_rnd_bnd}
\end{figure}
For $\epsilon >0$, we choose the parameter $\omega_{\text{exp}} \in [\frac{3}{4} + \delta, \frac{2+\sqrt{2}}{4}]$ to be large enough so that
\begin{align}
    1- f(\omega_{\text{exp}} - \delta) = h\rndBrk{\frac{1}{2} + \frac{1}{2}\sqrt{16 (\omega_{\text{exp}}-\delta) (\omega_{\text{exp}} - \delta-1) +3}} \leq \epsilon^4. 
\end{align}
We will now use Eq. \ref{eq:single_rnd_bnd} to bound the von Neumann entropy of the answers given Eve's information for the sequential DIQKD protocol. We have 
\begin{align*}
    H(A_1^n | X_1^n Y_1^n T_1^n E) & = \sum_{k=1}^n H(A_k | A_1^{k-1} X_1^n Y_1^n T_1^n E) \\
    &\overset{(1)}{=} \sum_{k=1}^n H(A_k | A_1^{k-1} X_1^k Y_1^k T_1^k E) \\
    &\overset{(2)}{=} \sum_{k=1}^n H(A_k | X_k Y_k R_{k-1} E) \\
    &= \sum_{k=1}^n \Expect_{r_{k-1} \sim R_{k-1}} \sqBrk{H(A_k | X_k Y_k E)_{\rho^{(k)}_{|r_{k-1}}}} \\
    &\overset{(3)}{\geq} \sum_{k=1}^n \Expect_{r_{k-1} \sim R_{k-1}} \sqBrk{f\rndBrk{\Pr[W_k | r_{k-1}]}} \\
    &\geq \sum_{k=1}^n f\rndBrk{\Pr[W_k]} \\
    &\geq n f\rndBrk{\frac{1}{n} \sum_{k=1}^n \Pr[W_k]} \\
    &\geq nf(\omega_{\text{exp}}-\delta) \geq n(1- \epsilon^4)
\end{align*}
where in (1) we have used the fact that the questions sampled in the rounds after the $k$th round are independent of the random variables in the previous rounds, in (2) we use the fact that Alice's answers are independent of the random variable $T_k$ given the question $X_k$ and we also grouped the random variables generated in the previous round into the random variable $R_{k-1} := A_1^{k-1} B_1^{k-1} X_1^{k-1} Y_1^{k-1} T_1^{k-1}$, in (3) we use the bound in Eq. \ref{eq:single_rnd_bnd}, and in the next two steps we use convexity of $f$. If instead of the von Neumann entropy on the left-hand side above we had the smooth min-entropy, then the bound above would be sufficient to prove the security of DIQKD. However, this argument cannot be easily generalised to the smooth min-entropy because a chain rule like the one used in the first step does not exist for the smooth min-entropy (entropy accumulation \cite{Dupuis20,Metger22} generalises exactly such an argument). We can use the argument used for the approximately independent register case to transform this von Neumann entropy bound to a smooth min-entropy bound. \\

This bound results in the following bound on the multipartite mutual information
\begin{align*}
    I(A_1: \cdots : A_n: X_1^n Y_1^n T_1^n E) &= \sum_{k=1}^n H(A_k) + H(X_1^n Y_1^n T_1^n E) - H(A_1^n X_1^n Y_1^n T_1^n E) \\
    &= \sum_{k=1}^n H(A_k) - H(A_1^n | X_1^n Y_1^n T_1^n E) \\
    &\leq n - n(1-\epsilon^4) = n \epsilon^4
\end{align*}
where we have used the dimension bound $H(A_k) \leq 1$ for every $k \in [n]$. This is the same as the multipartite mutual information bound we derived while analysing approximately independent registers in Theorem \ref{th:approx_ind_reg_mut_info_based}. We can simply use the smooth min-entropy bound derived there here as well. This gives us the bound
\begin{align*}
    H^{2\epsilon}_{\min}(A_1^n | X_1^n Y_1^n T_1^n E) &\geq \sum_{k=1}^n H(A_k) - 3n \epsilon \log 5 - O\rndBrk{\frac{1}{\epsilon^3}}\\
    &=n(1- 3\epsilon \log 5) - O\rndBrk{\frac{1}{\epsilon^3}} \numberthis
\end{align*}
where we have used the fact that the answers $A_k$ can always be assumed to be uniformly distributed \cite{Pironio09,Friedman20}. For every $\epsilon >0$, we can choose a sufficiently large $n$ so that this bound is large and positive. \\

We note that this method is only able to provide ``proof of concept'' or existence type security proofs. This proof method couples the value of the security parameter for privacy amplification $\epsilon$ with the average winning probability, which is not desirable. The parameter $\epsilon$ is chosen according to the security requirements of the protocol and is typically very small. For such values of $\epsilon$, the average winning probability of the protocol will have to be extremely close to the maximum and we cannot realistically expect practical implementations to achieve such high winning probabilities. However, we do expect that this method will make it easier to create ``proof of concept'' type proofs for new cryptographic protocols in the future.  


\section{Approximate entropy accumulation}
\label{sec:approx_EAT}

\begin{figure}
    \centering
    \includegraphics[scale=0.1]{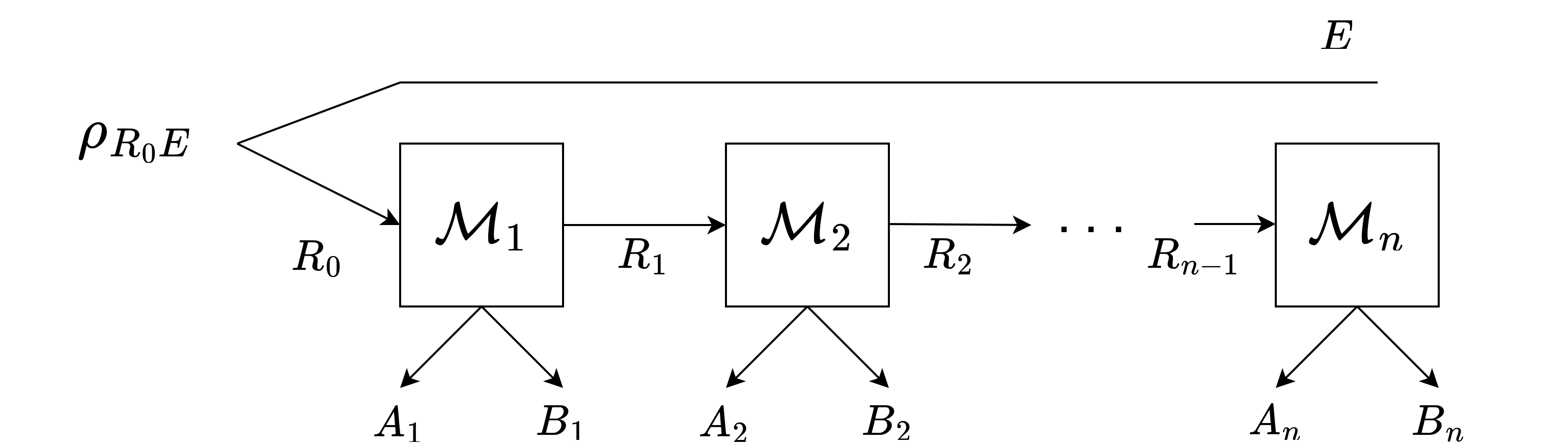}
    \caption{The setting for entropy accumulation and Theorem \ref{th:approx_EAT}. For $k \in [n]$, the channels $\cM_k$ are repeatedly applied to the registers $R_{k-1}$ to produce the ``secret'' information $A_k$ and the side information $B_k$.}
    \label{fig:EAT_setup}
\end{figure}

In general, it is very difficult to estimate the smooth min-entropy of states produced during cryptographic protocols. The entropy accumulation theorem (EAT) \cite{Dupuis20} provides a tight and simple lower bound for the smooth min-entropy $H^{\epsilon}_{\min}(A_1^n | B_1^n E)_{\rho}$ of sequential processes, under certain Markov chain conditions. The state $\rho_{A_1^n  B_1^n E}$ in the setting for EAT is produced by a sequential process of the form shown in Figure \ref{fig:EAT_setup}. The parties implementing the protocol begin with the registers $R_0$ and $E$. In the context of a cryptographic protocol, the register $R_0$ is usually held by the honest parties, whereas the register $E$ is held by the adversary. Then, in each round $k \in [n]$ of the process, a channel $\cM_{k} : R_{k-1} \rightarrow A_k B_k R_k$ is applied on the register $R_{k-1}$ to produce the registers $A_k, B_k$ and $R_k$. The registers $A_1^n$ usually contain a partially secret raw key and the registers $B_1^n$ contain the side information about $A_1^n$ revealed to the adversary during the protocol. EAT requires that for every $k \in [n]$, the side information $B_k$ satisfies the Markov chain $A_1^{k-1} \leftrightarrow B_1^{k-1}E \leftrightarrow B_k$, that is, the side information revealed in the $k$th round does not reveal anything more about the secret registers of the previous rounds than was already known to the adversary through $B_1^{k-1}E$. Under this assumption, EAT provides the following lower bound for the smooth min-entropy
\begin{align}
    H^{\epsilon}_{\min}(A_1^n | B_1^n E)_{\rho} \geq \sum_{k=1}^n \inf_{\omega_{R_{k-1} \tilde{R}}} H(A_k | B_k \tilde{R})_{\cM_k (\omega)} - c\sqrt{n}
\end{align}
where the infimum is taken over all input states to the channels $\cM_k$ and $c>0$ is a constant depending only on $|A|$ (size of registers $A_k$) and $\epsilon$. We will state and prove an approximate version of EAT. Consider the sequential process in Figure \ref{fig:EAT_setup} again. Now, suppose that the channels $\cM_k$ do not necessarily satisfy the Markov chain conditions mentioned above, but each of the channels $\cM_k$ can be $\epsilon$-approximated by $\cM'_k$ which satisfy the Markov chain $A_1^{k-1} \leftrightarrow B_1^{k-1}E \leftrightarrow B_k$ for a certain collection of inputs. The approximate entropy accumulation theorem below provides a lower bound on the smooth min-entropy in such a setting. The proof of this theorem again uses the technique based on the smooth min-entropy triangle inequality developed in the previous section. In this setting too, we have a chain of approximations. For each $k \in [n]$, we have 
\begin{align*}
    \rho_{A_1^k B_1^k E} &= \tr_{R_k} \circ \cM_k \rndBrk{ \cM_{k-1} \circ \cdots \circ \cM_1(\rho^{(0)}_{R_0 E})} \\
    &\approx_{\epsilon} \tr_{R_k} \circ \cM'_k \rndBrk{ \cM_{k-1} \circ \cdots \circ \cM_1(\rho^{(0)}_{R_0 E})} := \sigma^{(k)}_{A_1^k B_1^k E}.
\end{align*}
According to the Markov chain assumption for the channels $\cM'_k$, the state $\sigma^{(k)}_{A_1^k B_1^k E}$, satisfies the Markov chain $A_1^{k-1} \leftrightarrow B_1^{k-1}E \leftrightarrow B_k$. Therefore, we expect that the register $A_k$ adds some entropy to the smooth min-entropy $H^{\epsilon'}_{\min}(A_1^n | B_1^n E)_{\rho}$ and that the information leaked through $B_1^n$ is not too large. We show that this is indeed the case in the approximate entropy accumulation theorem.\\

The approximate entropy accumulation theorem can be used to analyse and prove the security of cryptographic protocols under certain imperfections. For example, the entropy accumulation theorem can be used to prove the security of sequential device independent quantum key distribution (DIQKD) protocols \cite{Friedman18}. In these protocols, the side information $B_k$ produced during each of the rounds are the questions used during the round to play a non-local game, like the CHSH game. In the ideal case, these questions are sampled independently of everything which came before. As an example of an imperfection, we can imagine that some physical effect between the memory storing the secret bits $A_1^{k-1}$ and the device producing the questions may lead to a small correlation between the side information produced during the $k$th round and the secret bits $A_1^{k-1}$ (also see \cite{Jain23,Tan23}). The approximate entropy accumulation theorem below can be used to prove security of DIQKD under such imperfections. We do not, however, pursue this example here and leave the applications of this theorem for future work. In Sec. \ref{sec:testing}, we modify this Theorem to incorporate testing for EAT. 

\begin{theorem}
    For $k \in [n]$, let the registers $A_k$ and $B_k$ be such that $|A_k| = |A|$ and $|B_k| = |B|$. For $k \in [n]$, let $\cM_k$ be channels from $R_{k-1} \rightarrow R_{k} A_k B_k$ and
    \begin{align}
        \rho_{A_1^n B_1^n E} = \tr_{R_n} \circ \cM_n \circ \cdots \circ \cM_1 (\rho^{(0)}_{R_0 E})
        \label{eq:real_st_eq}
    \end{align}
    be the state produced by applying these maps sequentially. Suppose the channels $\cM_k$ are such that for every $k \in [n]$, there exists a channel $\cM'_{k}$ from $R_{k-1} \rightarrow R_{k} A_k B_k$ such that
    \begin{enumerate}
        \item $\cM'_{k}$ $\epsilon$-approximates $\cM_{k}$ in the diamond norm: 
        \begin{align}
            \frac{1}{2}\norm{\cM_k- \cM'_k}_{\diamond} \leq \epsilon 
            \label{eq:map_approx}
        \end{align}
        \item For every choice of a sequence of channels $\cN_i \in \{ \cM_i, \cM'_i \}$ for $i \in [k-1]$, the state $\cM'_k \circ \cN_{k-1} \circ \cdots \circ \cN_1 (\rho^{(0)}_{R_0 E})$ satisfies the Markov chain
        \begin{align}
            A_1^{k-1} \leftrightarrow B_1^{k-1}E \leftrightarrow B_k.
            \label{eq:approx_map_Markov_ch}
        \end{align}
    \end{enumerate} 
        Then, for $0<\delta, \epsilon_1, \epsilon_2<1$ such that $\epsilon_1 + \epsilon_2 <1$, $\alpha \in \rndBrk{1, 1+ \frac{1}{\log(1+2|A|)}}$ and $\beta >1$, we have
        \begin{align*}
            H_{\min}^{\epsilon_1+\epsilon_2}(A_1^n|B_1^n E)_{\rho} \geq \sum_{k=1}^n \inf_{\omega_{R_{k-1} \tilde{R}}} & H(A_k | B_k \tilde{R})_{\cM'_k(\omega)} - n(\alpha-1) \log^2(1+ 2 |A|) \\
            &- \frac{\alpha}{\alpha-1} n \log\rndBrk{1 + \delta \rndBrk{4^{\frac{\alpha-1}{\alpha}\log(|A||B|)}-1}} \\
            & - \frac{\alpha}{\alpha-1} n z_{\beta}(\epsilon, \delta) - \frac{1}{\alpha-1}\rndBrk{g_1(\epsilon_2, \epsilon_1)+ \frac{\alpha g_0(\epsilon_1)}{\beta-1}}.
            \numberthis
            \label{eq:approx_EAT}
        \end{align*}
        where 
        \begin{align}
            z_\beta(\epsilon, \delta) := \frac{\beta+1}{\beta-1}\log\rndBrk{\rndBrk{1+ \sqrt{(1-\delta)\epsilon}}^{\frac{\beta}{\beta+1}} + \rndBrk{\frac{\sqrt{(1-\delta)\epsilon}}{\delta^\beta}}^{{\frac{1}{\beta+1}}}}
            \label{eq:def_z_eps_delta}
        \end{align}
        and $g_1(x,y) = - \log(1- \sqrt{1-x^2}) - \log(1-y^2)$ and the infimum in Eq. \ref{eq:approx_EAT} is taken over all input states $\omega_{R_{k-1} \tilde{R}}$ to the channels $\cM'_k$ where $\tilde{R}$ is a reference register ($\tilde{R}$ can be chosen isomorphic to $R_{k-1}$).
        \label{th:approx_EAT}
\end{theorem}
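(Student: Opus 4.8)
The plan is to reuse the two-stage strategy of Section~\ref{sec:approx_ind_reg}: first replace the real state $\rho$ by a virtual state built from the primed channels, paying a smooth max-relative entropy penalty through Lemma~\ref{lemm:Hmin_rho_to_Halpha_sigma_using_Dmax}, and then lower bound the Rényi entropy of the virtual state using the Markov structure and the entropy-accumulation chain rule. I would introduce two virtual states: the Markovian reference $\sigma = \tr_{R_n}\circ\cM'_n\circ\cdots\circ\cM'_1(\rho^{(0)}_{R_0E})$, which by Assumption~2 (taking $\cN_i=\cM'_i$ for all $i$) satisfies $A_1^{k-1}\leftrightarrow B_1^{k-1}E\leftrightarrow B_k$ for every $k$, and a \emph{flattened} reference $\hat\sigma=\tr_{R_n}\circ\hat\cM'_n\circ\cdots\circ\hat\cM'_1(\rho^{(0)}_{R_0E})$, where $\hat\cM'_k=(1-\delta)\cM'_k+\delta\mathcal{R}_k$ mixes in, with weight $\delta$, a channel $\mathcal{R}_k$ whose $A_kB_k$-output is maximally mixed. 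The flattening is what makes $\rho\ll\hat\sigma$ and keeps the sandwiched Rényi divergence below finite, which is essential because the diamond-norm closeness of $\cM_k$ and $\cM'_k$ does not prevent $\cM_k$ from creating support outside $\cM'_k$.

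With $\hat\sigma$ as reference, Lemma~\ref{lemm:Hmin_rho_to_Halpha_sigma_using_Dmax} gives, for $\alpha\in(1,1+\tfrac1{\log(1+2|A|)})$ and $\epsilon_1+\epsilon_2<1$,
\[
H_{\min}^{\epsilon_1+\epsilon_2}(A_1^n|B_1^n E)_{\rho} \geq \tilde{H}^{\uparrow}_{\alpha}(A_1^n|B_1^n E)_{\hat\sigma} - \frac{\alpha}{\alpha-1}\,D^{\epsilon_1}_{\max}\!\left(\rho_{A_1^n B_1^n E}\,\middle\|\,\hat\sigma_{A_1^n B_1^n E}\right) - \frac{g_1(\epsilon_2,\epsilon_1)}{\alpha-1},
\]
so it remains to handle the two terms on the right. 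For the entropy term I would chain-rule $\tilde{H}^{\uparrow}_{\alpha}$: apply the sandwiched-Rényi entropy accumulation chain rule of \cite{Dupuis20} to the Markovian $\sigma$ to reduce it to the single-round quantities $\inf_\omega\tilde H^{\uparrow}_{\alpha}(A_k|B_k\tilde R)_{\cM'_k(\omega)}$, convert each to a von~Neumann entropy via \cite[Lemma~B.9]{Dupuis20} (legitimate exactly in the stated $\alpha$-range, costing $-n(\alpha-1)\log^2(1+2|A|)$), and relate $\tilde H^{\uparrow}_{\alpha}(\hat\sigma)$ back to $\tilde H^{\uparrow}_{\alpha}(\sigma)$ by a direct sandwiched-Rényi estimate controlling the at-most-$\delta$ flat contamination; the latter produces the dimensional penalty $\frac{\alpha}{\alpha-1}n\log\!\left(1+\delta(4^{\frac{\alpha-1}{\alpha}\log(|A||B|)}-1)\right)$, where the exponent $\frac{\alpha-1}{\alpha}$ is exactly the $\alpha'$ of the sandwiched divergence and $4^{\cdot}$ encodes the dimension $|A||B|$ of the flattened output.

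The main obstacle is the divergence term $D^{\epsilon_1}_{\max}(\rho\|\hat\sigma)$, and this is where Assumption~2 in its full strength --- the Markov chain for \emph{every} mixed sequence $\cN_i\in\{\cM_i,\cM'_i\}$, not merely the all-primed one --- is indispensable. I would first pass from the smooth max to a sandwiched Rényi divergence of order $\beta>1$ via a smoothing bound $D^{\epsilon_1}_{\max}(\rho\|\hat\sigma)\leq \tilde D_{\beta}(\rho\|\hat\sigma)+\frac{g_0(\epsilon_1)}{\beta-1}$, which is the source of the $\frac{\alpha g_0(\epsilon_1)}{(\alpha-1)(\beta-1)}$ term. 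The hard part is then to show $\tilde D_{\beta}(\rho\|\hat\sigma)\leq n\,z_\beta(\epsilon,\delta)$: a clean per-round channel-divergence chain rule (in the spirit of \cite{Fawzi21}) is too weak here, so I would instead swap $\cM_k\to\hat\cM'_k$ one round at a time and bound each increment by the single-round sandwiched divergence of the two outputs, which the diamond closeness $\tfrac12\norm{\cM_k-\cM'_k}_\diamond\leq\epsilon$ together with the $\delta$-floor caps at $z_\beta(\epsilon,\delta)$ (the Hölder exponents $\tfrac{\beta}{\beta+1},\tfrac1{\beta+1}$ and the $\sqrt{(1-\delta)\epsilon}/\delta^\beta$ inside $z_\beta$ come precisely from estimating one round's $\tilde D_\beta$ from trace-distance closeness against a reference carrying a $\delta$-mass floor). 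At each intermediate hybrid the primed channel is applied on top of a mixed prefix, so it is Assumption~2 for mixed sequences that certifies the conditional-independence structure needed to keep these per-round estimates from compounding across rounds.

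Finally I would assemble the three estimates into the displayed triangle inequality; the penalties line up termwise with the claimed bound, and the free parameters $\alpha,\beta,\delta,\epsilon_1,\epsilon_2$ are left unoptimised in the statement. I expect the genuinely delicate point to be the second half of the third paragraph: controlling the global $\tilde D_\beta(\rho\|\hat\sigma)$ by $n\,z_\beta$ rather than by something that degrades with $n$, since this is exactly the place where the naive triangle-inequality/channel-divergence approach fails and where one must exploit both the uniform diamond closeness and the Markov structure of the intermediate mixed-sequence states.
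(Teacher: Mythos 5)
Your skeleton matches the paper's (entropic triangle inequality via Lemma~\ref{lemm:Hmin_rho_to_Halpha_sigma_using_Dmax}, a per-round channel-divergence bound telescoped over the rounds, then an EAT-style chain rule on the auxiliary state), but your construction of the auxiliary state breaks at the crucial quantitative step. Mixing $\cM'_k$ with a flattening channel $\mathcal{R}_k$ makes $\frac{1}{2}\norm{\cM_k - \hat{\cM}'_k}_{\diamond}$ as large as $(1-\delta)\epsilon + \delta$, since $\norm{\cM_k - \mathcal{R}_k}_{\diamond}$ is of order one; the per-round bound (Lemma~\ref{lemm:Dsharp_close_st_bd}) needs \emph{both} a $D_{\max}$ floor \emph{and} trace-distance closeness, and with $\epsilon$ replaced by $\epsilon + 2\delta$ the resulting analogue of $z_\beta$ does not vanish as $\epsilon \rightarrow 0$ for any choice of $\delta$ (the term $\sqrt{\delta}/\delta^{\beta}$ blows up as $\delta \rightarrow 0$, while for fixed $\delta$ the bound stays bounded away from zero). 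The paper flags exactly this failure mode in Section~\ref{sec:approx_EAT_lim_and_impr}, and for this reason mixes with the original channel instead, $\cM^{\delta}_k = (1-\delta)\cM'_k + \delta \cM_k$, which keeps the diamond distance at $(1-\delta)\epsilon$ (Eq.~\ref{eq:N_mixedM_dist}) while still providing the floor $D_{\max}(\cM_k \| \cM^{\delta}_k) \leq \log\frac{1}{\delta}$ --- your stated motivation for flattening (absolute continuity of $\rho$ with respect to the reference) is achieved by this choice as well. Relatedly, your plan to ``swap one round at a time and bound each increment by the single-round sandwiched divergence'' is not a valid step: divergences admit no hybrid/triangle argument, and the chain rule for $\tilde{D}_\beta$ holds only with the \emph{regularized} channel divergence, which diamond closeness does not control. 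The paper telescopes with $D^{\#}_{\beta}$ precisely because it obeys a single-letter chain rule \cite[Proposition 4.5]{Fawzi21}, and only passes to $\tilde{D}_\beta \leq D^{\#}_\beta$ at the end.

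The second gap is in the entropy term. Because the paper's auxiliary state uses the bad channel with probability $\delta$ in each round, it is \emph{not} Markovian, and there is no quantitative bridge of the kind you invoke between $\tilde{H}^{\uparrow}_{\alpha}$ of the mixed-channel state and of the all-primed state: the ``flat contamination'' compounds to order $n\delta$ in trace distance across rounds, and no Lipschitz-type estimate for $\tilde{H}^{\uparrow}_{\alpha}$ rescues this. The paper instead introduces i.i.d.\ classical flags $C_k$ recording which channel fired, conditions on $c_1^n$, shows each good round accumulates $h_k = \inf_{\omega} \tilde{H}^{\downarrow}_{\alpha}(A_k|B_k \tilde{R}_{k-1})_{\cM'_k(\omega)}$ via Corollary~\ref{cor:Markoc_ch_opt_Halpha_bd} while each bad round costs $s = \log(|A||B|^2)$ by dimension bounds, and then averages $2^{\frac{1-\alpha}{\alpha}(\cdot)}$ over the flags; it is this computation --- not a comparison of two reference states --- that produces the $\frac{\alpha}{\alpha-1} n \log\rndBrk{1+\delta\rndBrk{4^{\frac{\alpha-1}{\alpha}\log(|A||B|)}-1}}$ penalty. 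This is also where Assumption~2 for arbitrary mixed sequences is genuinely used (the conditional state given $c_1^{k-1}$ arises from a mixed prefix of $\cM_i$'s and $\cM'_i$'s), whereas you located its role in the divergence bound, where in fact the $D^{\#}$ telescoping uses no Markov structure at all. Note finally that in your construction the mixed prefixes contain $\mathcal{R}_i \notin \{\cM_i, \cM'_i\}$, so Assumption~2 as stated would not even certify the Markov chains your chain-rule step requires.
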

For the choice of $\beta = 2$, $\delta = \epsilon^{\frac{1}{8}}$, we have
\begin{align*}
    z_2(\epsilon, \delta) \leq 3 \log\rndBrk{\rndBrk{1+\epsilon^{\frac{1}{2}}}^{\frac{2}{3}} + \epsilon^{\frac{1}{12}}}.
\end{align*}
We also have that
\begin{align*}
    \log\rndBrk{1 + \delta 2^{\frac{\alpha-1}{\alpha} 2 \log(|A||B|)}} \leq (|A||B|)^2\epsilon^{\frac{1}{8}}.
\end{align*}
Finally, if we define $\epsilon_r := (|A||B|)^2 \epsilon^{\frac{1}{8}} + 3 \log\rndBrk{\rndBrk{1+\epsilon^{\frac{1}{2}}}^{\frac{2}{3}} + \epsilon^{\frac{1}{12}}}$, and choose $\alpha = \sqrt{\epsilon_r}$, we get the bound 
\begin{align*}
    H_{\min}^{\epsilon_1+\epsilon_2}(A_1^n|B_1^n E)_{\rho} \geq \sum_{k=1}^n \inf_{\omega_{R_k \tilde{R}_k}} & H(A_k | B_k \tilde{R}_k)_{\cM'_k(\omega_{R_k \tilde{R}_k})} \\
    &- n\sqrt{\epsilon_r} (\log^2(1+ 2 |A|) + 2) - \frac{1}{\sqrt{\epsilon_r}}\rndBrk{g_1(\epsilon_2, \epsilon_1)+ 2 g_0(\epsilon_1)} \numberthis
    \label{eq:approx_EAT_eps_only_bd}
\end{align*}
The entropy loss per round in the above bound behaves as $\sim \epsilon^{\frac{1}{24}}$. This dependence on $\epsilon$ is indeed very poor. In comparison, we can carry out a similar proof argument for classical probability distributions to get a dependence of $O(\sqrt{\epsilon})$ (Theorem \ref{th:approx_cl_EAT}). The exponent of $\epsilon$ in our bound seems to be almost a factor of $12$ off from the best possible bound. Roughly speaking, while carrying out the proof classically, we can bound the relevant channel divergences in the proof by $O\rndBrk{\epsilon}$, whereas in Eq. \ref{eq:approx_EAT_eps_only_bd}, we were only able to bound the channel divergence by $\sim \epsilon^{1/12}$. This leads to the deterioration of performance we see here as compared to the classical case. We will discuss this further in Sec. \ref{sec:approx_EAT_lim_and_impr}. \\


In order to prove this theorem, we will use a channel divergence based chain rule. Recently proven chain rules for $\alpha$-R\'enyi relative entropy \cite[Corollary 5.2]{Fawzi21} state that for $\alpha>1$ and states $\rho_A$ and $\sigma_A$, and channels $\mathcal{E}_{A \rightarrow B}$ and $\mathcal{F}_{A \rightarrow B}$, we have
\begin{align}
    \tilde{D}_{\alpha}(\mathcal{E}_{A \rightarrow B}(\rho_{A})|| \mathcal{F}_{A \rightarrow B}(\sigma_{A})) \leq \tilde{D}_{\alpha}(\rho_{A}|| \sigma_{A}) + \tilde{D}_{\alpha}^{\text{reg}}(\mathcal{E}_{A \rightarrow B} || \mathcal{F}_{A \rightarrow B})
    \label{eq:rel_ent_ch_rule}
\end{align}
where $\tilde{D}_{\alpha}^{\text{reg}}(\mathcal{E}_{A \rightarrow B} || \mathcal{F}_{A \rightarrow B}) := \lim_{n \rightarrow \infty} \frac{1}{n} \tilde{D}_{\alpha}(\mathcal{E}_{A \rightarrow B}^{\otimes n} || \mathcal{F}_{A \rightarrow B}^{\otimes n})$ and $\tilde{D}_{\alpha}(\cdot || \cdot)$ is the channel divergence. \\

Now observe that if we were guaranteed that for the maps in Theorem \ref{th:approx_EAT} above, $\tilde{D}_{\alpha}^{\text{reg}}(\cM_k || \cM'_k) \leq \epsilon$ for every $k$ for some $\alpha >1$. Then, we could use the chain rule in Eq. \ref{eq:rel_ent_ch_rule} as follows
\begin{align*}
    \tilde{D}_{\alpha}(\cM_n & \circ \cdots \circ \cM_1(\rho^{(0)}_{R_0 E}) || \cM'_n \circ \cdots \circ \cM'_1(\rho^{(0)}_{R_0 E})) \\
    &\leq \tilde{D}_{\alpha}(\cM_{n-1} \circ \cdots \circ \cM_1(\rho^{(0)}_{R_0 E}) || \cM'_{n-1} \circ \cdots \circ \cM'_1(\rho^{(0)}_{R_0 E})) + \tilde{D}_{\alpha}^{\text{reg}}(\cM_n || \cM'_n) \\
    &\leq \cdots \\
    &\leq \tilde{D}_{\alpha}(\rho^{(0)}_{R_0 E}||\rho^{(0)}_{R_0 E}) + \sum_{k=1}^n \tilde{D}_{\alpha}^{\text{reg}}(\cM_k || \cM'_k) \\
    &\leq n \epsilon.
\end{align*}
Once we have the above result we can simply use the well known relation between smooth max-relative entropy and $\alpha$-R\'enyi relative entropy \cite[Proposition 6.5]{TomamichelBook16} to get the bound
\begin{align*}
    {D}_{\max}^{\epsilon'}(\cM_n & \circ \cdots \circ \cM_1(\rho^{(0)}_{R_0 E}) || \cM'_n \circ \cdots \circ \cM'_1(\rho^{(0)}_{R_0 E})) \\
    & \leq \tilde{D}_{\alpha}(\cM_n \circ \cdots \circ \cM_1(\rho^{(0)}_{R_0 E}) || \cM'_n \circ \cdots \circ \cM'_1(\rho^{(0)}_{R_0 E})) + \frac{g_0(\epsilon')}{\alpha-1}\\
    &\leq n\epsilon + O(1).
\end{align*}
This bound can subsequently be used in Lemma \ref{lemm:Hmin_rho_to_Halpha_sigma_using_Dmax} to relate the smooth min-entropy of the real state $\cM_n \circ \cdots \circ \cM_1(\rho^{(0)}_{R_0 E})$ with the $\alpha-$R\'enyi conditional entropy of the auxiliary state $\cM'_n \circ \cdots \circ \cM'_1(\rho^{(0)}_{R_0 E})$, for which we can use the original entropy accumulation theorem. \\

In order to prove Theorem \ref{th:approx_EAT}, we broadly follow this idea. However, the condition $\norm{\cM_k- \cM'_k}_{\diamond} \leq \epsilon$ does not lead to any kind of bound on $\tilde{D}_{\alpha}^{\text{reg}}$ or any other channel divergence. We will get around this issue by instead using mixed channels $\cM^{\delta}_{k} := (1-\delta)\cM'_{k} + \delta \cM_{k}$. Also, instead of trying to bound channel divergence in terms of $\tilde{D}_{\alpha}^{\text{reg}}$, we will bound the $D^\#_{\alpha}$ (defined in the next section) channel divergence and use its chain rule. We develop the relevant $\alpha$-R\'enyi divergence bounds for this divergence in the next two subsections and then prove the theorem above in Sec \ref{sec:proof_approx_EAT}. 

\subsection{Divergence bound for approximately equal states}
\label{sec:div_bd_approx_eq_st}

We will use the sharp R\'enyi divergence $D^{\#}_{\alpha}$ defined in Ref. \cite{Fawzi21} (see \cite{Bergh21} for the following equivalent definition) in this section. For $\alpha > 1$ and two positive operators $P$ and $Q$, it is defined 
\begin{align}
    D^{\#}_{\alpha}(P || Q) := \min_{A \geq P} \hat{D}_{\alpha} (A || Q)
\end{align}
where $\hat{D}_{\alpha} (A || Q)$ is the $\alpha$-R\'enyi geometric divergence \cite{Matsumoto18}. For $\alpha>1$, it is defined as 
\begin{align}
    \hat{D}_{\alpha} (A || Q) = \begin{cases}
        \frac{1}{\alpha-1}\log \tr \rndBrk{Q \rndBrk{Q^{-\frac{1}{2}} A Q^{-\frac{1}{2}}}^\alpha} & \text{if } A \ll Q\\
        \infty &\text{otherwise.}
    \end{cases}
\end{align} 
$A$ in the optimisation above is any operator $A \geq P$. In general, such an operator $A$ is unnormalised. We will prove a bound on $D^{\#}_{\alpha}$ between two states in terms of the distance between them and their max-relative entropy. In order to prove this bound, we require the following simple generalisation of the pinching inequality (see for example \cite[Sec. 2.6.3]{TomamichelBook16}). 

\begin{lemma}[Asymmetric pinching]
    For $t>0$, a positive semidefinite operator $X \geq 0 $ and orthogonal projections $\Pi$ and $\Pi_\perp = \Id - \Pi$, we have that 
    \begin{align}
        X \leq (1+t) \Pi X \Pi + \rndBrk{1+ \frac{1}{t}}\Pi_\perp X \Pi_\perp.
    \end{align}
    \label{lemm:asymm_pinching}
\end{lemma}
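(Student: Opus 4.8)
The plan is to reduce the inequality to a single positivity statement about a congruence of $X$. First I would expand $X$ into its four ``blocks'' with respect to the two projections by inserting $\Id = \Pi + \Pi_\perp$ on each side, giving
\begin{align*}
    X = \Pi X \Pi + \Pi_\perp X \Pi_\perp + \left(\Pi X \Pi_\perp + \Pi_\perp X \Pi\right).
\end{align*}
The two diagonal blocks $\Pi X \Pi$ and $\Pi_\perp X \Pi_\perp$ are exactly the terms appearing (with coefficients $1+t$ and $1+\tfrac1t$) on the right-hand side, so the whole content of the lemma is to dominate the off-diagonal remainder $\Pi X \Pi_\perp + \Pi_\perp X \Pi$ by $t\,\Pi X \Pi + \tfrac1t\,\Pi_\perp X \Pi_\perp$.

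The key step is to obtain this cross-term bound from one congruence. I would introduce the self-adjoint operator $S := \sqrt{t}\,\Pi - \tfrac{1}{\sqrt t}\,\Pi_\perp$ and note that, since $X \geq 0$, the congruence $SXS$ is positive semidefinite: for any vector $v$, $\langle v, SXS\,v\rangle = \langle Sv, X\,Sv\rangle \geq 0$ because $S = S^\dagger$. Expanding $SXS$ using $\Pi^2 = \Pi$, $\Pi_\perp^2 = \Pi_\perp$ and $\Pi \Pi_\perp = \Pi_\perp \Pi = 0$ yields
\begin{align*}
    0 \leq SXS = t\,\Pi X \Pi + \tfrac1t\,\Pi_\perp X \Pi_\perp - \Pi X \Pi_\perp - \Pi_\perp X \Pi,
\end{align*}
which rearranges precisely to $\Pi X \Pi_\perp + \Pi_\perp X \Pi \leq t\,\Pi X \Pi + \tfrac1t\,\Pi_\perp X \Pi_\perp$. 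Substituting this into the block decomposition above immediately gives $X \leq (1+t)\,\Pi X \Pi + \bigl(1+\tfrac1t\bigr)\,\Pi_\perp X \Pi_\perp$.

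There is essentially no real obstacle here; the only nontrivial move is guessing the correct weighting in $S$. One must scale $\Pi$ and $\Pi_\perp$ by $\sqrt t$ and $1/\sqrt t$ respectively (a completing-the-square / Young-type choice) so that the two cross terms emerge with coefficient exactly $1$ while the diagonal terms pick up the asymmetric weights $t$ and $1/t$. Everything else is a routine block expansion. I would also remark that the symmetric choice $t=1$ recovers the standard pinching inequality with its factor of $2$, which explains why this is the natural asymmetric generalisation.
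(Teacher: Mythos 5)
Your proof is correct and is essentially the same argument as the paper's: your congruence $SXS \geq 0$ with $S = \sqrt{t}\,\Pi - t^{-1/2}\Pi_\perp$ is exactly the paper's conjugation of the block matrix of $X$ by $\mathrm{diag}(-t^{1/2}, t^{-1/2})$, just written in projector notation rather than block form. The reduction to dominating the off-diagonal part $\Pi X \Pi_\perp + \Pi_\perp X \Pi$ matches the paper's rearrangement as well, so there is nothing to add.
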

\begin{proof}
    We will write the positive matrix $X$ as the block matrix
    \begin{align*}
        X= \begin{pmatrix}
            X_1 \ X_2 \\
            X_2^\ast \ X_3
        \end{pmatrix}
    \end{align*}
    where the blocks are partitioned according to the direct sum $\im(\Pi) \oplus \im(\Pi_\perp)$. Then, the statement in the Lemma is equivalent to proving that 
    \begin{align*}
        \begin{pmatrix}
            X_1 \ & X_2 \\
            X_2^\ast \ & X_3
        \end{pmatrix} \leq 
        \begin{pmatrix}
            (1+t) X_1 \ & 0 \\
            0 \ & 0
        \end{pmatrix} + 
        \begin{pmatrix}
            0 \ & 0 \\
            0 \ & \rndBrk{1+ \frac{1}{t}} X_3
        \end{pmatrix}
    \end{align*}
    which is equivalent to proving that
    \begin{align*}
        0 \leq \begin{pmatrix}
            t X_1 \ & - X_2 \\
            - X_2^\ast \ & \frac{1}{t} X_3
        \end{pmatrix}.
    \end{align*}
    This is true because
    \begin{align*}
        \begin{pmatrix}
            t X_1 \ & - X_2 \\
            - X_2^\ast \ & \frac{1}{t} X_3
        \end{pmatrix} = 
        \begin{pmatrix}
            - t^{1/2} \ & 0 \\
            0 \ &  t^{-1/2}
        \end{pmatrix}
        \begin{pmatrix}
            X_1 \ & X_2 \\
            X_2^\ast \ & X_3
        \end{pmatrix}
        \begin{pmatrix}
            - t^{1/2} \ & 0 \\
            0 \ &  t^{-1/2}
        \end{pmatrix} \geq 0 
    \end{align*}
    since $X \geq 0$.
\end{proof}

\begin{lemma}
    Let $\epsilon>0$ and $\alpha\in (1, \infty)$, $\rho$ and $\sigma$ be two normalised quantum states on the Hilbert space $\mathbb{C}^n$ such that $\frac{1}{2}\norm{\rho- \sigma}_1 \leq \epsilon$ and also $D_{\max}(\rho||\sigma) \leq d < \infty$, then we have the bound
    \begin{align}
        D^{\#}_{\alpha}(\rho || \sigma) \leq \frac{\alpha+1}{\alpha-1}\log\rndBrk{(1+ \sqrt{\epsilon})^{\frac{\alpha}{\alpha+1}} + \rndBrk{2^{\alpha d} \sqrt{\epsilon}}^{\frac{1}{\alpha+1}}}.
        \label{eq:Dsharp_close_st_bd}
    \end{align}
    \label{lemm:Dsharp_close_st_bd}
\end{lemma}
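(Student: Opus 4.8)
The plan is to exploit the variational definition $D^{\#}_{\alpha}(\rho \| \sigma) = \min_{A \geq \rho} \hat{D}_{\alpha}(A \| \sigma)$: it suffices to exhibit one operator $A \geq \rho$ for which $\hat{D}_{\alpha}(A \| \sigma)$ is at most the claimed quantity. I would build $A$ by an asymmetric pinching adapted to the ``good/bad'' splitting induced by the hypotheses. Since finiteness of $d$ forces $\rho \ll \sigma$, I may assume $\sigma > 0$ after restricting to $\mathrm{supp}(\sigma)$; write $\tilde{\rho} := \sigma^{-1/2} \rho \sigma^{-1/2}$ and let $\Pi$ be the spectral projector onto the positive part of $\rho - (1+\sqrt{\epsilon})\sigma$, with $\Pi_\perp = \Id - \Pi$.

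First I would record the two elementary facts the hypotheses buy us. On the good subspace, $\Pi_\perp(\rho - (1+\sqrt{\epsilon})\sigma)\Pi_\perp \leq 0$, hence $\Pi_\perp \rho \Pi_\perp \leq (1+\sqrt{\epsilon}) \Pi_\perp \sigma \Pi_\perp$; on the bad subspace, $D_{\max}(\rho\|\sigma) \leq d$ gives $\Pi \rho \Pi \leq 2^d \Pi \sigma \Pi$ (equivalently $\tilde{\rho} \leq 2^d \Id$). The weight of the bad subspace is controlled by a Markov-type argument exactly as in Lemma~\ref{lemm:small_tr_norm_implies_small_Dmax}: since $\Pi$ projects onto the positive part, $\tr((\rho - (1+\sqrt{\epsilon})\sigma)\Pi) \geq 0$ gives $\tr(\rho \Pi) \geq (1+\sqrt{\epsilon})\tr(\sigma \Pi)$, while $\tr((\rho-\sigma)\Pi) \leq \tfrac{1}{2}\|\rho-\sigma\|_1 \leq \epsilon$ gives $\tr(\rho\Pi) - \tr(\sigma\Pi) \leq \epsilon$; subtracting yields $\sqrt{\epsilon}\,\tr(\sigma\Pi) \leq \epsilon$, i.e. $\tr(\sigma\Pi) \leq \sqrt{\epsilon}$.

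Next I would apply the asymmetric pinching of Lemma~\ref{lemm:asymm_pinching} with parameter $t>0$ to $\rho$ to obtain the witness $A := (1+t)\Pi_\perp \rho \Pi_\perp + (1+\tfrac{1}{t})\Pi \rho \Pi \geq \rho$, which is block diagonal with respect to $\Pi$. The goal is then to prove $\tr\big(\sigma(\sigma^{-1/2} A \sigma^{-1/2})^{\alpha}\big) \leq \big[(1+t)(1+\sqrt{\epsilon})\big]^{\alpha} + (1+\tfrac{1}{t})^{\alpha}\,2^{\alpha d}\sqrt{\epsilon}$, where the bad block (using $\Pi\rho\Pi \leq 2^d \Pi\sigma\Pi$ and $\tr(\sigma\Pi) \leq \sqrt{\epsilon}$) should contribute the second summand and the good block (using $\Pi_\perp \rho \Pi_\perp \leq (1+\sqrt{\epsilon})\Pi_\perp\sigma\Pi_\perp$ and $\tr(\sigma\Pi_\perp) \leq 1$) the first. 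With such a bound in hand the proof finishes by a one-variable optimization: the choice $t = (2^{\alpha d}\sqrt{\epsilon})^{1/(\alpha+1)}(1+\sqrt{\epsilon})^{-\alpha/(\alpha+1)}$ balances the two summands and collapses the right-hand side to $\big[(1+\sqrt{\epsilon})^{\alpha/(\alpha+1)} + (2^{\alpha d}\sqrt{\epsilon})^{1/(\alpha+1)}\big]^{\alpha+1}$; taking $\tfrac{1}{\alpha-1}\log$ gives precisely \eqref{eq:Dsharp_close_st_bd}.

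I expect the hard part to be exactly the block decomposition of $\tr(\sigma(\sigma^{-1/2}A\sigma^{-1/2})^{\alpha})$. Although $A = A_g + A_b$ is block diagonal with respect to $\Pi$, conjugation by $\sigma^{-1/2}$ does not respect this splitting when $\Pi$ and $\sigma$ fail to commute, so the functional is not additive over the two blocks and, worse, $A \mapsto \tr(\sigma(\sigma^{-1/2}A\sigma^{-1/2})^{\alpha})$ is not monotone in $A$. The role of the $t$-dependent asymmetric pinching is precisely to absorb this non-commutativity: the cross terms between the good and bad blocks must be dominated at the cost of the factors $(1+t)$ and $(1+\tfrac{1}{t})$, via a Minkowski/triangle-type inequality for the geometric-R\'enyi functional $C \mapsto \tr(\sigma(\sigma^{-1/2}C\sigma^{-1/2})^{\alpha})^{1/\alpha}$. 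Establishing this decoupling (and verifying it throughout $\alpha \in (1,\infty)$, where $x \mapsto x^{\alpha}$ is no longer operator convex) is the main technical obstacle, and it is what degrades the clean commuting bound $\tfrac{1}{\alpha-1}\log((1+\sqrt{\epsilon})^{\alpha} + 2^{\alpha d}\sqrt{\epsilon})$ to the weaker H\"older-type exponents $\tfrac{\alpha}{\alpha+1}$ and $\tfrac{1}{\alpha+1}$ appearing in the statement.
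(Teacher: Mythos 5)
Your overall architecture matches the paper's proof: both exhibit an explicit witness $A \geq \rho$ built from a good/bad splitting whose bad part has $\sigma$-weight at most $\sqrt{\epsilon}$, invoke the asymmetric pinching of Lemma \ref{lemm:asymm_pinching} with a free parameter $t$, and finish with the same one-variable optimisation (your $t_{\min}$ is exactly the paper's). Your trace argument giving $\tr(\sigma \Pi) \leq \sqrt{\epsilon}$ for your choice of $\Pi$ is also correct. However, there is a genuine gap at the decisive step, and you have located it yourself without closing it: with $\Pi$ the projector onto the positive part of $\rho - (1+\sqrt{\epsilon})\sigma$, your witness $A = (1+t)\Pi_\perp \rho \Pi_\perp + (1+\tfrac{1}{t})\Pi \rho \Pi$ is block diagonal for $\Pi$, but $\Pi$ need not commute with $\sigma$, so $\sigma^{-1/2} A \sigma^{-1/2}$ is not block diagonal, the functional $\tr\big(\sigma(\sigma^{-1/2} A \sigma^{-1/2})^{\alpha}\big)$ does not split over the blocks, and --- as you correctly note --- it is not monotone in $A$ (if it were, the minimisation in the definition of $D^{\#}_{\alpha}$ would be attained at $A=\rho$ and $D^{\#}_{\alpha}$ would collapse to $\hat{D}_{\alpha}$). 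Consequently you cannot even push your block-wise operator bounds $\Pi_\perp \rho \Pi_\perp \leq (1+\sqrt{\epsilon})\Pi_\perp \sigma \Pi_\perp$ and $\Pi \rho \Pi \leq 2^{d}\,\Pi \sigma \Pi$ through the functional, and the ``Minkowski/triangle-type inequality for the geometric-R\'enyi functional'' that you hope will absorb the cross terms is precisely the unproven crux; for $\alpha \in (1,\infty)$ the map $x \mapsto x^{\alpha}$ is neither operator monotone nor (for $\alpha>2$) operator convex, and no such decoupling estimate is supplied or standard.

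The paper's proof sidesteps this entirely by splitting and pinching in the conjugated picture. It takes $\Pi$ to be a spectral projector of $\sigma^{-1/2} P \sigma^{-1/2}$, where $P$ is the positive part of $\rho - \sigma$, proves $\tr(\sigma \Pi_\perp) \leq \sqrt{\epsilon}$ by a Markov-inequality argument on the distribution $q(i) = \bra{x_i}\sigma\ket{x_i}$ (using $\mathbb{E}_q[\lambda_I] = \tr(P) \leq \epsilon$), applies the asymmetric pinching to the positive operator $\sigma^{-1/2}\rho\sigma^{-1/2}$, and then flattens each block to a scalar multiple of a projector, obtaining $\sigma^{-1/2}\rho\sigma^{-1/2} \leq (1+t)(1+\sqrt{\epsilon})\,\Pi + (1+\tfrac{1}{t})\,2^{d}\,\Pi_\perp$. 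Conjugating back defines $A_t := (1+t)(1+\sqrt{\epsilon})\,\sigma^{1/2}\Pi\sigma^{1/2} + (1+\tfrac{1}{t})\,2^{d}\,\sigma^{1/2}\Pi_\perp\sigma^{1/2} \geq \rho$, for which $\sigma^{-1/2} A_t \sigma^{-1/2}$ is exactly a commuting combination of orthogonal projectors; hence $\hat{D}_{\alpha}(A_t \| \sigma)$ is evaluated exactly, with no cross terms and no appeal to monotonicity of the functional, and the stated bound follows from the optimisation over $t$ that you already carried out. If you re-route your argument through this conjugated pinching --- choosing the projector spectral for $\sigma^{-1/2} P \sigma^{-1/2}$ and replacing blocks by scalars before conjugating back --- the remainder of your proposal goes through verbatim.
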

\noindent\textbf{Note:} For a fixed $\alpha \in (1, \infty)$, this upper bound tends to zero as $\epsilon \rightarrow 0$. On the other hand, for a fixed $\epsilon \in (0,1)$, the upper bound tends to infinity as $\alpha \rightarrow 1$ (that is, the bound becomes trivial). In Appendix \ref{sec:better_Dsharp_bd_disc}, we show that a bound of this form for $D^{\#}_{\alpha}$ necessarily diverges for $\epsilon>0$ as $\alpha \rightarrow 1$.
\begin{proof}
    Since, $D_{\max}(\rho||\sigma) < \infty$, we have that $\rho \ll \sigma$. We can assume that $\sigma$ is invertible. If it was not, then we could always restrict our vector space to the subspace $\text{supp}(\sigma)$.\\
    
    Let $\rho- \sigma = P- Q$, where $P \geq 0$ is the positive part of the matrix $\rho- \sigma$ and $Q \geq 0$ is its negative part. We then have that $\tr(P)= \tr(Q) \leq \epsilon$. \\
    
    \noindent Further, let 
    \begin{align}
        \sigma^{-\frac{1}{2}} P \sigma^{-\frac{1}{2}} = \sum_{i=1}^n \lambda_i \ket{x_i} \bra{x_i} \label{eq:eigval_decomp}
    \end{align}
    be the eigenvalue decomposition of $\sigma^{-\frac{1}{2}} P \sigma^{-\frac{1}{2}}$. Define the real vector $q \in \mathbb{R}^n$ as 
    \begin{align*}
        q(i) := \bra{x_i} \sigma \ket{x_i}.
    \end{align*}
    Note that $q$ is a probability distribution. Observe that 
    \begin{align*}
        \mathbb{E}_{I \sim q} \sqBrk{\lambda_I} &= \sum_{i=1}^n \lambda_i \bra{x_i} \sigma \ket{x_i} \\
        &= \tr\rndBrk{\sigma \sum_{i=1}^n \lambda_i \ket{x_i} \bra{x_i}}\\
        &= \tr\rndBrk{\sigma \sigma^{-\frac{1}{2}} P \sigma^{-\frac{1}{2}}} \\
        &= \tr(P)\\
        &\leq \epsilon.
    \end{align*}
    Also, observe that $\lambda_i \geq 0$ for all $i \in [n]$ because $\sigma^{-\frac{1}{2}} P \sigma^{-\frac{1}{2}} \geq 0$. Let's define
    \begin{align}
        S := \{i \in [n]: \lambda_i \leq \sqrt{\epsilon}\}.
        \label{eq:defn_setS}
    \end{align}
    Since, $\lambda_i \geq 0$ for all $i \in [n]$, we can use the Markov inequality to show:
    \begin{align*}
        \Pr_q (I \in S^c) &= \Pr_q (\lambda_I > \sqrt{\epsilon}) \\
        &\leq \frac{\Expect_{I \sim q}\sqBrk{\lambda_I}}{\sqrt{\epsilon}} \\
        &\leq \sqrt{\epsilon}.
    \end{align*}
    Thus, if we define the projectors $\Pi := \sum_{i \in S} \ket{x_i} \bra{x_i}$ and $\Pi_{\perp} := \sum_{i \in S^c} \ket{x_i} \bra{x_i} = \Id - \Pi$, we have 
    \begin{align*}
        \tr(\sigma \Pi_\perp) &= \sum_{i \in S^c} \bra{x_i} \sigma \ket{x_i} \\
        &= \Pr_q (I \in S^c) \\
        &\leq \sqrt{\epsilon}. \numberthis \label{eq:overlap_bd}
    \end{align*}
    Moreover, by the definition of set $S$ (Eq. \ref{eq:defn_setS}) we have
    \begin{align*}
        \Pi \sigma^{-\frac{1}{2}} P \sigma^{-\frac{1}{2}} \Pi &= \sum_{i \in S} \lambda_i \ket{x_i} \bra{x_i} \\
        &\leq \sqrt{\epsilon} \Pi \numberthis 
        \label{eq:goodProj_small}
    \end{align*}
    and using $D_{\max}(\rho || \sigma) \leq d$, we have that 
    \begin{align}
        \sigma^{-\frac{1}{2}} \rho \sigma^{-\frac{1}{2}} \leq 2^d \Id.
        \label{eq:badProj_bd}
    \end{align}
    Now, observe that since $\sigma^{-\frac{1}{2}} \rho \sigma^{-\frac{1}{2}} \geq 0$, for an arbitrary $t >0$, using Lemma \ref{lemm:asymm_pinching} we have 
    \begin{align*}
        \sigma^{-\frac{1}{2}} \rho \sigma^{-\frac{1}{2}} &\leq (1+t) \Pi \sigma^{-\frac{1}{2}} \rho \sigma^{-\frac{1}{2}} \Pi + \rndBrk{1+ \frac{1}{t}} \Pi_\perp \sigma^{-\frac{1}{2}} \rho \sigma^{-\frac{1}{2}} \Pi_\perp \\
        &\leq (1+t) \Pi \rndBrk{\Id + \sigma^{-\frac{1}{2}} P \sigma^{-\frac{1}{2}} }\Pi + \rndBrk{1+ \frac{1}{t}} 2^d \Pi_\perp \\
        &\leq (1+t)(1+ \sqrt{\epsilon}) \Pi + \rndBrk{1+ \frac{1}{t}} 2^d \Pi_\perp
    \end{align*} 
    where we have used $\rho \leq \sigma + P$ to bound the first term and Eq. \ref{eq:badProj_bd} to bound the second term in the second line, and Eq. \ref{eq:goodProj_small} to bound $\Pi \sigma^{-\frac{1}{2}} P \sigma^{-\frac{1}{2}} \Pi$ in the last step. \\
    
    We will define $A_t := (1+t)(1+ \sqrt{\epsilon}) \sigma^{\frac{1}{2}}\Pi\sigma^{\frac{1}{2}} + \rndBrk{1+ \frac{1}{t}} 2^d \sigma^{\frac{1}{2}}\Pi_\perp \sigma^{\frac{1}{2}}$. Above, we have shown that $A_t \geq \rho$ for every $t>0$. Therefore, for each $t>0$, $D^\#_\alpha (\rho|| \sigma) \leq \hat{D}_\alpha(A_t || \sigma)$. We will now bound $\hat{D}_\alpha(A_t || \sigma)$ for $\alpha \in (1, \infty)$ as:
    \begin{align*}
        \hat{D}_\alpha(A_t || \sigma) &= \frac{1}{\alpha-1} \log \tr \rndBrk{\sigma \rndBrk{\sigma^{-\frac{1}{2}} A_t \sigma^{-\frac{1}{2}}}^\alpha}\\
        &= \frac{1}{\alpha-1} \log \tr \rndBrk{\sigma \rndBrk{(1+t)(1+ \sqrt{\epsilon}) \Pi + \rndBrk{1+ \frac{1}{t}} 2^d \Pi_\perp}^\alpha}\\
        &= \frac{1}{\alpha-1} \log \tr \rndBrk{\sigma \rndBrk{(1+t)^\alpha (1+ \sqrt{\epsilon})^\alpha \Pi + \rndBrk{1+ \frac{1}{t}}^\alpha 2^{d\alpha} \Pi_\perp}} \\
        &= \frac{1}{\alpha-1} \log \rndBrk{(1+t)^\alpha (1+ \sqrt{\epsilon})^\alpha \tr \rndBrk{\sigma \Pi} + \rndBrk{1+ \frac{1}{t}}^\alpha 2^{d\alpha} \tr \rndBrk{\sigma\Pi_\perp}}\\
        &\leq \frac{1}{\alpha-1} \log \rndBrk{(1+t)^\alpha (1+ \sqrt{\epsilon})^\alpha + \rndBrk{1+ \frac{1}{t}}^\alpha 2^{d\alpha} \sqrt{\epsilon}}
    \end{align*}
    where in the last line we use $\tr(\sigma \Pi)\leq 1$ and $\tr(\sigma \Pi_\perp)\leq \sqrt{\epsilon}$ (Eq. \ref{eq:overlap_bd}). Finally, since $t>0$ was arbitrary, we can choose the $t>0$ which minimizes the right-hand side. For this choice of $t_{\min}= \rndBrk{\frac{2^{\alpha d} \sqrt{\epsilon}}{\rndBrk{1+ \sqrt{\epsilon}}^{\alpha}}}^{\frac{1}{\alpha+1}}$, we get
    \begin{align*}
        \hat{D}_\alpha(A_{t_{\min}} || \sigma) &\leq \frac{\alpha+1}{\alpha-1}\log\rndBrk{(1+ \sqrt{\epsilon})^{\frac{\alpha}{\alpha+1}} + 2^{\frac{\alpha}{\alpha+1} d} \epsilon^{\frac{1}{2(\alpha+1)}}}
    \end{align*}
    which proves the required bound.
\end{proof}

\subsection{Bounding the channel divergence for two channels close to each other}
\label{sec:ch_div_bd}

Suppose there are two channels $\cN$ and $\cM$ mapping registers from the space $A$ to $B$ such that $\frac{1}{2} \norm{\cN- \cM}_\diamond \leq \epsilon$. In general, the channel divergence between two such channels can be infinite because there may be states $\rho$ such that $\cN(\rho) \not\ll \cM(\rho)$. In order to get around this issue, we will use the $\delta-$mixed channel, $\cM_\delta$. For $\delta \in (0,1)$, we define $\cM_\delta$ as 
\begin{align*}
    \mathcal{M}_\delta := (1- \delta) \mathcal{M} + \delta \mathcal{N}.
\end{align*}
This guarantees that $D_{\max}(\cN || \cM_\delta) \leq \log\frac{1}{\delta}$, which is enough to ensure that the divergences we are interested in are finite. Moreover, by mixing $\cM$ with $\cN$, we only decrease the distance: 
\begin{align*}
    \frac{1}{2}\norm{\mathcal{M}_\delta - \cN}_{\diamond} &= \frac{1}{2}\norm{(1-\delta)\cM + \delta\cN - \cN}_{\diamond} \\
    &= (1-\delta)\frac{1}{2}\norm{\cM - \cN}_{\diamond}\\
    &\leq (1-\delta) \epsilon. \numberthis
    \label{eq:N_mixedM_dist}
\end{align*}
We will now show that $D^\#_{\alpha}(\cN || \cM_{\delta})$ is small for an appropriately chosen $\delta$. By the definition of channel divergence, we have that 
\begin{align*}
    D^\#_{\alpha}(\cN || \cM_{\delta}) = \sup_{\rho_{AR}} D^\#_{\alpha}(\cN(\rho_{AR}) || \cM_{\delta}(\rho_{AR}))
\end{align*}
where $R$ is an arbitrary reference system ($\cN, \cM_{\delta}$ map register $A$ to register $B$). We will show that for every $\rho_{AR}$, $D^\#_{\alpha}(\cN(\rho_{AR}) || \cM_{\delta}(\rho_{AR}))$ is small. Note that 
\begin{align*}
    \cM_{\delta}(\rho_{AR}) &= (1-\delta)\cM(\rho_{AR}) + \delta\cN(\rho_{AR})\\
    &\geq \delta\cN(\rho_{AR})
\end{align*}
which implies that $D_{\max}(\cN(\rho_{AR}) || \cM_{\delta}(\rho_{AR})) \leq \log\frac{1}{\delta}$. Also, using Eq. \ref{eq:N_mixedM_dist} have that 
\begin{align*}
    \frac{1}{2}\norm{\mathcal{M}_\delta (\rho_{AR})- \cN (\rho_{AR})}_1 \leq (1-\delta)\epsilon.
\end{align*}
Using Lemma \ref{lemm:Dsharp_close_st_bd}, we have for every $\alpha \in (1, \infty)$
\begin{align*}
    D^\#_{\alpha}(\cN(\rho_{AR}) || \cM_{\delta}(\rho_{AR})) \leq \frac{\alpha+1}{\alpha-1}\log\rndBrk{\rndBrk{1+ \sqrt{(1-\delta)\epsilon}}^{\frac{\alpha}{\alpha+1}} + \rndBrk{\frac{\sqrt{(1-\delta)\epsilon}}{\delta^\alpha}}^{{\frac{1}{\alpha+1}}}}.
\end{align*}
Since, this is true for all $\rho_{AR}$, for every $\alpha \in (1, \infty)$ we have
\begin{align*}
    D^\#_{\alpha}(\cN || \cM_\delta) \leq \frac{\alpha+1}{\alpha-1}\log\rndBrk{\rndBrk{1+ \sqrt{(1-\delta)\epsilon}}^{\frac{\alpha}{\alpha+1}} + \rndBrk{\frac{\sqrt{(1-\delta)\epsilon}}{\delta^\alpha}}^{{\frac{1}{\alpha+1}}}}.
\end{align*}
Note that since $\delta$ was arbitrary, we can choose it appropriately to make sure that the above bound is small, for example by choosing $\delta = \epsilon^\frac{1}{4 \alpha}$, we get the bound
\begin{align*}
    D^\#_{\alpha}(\cN || \cM_\delta) \leq \frac{\alpha+1}{\alpha-1}\log\rndBrk{(1+ \sqrt{\epsilon})^{\frac{\alpha}{\alpha+1}} + \epsilon^{\frac{1}{4(\alpha+1)}}}
\end{align*}
which is a small function of $\epsilon$ in the sense that it tends to $0$ as $\epsilon \rightarrow 0$. We summarise the bound derived above in the following lemma. 
\begin{lemma}
    Let $\epsilon>0$. Suppose channels $\cN$ and $\cM$ from register $A$ to $B$ are such that $\frac{1}{2}\norm{\cN- \cM}_{\diamond} \leq \epsilon$. For $\delta \in (0,1)$, we can define the mixed channel $\cM_\delta := (1- \delta) \mathcal{M} + \delta \mathcal{N}$. Then, for every $\alpha \in (1, \infty)$, we have the following bound on the channel divergence
    \begin{align}
        D^\#_{\alpha}(\cN || \cM_\delta) \leq \frac{\alpha+1}{\alpha-1}\log\rndBrk{\rndBrk{1+ \sqrt{(1-\delta)\epsilon}}^{\frac{\alpha}{\alpha+1}} + \rndBrk{\frac{\sqrt{(1-\delta)\epsilon}}{\delta^\alpha}}^{{\frac{1}{\alpha+1}}}}.
        \label{eq:ch_div_bd}
    \end{align}
    \label{lemm:ch_div_bd}
\end{lemma}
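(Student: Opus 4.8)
The plan is to reduce the channel divergence to the single-state divergence bound of Lemma \ref{lemm:Dsharp_close_st_bd}, applied to each input state separately, using two elementary properties of the mixed channel $\cM_\delta := (1-\delta)\cM + \delta\cN$: that mixing in a fraction $\delta$ of $\cN$ forces the max-relative entropy of the output states to be finite, while the diamond-norm distance between $\cN$ and $\cM_\delta$ stays essentially as small as that between $\cN$ and $\cM$.

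First I would unfold the definition of the channel divergence,
\begin{align*}
    D^{\#}_{\alpha}(\cN || \cM_\delta) = \sup_{\rho_{AR}} D^{\#}_{\alpha}(\cN(\rho_{AR}) || \cM_\delta(\rho_{AR})),
\end{align*}
where the supremum ranges over all input states $\rho_{AR}$ with an arbitrary reference register $R$, and aim to bound the summand uniformly in $\rho_{AR}$. For a fixed $\rho_{AR}$, I would verify the two hypotheses of Lemma \ref{lemm:Dsharp_close_st_bd} with $\rho = \cN(\rho_{AR})$ and $\sigma = \cM_\delta(\rho_{AR})$. The max-relative entropy bound follows from the operator inequality $\cM_\delta(\rho_{AR}) \geq \delta\, \cN(\rho_{AR})$, which gives $D_{\max}(\cN(\rho_{AR}) || \cM_\delta(\rho_{AR})) \leq \log\frac{1}{\delta}$; thus one may take $d = \log\frac{1}{\delta}$, for which $2^{\alpha d} = \delta^{-\alpha}$. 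The trace-distance bound follows from linearity of the diamond norm, $\frac{1}{2}\norm{\cM_\delta - \cN}_\diamond = (1-\delta)\frac{1}{2}\norm{\cM - \cN}_\diamond \leq (1-\delta)\epsilon$, which in particular yields $\frac{1}{2}\norm{\cN(\rho_{AR}) - \cM_\delta(\rho_{AR})}_1 \leq (1-\delta)\epsilon$ for every input.

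Substituting $\epsilon \mapsto (1-\delta)\epsilon$ and $2^{\alpha d} \mapsto \delta^{-\alpha}$ into the right-hand side of Eq. \ref{eq:Dsharp_close_st_bd} then produces precisely the claimed expression, since $\rndBrk{\delta^{-\alpha}\sqrt{(1-\delta)\epsilon}}^{\frac{1}{\alpha+1}} = \rndBrk{\frac{\sqrt{(1-\delta)\epsilon}}{\delta^\alpha}}^{\frac{1}{\alpha+1}}$. Crucially, this per-input bound depends only on $\alpha$, $\delta$, and $\epsilon$ — never on $\rho_{AR}$ or on the size of the reference register $R$ — so taking the supremum over all inputs leaves it unchanged, which is exactly Eq. \ref{eq:ch_div_bd}. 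I do not expect a genuine obstacle here: all the substantive work (the asymmetric-pinching estimate) is already packaged inside Lemma \ref{lemm:Dsharp_close_st_bd}. The only points needing mild care are translating diamond-norm closeness of the channels into trace-norm closeness of every pair of output states, and noting that $\cN(\rho_{AR})$ and $\cM_\delta(\rho_{AR})$ are genuine normalised states whenever $\rho_{AR}$ is, so that Lemma \ref{lemm:Dsharp_close_st_bd} applies verbatim.
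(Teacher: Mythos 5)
Your proposal is correct and matches the paper's own proof essentially step for step: the operator inequality $\cM_\delta(\rho_{AR}) \geq \delta\,\cN(\rho_{AR})$ giving $d = \log\frac{1}{\delta}$, the linearity computation $\frac{1}{2}\norm{\cM_\delta - \cN}_\diamond = (1-\delta)\frac{1}{2}\norm{\cM-\cN}_\diamond \leq (1-\delta)\epsilon$, and the uniform application of Lemma \ref{lemm:Dsharp_close_st_bd} followed by a supremum over inputs. No gaps; this is exactly the argument in Section \ref{sec:ch_div_bd}.
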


\subsection{Proof of the approximate entropy accumulation theorem}
\label{sec:proof_approx_EAT}

We use the mixed channels defined in the previous section to define the auxiliary state $\cM^\delta_n \circ \cdots \circ \cM^\delta_1(\rho^{(0)}_{R_0 E})$ for our proof. It is easy to show using the divergence bounds in Sec. \ref{sec:ch_div_bd} and the chain rule for $D^\#_{\alpha}$ entropies that the relative entropy distance between the real state and this choice of the auxiliary state is small. However, the state $\cM^\delta_n \circ \cdots \circ \cM^\delta_1(\rho^{(0)}_{R_0 E})$ does not necessarily satisfy the Markov chain conditions required for entropy accumulation. Thus, we also need to reprove the entropy lower bound on this state by modifying the approach used in the proof of the original entropy accumulation theorem. \\

\begin{proof}[Proof of Theorem \ref{th:approx_EAT}]
    Using Lemma \ref{lemm:ch_div_bd}, for every $\delta \in (0,1)$ and for each $k \in [n]$ we have that for every $\beta > 1$, the mixed maps $\cM^\delta_k := (1-\delta)\cM'_k + \delta \cM_k$ satisfy
    \begin{align}
        D^\#_\beta(\cM_k || \cM^\delta_k) &\leq \frac{\beta+1}{\beta-1}\log\rndBrk{\rndBrk{1+ \sqrt{(1-\delta)\epsilon}}^{\frac{\beta}{\beta+1}} + \rndBrk{\frac{\sqrt{(1-\delta)\epsilon}}{\delta^\beta}}^{{\frac{1}{\beta+1}}}} \nonumber \\
        &:= z_\beta(\epsilon, \delta) \label{eq:mixed_ch_div_bd}
    \end{align} 
    where we defined the right-hand side above as $z_\beta(\epsilon, \delta)$. This can be made ``small'' by choosing $\delta = \epsilon^{\frac{1}{4\beta}}$ as was shown in the previous section. We use these maps to define the auxiliary state as
    \begin{align}
        \sigma_{A_1^n B_1^n E} := \cM^\delta_n \circ \cdots \circ \cM^\delta_1(\rho^{(0)}_{R_0 E}). \label{eq:aux_st_defn}
    \end{align}
    Now, we have that for $\beta>1$ and $\epsilon_1 >0$
    \begin{align*}
        D^{\epsilon_1}_{\max}&(\rho_{A_1^n B_1^n E}||\sigma_{A_1^n B_1^n E}) \\
        &\leq \tilde{D}_\beta(\rho_{A_1^n B_1^n E}||\sigma_{A_1^n B_1^n E}) + \frac{g_0(\epsilon_1)}{\beta-1} \\
        &\leq D^\#_\beta (\rho_{A_1^n B_1^n E}||\sigma_{A_1^n B_1^n E}) + \frac{g_0(\epsilon_1)}{\beta-1}\\
        &= D^\#_\beta (\cM_n \circ \cdots \circ \cM_1(\rho^{(0)}_{R_0 E}) || \cM^\delta_n \circ \cdots \circ \cM^\delta_1(\rho^{(0)}_{R_0 E})) + \frac{g_0(\epsilon_1)}{\beta-1}\\
        &\leq D^\#_\beta (\cM_{n-1} \circ \cdots \circ \cM_1(\rho^{(0)}_{R_0 E}) || \cM^\delta_{n-1} \circ \cdots \circ \cM^\delta_1(\rho^{(0)}_{R_0 E})) + D^\#_\beta (\cM_n || \cM^\delta_n) + \frac{g_0(\epsilon_1)}{\beta-1}\\
        &\leq \cdots \\
        &\leq \sum_{k=1}^n D^\#_\beta (\cM_k || \cM^\delta_k) + \frac{g_0(\epsilon_1)}{\beta-1}\\
        &\leq n z_\beta(\epsilon, \delta) + \frac{g_0(\epsilon_1)}{\beta-1} \numberthis
    \end{align*}
    where the first line follows from \cite[Proposition 6.5]{TomamichelBook16}, the second line follows from \cite[Proposition 3.4]{Fawzi21}, fourth line follows from the chain rule for $D^\#_\beta$ \cite[Proposition 4.5]{Fawzi21}, and the last line follows from Eq. \ref{eq:mixed_ch_div_bd}.\\

    For $\epsilon_2 >0$ and $\alpha \in (1, 1+ \frac{1}{\log(1+2|A|)})$, we can plug the above in the bound provided by Lemma \ref{lemm:Hmin_rho_to_Halpha_sigma_using_Dmax} to get
    \begin{align}
        H_{\min}^{\epsilon_1+\epsilon_2}(A_1^n|B_1^n E)_{\rho} &\geq \tilde{H}^{\uparrow}_{\alpha}(A_1^n|B_1^n E)_{\sigma} - \frac{\alpha}{\alpha-1} n z_{\beta}(\epsilon, \delta) \nonumber \\
        & \qquad - \frac{1}{\alpha-1}\rndBrk{g_1(\epsilon_2, \epsilon_1)+ \frac{\alpha g_0(\epsilon_1)}{\beta-1}}.
        \label{eq:Hmin_rho_to_Halpha_sigma}
    \end{align}
    We have now reduced our problem to lower bounding $\tilde{H}^{\uparrow}_{\alpha}(A_1^n|B_1^n E)_{\sigma}$. Note that we cannot directly use the entropy accumulation here, since the mixed maps $\cM^\delta_k = (1-\delta)\cM'_k + \delta \cM_k$, which means that with $\delta$ probability the $B_k$ register may be correlated with $A_1^{k-1}$ even given $B_1^{k-1}E$, and it may not satisfy the Markov chain required for entropy accumulation. \\

    The application of the maps $\cM^\delta_k$ can be viewed as applying the channel $\cM'_k$ with probability $1-\delta$ and the channel $\cM_k$ with probability $\delta$. We can define the channels $\cN_k$ which map the registers $R_{k-1}$ to $R_k A_k B_k C_k$, where $C_k$ is a binary register. The action of $\cN_k$ can be defined as:
    \begin{enumerate}
        \item Sample the classical random variable $C_k \in \{0,1\}$ independently. $C_k= 1$ with probability $1-\delta$ and $0$ otherwise.
        \item If $C_k =1$ apply the map $\cM'_k$ on $R_{k-1}$, else apply $\cM_k$ on $R_{k-1}$.
    \end{enumerate}
    Let us call $\theta_{A_1^n B_1^n C_1^n E} =\cN_n \circ \cdots \circ \cN_1 (\rho^{(0)}_{R_0 E})$. Clearly $\tr_{C_1^n}\rndBrk{\theta_{A_1^n B_1^n C_1^n E}}= \sigma_{A_1^n B_1^n E}$. Thus, we have
    \begin{align*}
        \tilde{H}^{\uparrow}_{\alpha}(A_1^n|B_1^n E)_{\sigma} &= \tilde{H}^{\uparrow}_{\alpha}(A_1^n|B_1^n E)_{\theta}\\
        &\geq \tilde{H}^{\uparrow}_{\alpha}(A_1^n|B_1^n C_1^n E)_{\theta}.
        \numberthis
        \label{eq:Halpha_sigma_to_theta}
    \end{align*}
    We will now focus on lower bounding $\tilde{H}^{\uparrow}_{\alpha}(A_1^n|B_1^n C_1^n E)_{\theta}$. Using \cite[Proposition 5.1]{TomamichelBook16}, we have that 
    \begin{align*}
        \tilde{H}^{\uparrow}_{\alpha}(A_1^n|B_1^n C_1^n E)_{\theta} = \frac{\alpha}{1-\alpha} \log \sum_{c_1^n} \theta(c_1^n) \exp \rndBrk{\frac{1-\alpha}{\alpha}\tilde{H}^{\uparrow}_{\alpha}(A_1^n|B_1^n E)_{\theta_{|c_1^n}}}. 
    \end{align*}
    We will show that for a given $c_1^n$, the conditional entropy $\tilde{H}^{\uparrow}_{\alpha}(A_1^n|B_1^n E)_{\theta_{|c_1^n}}$ accumulates whenever the ``good'' map $\cM'_k$ is used and loses some entropy for the rounds where the ``bad'' map $\cM_k$ is used. The fact that $c_1^n$ contains far more $1$s than $0$s with a large probability then allows us to prove a lower bound on $\tilde{H}^{\uparrow}_{\alpha}(A_1^n|B_1^n C_1^n E)_{\theta}$.

    \begin{claim}
        Define $h_k := \inf_{\omega} \tilde{H}^{\downarrow}_{\alpha}(A_k | B_k \tilde{R}_{k-1})_{\cM'_k(\omega)}$ where the infimum is over all states $\omega_{R_{k-1} \tilde{R}_{k-1}}$ for a register $\tilde{R}_{k-1}$, which is isomorphic to $R_{k-1}$, and $s:= \log(|A||B|^2)$. Then, we have
        \begin{align}
            \tilde{H}^{\uparrow}_{\alpha}(A_1^n|B_1^n E)_{\theta_{|c_1^n}} \geq \sum_{k=1}^n \rndBrk{\delta({c_k,1}) h_k - \delta({c_k,0})s}
        \end{align}
        where $\delta({x,y})$ is the Kronecker delta function ($\delta({x,y})=1$ if $x=y$ and $0$ otherwise).
    \end{claim}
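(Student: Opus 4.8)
The plan is to fix the string $c_1^n$ and exploit the fact that it specifies a \emph{deterministic} sequence of channels: writing $\cN^{(1)}_k := \cM'_k$ and $\cN^{(0)}_k := \cM_k$, the conditioned state is
\begin{align*}
    \theta_{|c_1^n} = \tr_{R_n}\rndBrk{\cN^{(c_n)}_n \circ \cdots \circ \cN^{(c_1)}_1 (\rho^{(0)}_{R_0 E})}.
\end{align*}
For $0 \leq k \leq n$ let $\theta^{(k)}$ denote the reduced state of $\theta_{|c_1^n}$ on $A_1^k B_1^k E$. Since the channel applied in round $j$ acts only on $R_{j-1}$ and leaves $A_1^{j-1} B_1^{j-1} E$ untouched, tracing out everything produced after round $k$ returns exactly $\tr_{R_k}\rndBrk{\cN^{(c_k)}_k \circ \cdots \circ \cN^{(c_1)}_1 (\rho^{(0)}_{R_0 E})}$; in particular the marginals $\theta^{(k)}$ are mutually consistent across rounds, so I can telescope
\begin{align*}
    \tilde{H}^{\uparrow}_{\alpha}(A_1^n|B_1^n E)_{\theta_{|c_1^n}} = \sum_{k=1}^n \rndBrk{\tilde{H}^{\uparrow}_{\alpha}(A_1^k|B_1^k E)_{\theta^{(k)}} - \tilde{H}^{\uparrow}_{\alpha}(A_1^{k-1}|B_1^{k-1} E)_{\theta^{(k-1)}}}
\end{align*}
using that the $k=0$ term with an empty $A$-system vanishes. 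It then suffices to lower bound each round's increment by $h_k$ when $c_k=1$ and by $-s$ when $c_k=0$.

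For a good round ($c_k=1$), the last map applied is $\cM'_k$, while every earlier map $\cN^{(c_j)}_j$ lies in $\curlyBrk{\cM_j, \cM'_j}$. Hence $\theta^{(k)}$ is exactly of the form $\cM'_k \circ \cN_{k-1} \circ \cdots \circ \cN_1(\rho^{(0)}_{R_0 E})$ covered by the theorem's Markov-chain hypothesis (Eq. \ref{eq:approx_map_Markov_ch}), so it satisfies $A_1^{k-1} \leftrightarrow B_1^{k-1} E \leftrightarrow B_k$. This is precisely the condition under which the single-round R\'enyi chain rule underlying entropy accumulation \cite{Dupuis20} applies, giving
\begin{align*}
    \tilde{H}^{\uparrow}_{\alpha}(A_1^k|B_1^k E)_{\theta^{(k)}} \geq \tilde{H}^{\uparrow}_{\alpha}(A_1^{k-1}|B_1^{k-1} E)_{\theta^{(k-1)}} + \inf_{\omega} \tilde{H}^{\downarrow}_{\alpha}(A_k | B_k \tilde{R}_{k-1})_{\cM'_k(\omega)},
\end{align*}
so the increment is at least $h_k$. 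It is exactly to make this step legitimate for every possible past that the hypothesis is imposed for all interleavings $\cN_j \in \curlyBrk{\cM_j, \cM'_j}$ rather than only for the all-$\cM'$ history.

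For a bad round ($c_k=0$) no Markov structure is available, and I fall back on the standard dimension bounds for the sandwiched R\'enyi conditional entropy. Writing $A_1^k = A_1^{k-1} A_k$ and $B_1^k E = B_1^{k-1} B_k E$, adjoining $A_k$ to the system loses at most $\log|A|$ and releasing $B_k$ from the conditioning loses at most $2\log|B|$, for a total of $s = \log(|A||B|^2)$:
\begin{align*}
    \tilde{H}^{\uparrow}_{\alpha}(A_1^k|B_1^k E)_{\theta^{(k)}} \geq \tilde{H}^{\uparrow}_{\alpha}(A_1^{k-1}|B_1^{k-1} E)_{\theta^{(k-1)}} - s.
\end{align*}
Substituting the two per-round bounds into the telescoped sum yields the claim.

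The main obstacle I anticipate is the good-round step: one must invoke the R\'enyi chain rule from the entropy accumulation machinery in a form that accumulates the $\downarrow$-entropy $h_k$ into the $\uparrow$-entropy on the left, and one must verify that the Markov chain genuinely holds for $\theta^{(k)}$ irrespective of which of $\cM_j$ or $\cM'_j$ was used in the earlier rounds --- this is guaranteed by the theorem's hypothesis but is the crux of why that hypothesis is stated over all mixed histories. The bad-round estimate is routine; one need only track the dimension factors to land on $s=\log(|A||B|^2)$, and since any over-counting merely weakens the final bound by a constant, the precise value of $s$ is not critical.
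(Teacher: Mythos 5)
Your proposal is correct and follows essentially the same route as the paper's proof: the paper argues by induction on $k$ (equivalent to your telescoping, using $\theta_{A_1^{k-1}B_1^{k-1}E|c_1^k} = \theta_{A_1^{k-1}B_1^{k-1}E|c_1^{k-1}}$ in place of your marginal-consistency observation), applies its Corollary \ref{cor:Markoc_ch_opt_Halpha_bd} --- exactly the mixed $\tilde{H}^{\uparrow}_{\alpha}$/$\tilde{H}^{\downarrow}_{\alpha}$ chain rule you flag as the crux, valid since the Markov hypothesis covers all interleaved histories $\cN_j \in \{\cM_j, \cM'_j\}$ --- for rounds with $c_k=1$, and uses the same two dimension bounds (Lemmas \ref{lemm:dim_bd} and \ref{lemm:cond_reg_dim_bd}, losing $\log|A| + 2\log|B| = s$) for rounds with $c_k=0$.
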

    \begin{proof}
        We will prove the statement 
        \begin{align*}
            \tilde{H}^{\uparrow}_{\alpha}(A_1^k|B_1^k E)_{\theta_{|c_1^k}} \geq \tilde{H}^{\uparrow}_{\alpha}(A_1^{k-1}|B_1^{k-1} E)_{\theta_{|c_1^{k-1}}} + \rndBrk{\delta({c_k,1}) h_k - \delta({c_k,0})s}
        \end{align*}
        then the claim will follow inductively. We will consider two cases: when $c_k=0$ and when $c_k =1$. First suppose, $c_k=0$ then $\theta_{A_1^k B_1^k E | c_1^k} = \tr_{R_k} \circ \cM_k^{R_{k-1} \rightarrow R_{k} A_k B_k} \rndBrk{\theta_{R_{k-1} A_1^{k-1} B_1^{k-1} E|  c_1^{k}}}$. In this case, we have
        \begin{align*}
            \tilde{H}^{\uparrow}_{\alpha}(A_1^k|B_1^k E)_{\theta_{|c_1^k}} &\geq \tilde{H}^{\uparrow}_{\alpha}(A_1^{k-1}|B_1^k E)_{\theta_{|c_1^k}} - \log|A| \\
            &\geq \tilde{H}^{\uparrow}_{\alpha}(A_1^{k-1}|B_1^{k-1} E)_{\theta_{|c_1^k}} - \log\rndBrk{|A||B|^2}\\
            &= \tilde{H}^{\uparrow}_{\alpha}(A_1^{k-1}|B_1^{k-1} E)_{\theta_{|c_1^{k-1}}} - s
        \end{align*}
        where in the first line we have used the dimension bound in Lemma \ref{lemm:dim_bd}, in the second line we have used the dimension bound in Lemma \ref{lemm:cond_reg_dim_bd} and in the last line we have used $\theta_{A_1^{k-1}B_1^{k-1} E|c_1^k} = \theta_{A_1^{k-1}B_1^{k-1} E|c_1^{k-1}}$. \\

        Now, suppose that $c_k =1$. In this case, we have that $\theta_{A_1^k B_1^k E | c_1^k} = \tr_{R_k} \circ \cM_k' \rndBrk{\theta_{R_{k-1} A_1^{k-1} B_1^{k-1} E|  c_1^{k}}}$ and since $\theta_{R_{k-1} A_1^{k-1} B_1^{k-1} E|  c_1^{k}} = \Phi_{k-1} \circ \Phi_{k-2} \cdots \circ \Phi_{1}(\rho^{(0)}_{R_0 E})$ where each of the $\Phi_i \in \{\cM_i, \cM'_i\}$, using the hypothesis of the theorem we have that the state $\theta_{A_1^k B_1^k E | c_1^k} = \cM_k' \rndBrk{\theta_{R_{k-1} A_1^{k-1} B_1^{k-1} E|  c_1^{k}}}$ satisfies the Markov chain 
        \begin{align*}
            A_1^{k-1} \leftrightarrow B_1^{k-1} E \leftrightarrow B_k.
        \end{align*}
        Now, using Corollary \ref{cor:Markoc_ch_opt_Halpha_bd} (the $ \tilde{H}^{\uparrow}_{\alpha}$ counterpart for \cite[Corollary 3.5]{Dupuis20}, which is the main chain rule used for proving entropy accumulation), we have 
        \begin{align*}
            \tilde{H}^{\uparrow}_{\alpha}(A_1^k|B_1^k E)_{\theta_{|c_1^k}} &\geq \tilde{H}^{\uparrow}_{\alpha}(A_1^{k-1}|B_1^{k-1} E)_{\theta_{|c_1^k}} + \inf_{\omega} \tilde{H}^{\downarrow}_{\alpha}(A_k | B_k \tilde{R}_{k-1})_{\cM'_k(\omega)} \\
            &= \tilde{H}^{\uparrow}_{\alpha}(A_1^{k-1}|B_1^{k-1} E)_{\theta_{|c_1^{k-1}}} + h_k
        \end{align*}
        where in the last line we have again used $\theta_{A_1^{k-1}B_1^{k-1} E|c_1^k} = \theta_{A_1^{k-1}B_1^{k-1} E|c_1^{k-1}}$. Combining these two cases, we have 
        \begin{align}
            \tilde{H}^{\uparrow}_{\alpha}(A_1^k|B_1^k E)_{\theta_{|c_1^k}} \geq \tilde{H}^{\uparrow}_{\alpha}(A_1^{k-1}|B_1^{k-1} E)_{\theta_{|c_1^{k-1}}} + \rndBrk{\delta({c_k,1}) h_k - \delta({c_k,0})s}.
        \end{align}
        Using this bound $n$ times starting with $\tilde{H}^{\uparrow}_{\alpha}(A_1^n|B_1^n E)_{\theta_{|c_1^n}}$ gives us the bound required in the claim. 
    \end{proof}
    \noindent For the sake of clarity let $l_k(c_k) := \rndBrk{\delta({c_k,1}) h_k - \delta({c_k,0})s}$. We will now evaluate 
    \begin{align*}
        \sum_{c_1^n} \theta(c_1^n) \exp \rndBrk{\frac{1-\alpha}{\alpha}\tilde{H}^{\uparrow}_{\alpha}(A_1^n|B_1^n E)_{\theta_{|c_1^n}}} &\leq \sum_{c_1^n} \theta(c_1^n) \exp \rndBrk{\frac{1-\alpha}{\alpha}\sum_{k=1}^n l_k(c_k)}\\
        &= \sum_{c_1^n} \prod_{k=1}^n \theta(c_k) 2^{\frac{1-\alpha}{\alpha}l_k(c_k)}\\
        &= \prod_{k=1}^n \sum_{c_k} \theta(c_k) 2^{\frac{1-\alpha}{\alpha}l_k(c_k)}.
        \numberthis
        \label{eq:Halpha_sum_bd}
    \end{align*}
    Then, we have
    \begin{align*}
        \tilde{H}^{\uparrow}_{\alpha}(A_1^n|B_1^n C_1^n E)_{\theta} &= \frac{\alpha}{1-\alpha} \log \sum_{c_1^n} \theta(c_1^n) \exp_2 \rndBrk{\frac{1-\alpha}{\alpha}\tilde{H}^{\uparrow}_{\alpha}(A_1^n|B_1^n E)_{\theta_{|c_1^n}}}.\\  
        &\geq \frac{\alpha}{1-\alpha} \sum_{k=1}^n \log \sum_{c_k} \theta(c_k) 2^{\frac{1-\alpha}{\alpha}l_k(c_k)} \\
        &= \frac{\alpha}{1-\alpha} \sum_{k=1}^n \log \rndBrk{(1-\delta) 2^{\frac{1-\alpha}{\alpha}h_k} + \delta 2^{-\frac{1-\alpha}{\alpha}s}} \\
        &= \sum_{k=1}^n h_k - \frac{\alpha}{\alpha-1} \sum_{k=1}^n \log\rndBrk{1-\delta + \delta 2^{\frac{\alpha-1}{\alpha}(s+ h_k)}}\\
        &\geq \sum_{k=1}^n h_k - \frac{\alpha}{\alpha-1} n \log\rndBrk{1+ \delta \rndBrk{2^{\frac{\alpha-1}{\alpha}(s+ \log|A|)}-1}}
        \numberthis
        \label{eq:Halpha_theta_EAT_bd_in_Halpha}
    \end{align*}
    where in the second line we have used Eq. \ref{eq:Halpha_sum_bd} and in the last line we have used the fact that $h_k \leq \log|A|$ for all $k \in [n]$.\\

    We restricted the choice of $\alpha$ to the region $\rndBrk{1, 1+ \frac{1}{\log(1+2 |A|)}}$ in the theorem, so that we can now use \cite[Lemma B.9]{Dupuis20} to transform the above to  
    \begin{align*}
        \tilde{H}^{\uparrow}_{\alpha}(A_1^n|B_1^n C_1^n E)_{\theta} \geq \sum_{k=1}^n \inf_{\omega_{R_{k-1} \tilde{R}_{k-1}}} & H(A_k | B_k \tilde{R}_{k-1})_{\cM'_k(\omega)} - n(\alpha-1) \log^2(1+ 2 |A|) \\
        &- \frac{\alpha}{\alpha-1} n \log\rndBrk{1 + \delta \rndBrk{2^{\frac{\alpha-1}{\alpha}2\log(|A||B|)}-1}}.
        \numberthis
        \label{eq:Halpha_theta_EAT_bd}
    \end{align*}
    Putting Eq. \ref{eq:Hmin_rho_to_Halpha_sigma}, Eq. \ref{eq:Halpha_sigma_to_theta}, and Eq. \ref{eq:Halpha_theta_EAT_bd} together, we have 
    \begin{align*}
        H_{\min}^{\epsilon_1+\epsilon_2}(A_1^n|B_1^n E)_{\rho} \geq \sum_{k=1}^n \inf_{\omega_{R_{k-1} \tilde{R}_{k-1}}} & H(A_k | B_k \tilde{R}_{k-1})_{\cM'_k(\omega)} - n(\alpha-1) \log^2(1+ 2 |A|) \\
        &- \frac{\alpha}{\alpha-1} n \log\rndBrk{1 + \delta \rndBrk{2^{\frac{\alpha-1}{\alpha}2\log(|A||B|)}-1}} \\
        & - \frac{\alpha}{\alpha-1} n z_{\beta}(\epsilon, \delta)\frac{1}{\alpha-1}\rndBrk{g_1(\epsilon_2, \epsilon_1)+ \frac{\alpha g_0(\epsilon_1)}{\beta-1}}.
    \end{align*}
\end{proof}

\subsection{Testing}
\label{sec:testing}

We follow \cite{Metger22}, which is itself based on \cite{Dupuis19}, to incorporate testing in Theorem \ref{th:approx_EAT}. \\

In this section, we will consider the channels $\cM_k$ and $\cM'_k$ which map registers $R_{k-1}$ to $A_k B_k X_k R_k$ such that $X_k$ is a classical value which is determined using the registers $A_k$ and $B_k$. Concretely, suppose that for every $k$, there exist a channel $\mathcal{T}_k : A_k B_k \rightarrow A_k B_k X_k$ of the form
\begin{align}
    \mathcal{T}_k (\omega_{A_k B_k}) = \sum_{y, z} \Pi_{A_k}^y \otimes \Pi_{B_k}^z \omega_{A_k B_k} \Pi_{A_k}^y \otimes \Pi_{B_k}^z \otimes \ket{f(y,z)}\bra{f(y,z)}_{X_k}
    \label{eq:test_maps}
\end{align}
where $\{\Pi_{A_k}^y\}_y$ and $\{\Pi_{B_k}^z\}_z$ are orthogonal projectors and $f$ is some deterministic function which uses the measurements $y$ and $z$ to create the output register $X_k$.\\

In order to define the min-tradeoff functions, we let $\mathbb{P}$ be the set of probability distributions over the alphabet of $X$ registers. Let $R$ be any register isomorphic to $R_{k-1}$. For a probability $q \in \mathbb{P}$ and a channel $\cN_{k} : R_{k-1} \rightarrow A_k B_k X_k R_k$, we also define the set 
\begin{align}
    \Sigma_k (q | \cN_{k}) := \curlyBrk{\nu_{A_k B_k X_k R_k R} = \cN_{k}(\omega_{R_{k-1}R}): \text{ for a state } \omega_{R_{k-1}R} \text{ such that }\nu_{X_k}= q}.
\end{align}

\begin{definition}
    A function $f: \mathbb{P} \rightarrow \mathbb{R}$ is called a min-tradeoff function for the channels $\{\cN_k \}_{k=1}^n$ if for every $k \in [n]$, it satisfies
    \begin{align}
        f(q) \leq \inf_{\nu \in \Sigma_k (q| \cN_{k})} H(A_k | B_k R)_{\nu}.
    \end{align}
\end{definition}
We will also need the definitions of the following simple properties of the min-tradeoff functions for our entropy accumulation theorem:
\begin{align}
    &\text{Max}(f) := \max_{q \in \mathbb{P}} f(q)\\
    &\text{Min}(f) := \min_{q \in \mathbb{P}} f(q)\\
    &\text{Min}_{\Sigma}(f) := \min_{q: \Sigma (q) \neq \phi} f(q)\\
    &\text{Var}(f) := \max_{q: \Sigma (q) \neq \phi} \sum_{x} q(x)f(\delta_x)^2 - \rndBrk{\sum_{x} q(x)f(\delta_x)}^2
\end{align}
where $\Sigma(q) := \bigcup_k \Sigma_k(q)$ and $\delta_x$ is the distribution with unit weight on the alphabet $x$. 

\begin{theorem}
    For $k \in [n]$, let the registers $A_k$ and $B_k$ be such that $|A_k| = |A|$ and $|B_k| = |B|$. For $k \in [n]$, let $\cM_k$ be channels from $R_{k-1} \rightarrow R_{k} A_k B_k X_k$ and
    \begin{align}
        \rho_{A_1^n B_1^n X_1^n E} = \tr_{R_n} \circ \cM_n \circ \cdots \circ \cM_1 (\rho^{(0)}_{R_0 E})
        \label{eq:real_st_wtest_eq}
    \end{align}
    be the state produced by applying these maps sequentially. Further, let $\cM_k$ be such that $\cM_k = \mathcal{T}_k \circ \cM^{(0)}_k$ for $\mathcal{T}_k$ defined in Eq. \ref{eq:test_maps} and some channel $\cM^{(0)}_k: R_{k-1} \rightarrow R_{k} A_k B_k$. Suppose the channels $\cM_k$ are such that for every $k \in [n]$, there exists a channel $\cM'_{k}$ from $R_{k-1} \rightarrow R_{k} A_k B_k X_k$ such that
    \begin{enumerate}
        \item $\cM'_k = \mathcal{T}_k \circ \cM^{\prime (0)}_k$ for some channel $\cM^{\prime (0)}_k: R_{k-1} \rightarrow R_{k} A_k B_k$.
        \item $\cM'_{k}$ $\epsilon$-approximates $\cM_{k}$ in the diamond norm: 
        \begin{align}
            \frac{1}{2}\norm{\cM_k- \cM'_k}_{\diamond} \leq \epsilon 
            \label{eq:map_approx_wtest}
        \end{align}
        \item For every choice of a sequence of channels $\cN_i \in \{ \cM_i, \cM'_i \}$ for $i \in [k-1]$, the state $\cM'_k \circ \cN_{k-1} \circ \cdots \circ \cN_1 (\rho^{(0)}_{R_0 E})$ satisfies the Markov chain
        \begin{align}
            A_1^{k-1} \leftrightarrow B_1^{k-1}E \leftrightarrow B_k.
            \label{eq:approx_map_Markov_ch_wtest}
        \end{align}
    \end{enumerate} 
    Then, for an event $\Omega$ defined using $X_1^n$, an affine min-tradeoff function $f$ for $\{\cM'_k\}_{k=1}^n$ such that for every $x_1^n \in \Omega$, $f(\text{freq}(x_1^n)) \geq h$, for parameters $0<\delta, \epsilon_1, \epsilon_3<1$ and $\epsilon_2 := 2\sqrt{\frac{\epsilon_1}{P_{\rho}(\Omega)}}$ such that $\epsilon_2 + \epsilon_3 <1$, $\alpha \in \rndBrk{1,2}$, and $\beta >1$, we have
    \begin{align*}
        H_{\min}^{\epsilon_2+\epsilon_3}(A_1^n|B_1^n E)_{\rho_{|\Omega}} &\geq nh - \frac{(\alpha-1)\ln(2)}{2}\rndBrk{\log(2 |A|^2 + 1) + \sqrt{2+ \text{Var}(f)}}^2 - n(\alpha-1)^2 K_\alpha \\
        &\qquad - \frac{\alpha}{\alpha-1} n \log\rndBrk{1 + \delta \rndBrk{4^{\frac{\alpha-1}{\alpha}(\log(|A||B|) + \text{max}(f) - \text{min}(f)+1)}-1}} \\
        & \qquad - \frac{\alpha}{\alpha-1} n z_{\beta}(\epsilon, \delta) - \frac{1}{\alpha-1}\rndBrk{\alpha\log\frac{1}{P_{{\rho}}(\Omega) - \epsilon_1}+ g_1(\epsilon_3, \epsilon_2)+ \frac{\alpha g_0(\epsilon_1)}{\beta-1}}.
        \numberthis
        \label{eq:approx_EAT_with_testing}
    \end{align*}
    where 
    \begin{align}
        & z_\beta(\epsilon, \delta) := \frac{\beta+1}{\beta-1}\log\rndBrk{\rndBrk{1+ \sqrt{(1-\delta)\epsilon}}^{\frac{\beta}{\beta+1}} + \rndBrk{\frac{\sqrt{(1-\delta)\epsilon}}{\delta^\beta}}^{{\frac{1}{\beta+1}}}} \\
        & K_\alpha := \frac{1}{6(2-\alpha)^3 \ln(2)} 2^{(\alpha-1)\rndBrk{2\log |A| + (\text{Max}(f) - \text{Min}_{\Sigma}(f))} }\ln^3\rndBrk{2^{\rndBrk{2\log |A| + (\text{Max}(f) - \text{Min}_{\Sigma}(f))}} + e^2}
    \end{align}
    and $g_1(x,y) = - \log(1- \sqrt{1-x^2}) - \log(1-y^2)$.
    \label{th:approx_EAT_with_testing}
\end{theorem}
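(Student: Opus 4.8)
The plan is to run the proof of Theorem~\ref{th:approx_EAT} essentially unchanged for the bulk of the argument and to graft onto it the testing bookkeeping of \cite{Metger22,Dupuis19}: conditioning on $\Omega$ and replacing the flat per-round constant by the affine min-tradeoff function $f$. Since $\cM_k = \mathcal{T}_k \circ \cM^{(0)}_k$ and $\cM'_k = \mathcal{T}_k \circ \cM^{\prime(0)}_k$ share the testing channel, the mixed channels $\cM^\delta_k := (1-\delta)\cM'_k + \delta \cM_k = \mathcal{T}_k \circ \rndBrk{(1-\delta)\cM^{\prime(0)}_k + \delta \cM^{(0)}_k}$ still emit the classical register $X_k$ deterministically, so the divergence estimate carries over verbatim. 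Defining $\sigma_{A_1^n B_1^n X_1^n E} := \cM^\delta_n \circ \cdots \circ \cM^\delta_1(\rho^{(0)}_{R_0 E})$ and repeating the chain-rule computation (Lemma~\ref{lemm:ch_div_bd} together with the $D^\#_\beta$ chain rule) gives, for every $\beta>1$,
\begin{align*}
    D^{\epsilon_1}_{\max}(\rho_{A_1^n B_1^n X_1^n E} \| \sigma_{A_1^n B_1^n X_1^n E}) \leq n z_\beta(\epsilon, \delta) + \frac{g_0(\epsilon_1)}{\beta-1}.
\end{align*}

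Next I would transfer this bound across the conditioning on $\Omega$. Writing $\Pi_\Omega$ for the projector onto $\{X_1^n \in \Omega\}$ and smoothing $\rho$ to a $\tilde\rho$ with $\tilde\rho \leq 2^{d}\sigma$ (where $d$ is the right-hand side above), projecting by $\Pi_\Omega$ preserves the operator inequality, and renormalising the conditioned real state costs an additive $\log\frac{1}{P_\rho(\Omega) - \epsilon_1}$ in $D_{\max}$ while the purified-distance ball inflates to radius $\epsilon_2 = 2\sqrt{\epsilon_1 / P_\rho(\Omega)}$; this is the standard conditioning estimate underlying these theorems. Feeding the resulting smooth max-relative entropy bound into the entropic triangle inequality (Lemma~\ref{lemm:Hmin_rho_to_Halpha_sigma_using_Dmax}) reduces the problem to lower bounding $\tilde{H}^{\uparrow}_{\alpha}(A_1^n | B_1^n E)$ of the (conditioned) auxiliary state, and reproduces exactly the last line of Eq.~\ref{eq:approx_EAT_with_testing}, namely $-\frac{\alpha}{\alpha-1} n z_\beta(\epsilon,\delta) - \frac{1}{\alpha-1}\rndBrk{\alpha \log\frac{1}{P_\rho(\Omega) - \epsilon_1} + g_1(\epsilon_3, \epsilon_2) + \frac{\alpha g_0(\epsilon_1)}{\beta-1}}$.

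The remaining and most substantial task is to lower bound the R\'enyi entropy of the auxiliary state. As in the proof of Theorem~\ref{th:approx_EAT} I would purify the mixing into a classical flag $C_1^n$, form $\theta$ with $\tr_{C_1^n}\theta = \sigma$, use $\tilde{H}^{\uparrow}_{\alpha}(A_1^n|B_1^n E)_\sigma \geq \tilde{H}^{\uparrow}_{\alpha}(A_1^n|B_1^n C_1^n E)_\theta$, and expand the latter into a $c_1^n$-weighted sum via \cite[Proposition~5.1]{TomamichelBook16}. The new ingredient is the per-round bound for the good rounds ($c_k = 1$): there the Markov chain for $\cM'_k$ holds, so in place of the constant $h_k$ I would invoke the affine min-tradeoff function through the EAT chain rule (Corollary~\ref{cor:Markoc_ch_opt_Halpha_bd}), together with a single-round second-order expansion of the R\'enyi entropy, to obtain a bound whose leading term is $nh$ once restricted to $\Omega$ (by affinity $\sum_k f(\delta_{x_k}) = n f(\text{freq}(x_1^n)) \geq nh$), whose $O(\alpha-1)$ correction is the variance term $\frac{(\alpha-1)\ln 2}{2}\rndBrk{\log(2|A|^2+1) + \sqrt{2 + \text{Var}(f)}}^2$, and whose $O((\alpha-1)^2)$ remainder is $n(\alpha-1)^2 K_\alpha$. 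Bad rounds ($c_k = 0$) pay a bounded penalty governed by $\log|A|$ and the range $\text{Max}(f) - \text{Min}(f)$; summing the $c_1^n$-weighted contributions as in Eqs.~\ref{eq:Halpha_sum_bd}--\ref{eq:Halpha_theta_EAT_bd} and using that $c_k = 0$ has probability $\delta$ produces the mixing penalty $\frac{\alpha}{\alpha-1} n \log\rndBrk{1 + \delta\rndBrk{4^{\frac{\alpha-1}{\alpha}(\log(|A||B|) + \text{Max}(f) - \text{Min}(f) + 1)} - 1}}$, the enlarged exponent accounting for the test register and the min-tradeoff range.

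The main obstacle is precisely the interaction of the mixing (the flag register) with the testing machinery in this last step. The min-tradeoff bound and the second-order variance estimate are guaranteed only on rounds where $\cM'_k$ is applied and the Markov chain holds, yet $\Omega$ and the frequency statistics are read off from $X_1^n$ on \emph{all} rounds, including the bad ones where $\cM_k$ emits $X_k$ without any Markov guarantee. One must verify that the affine min-tradeoff accounting survives the $c_1^n$-conditioning and that the variance contribution can still be extracted after the pinching and mixing have perturbed the state away from the ideal Markov structure; making these estimates compose cleanly, so that the variance term remains $O(1)$ in $n$ while the remainder is exactly the stated $n(\alpha-1)^2 K_\alpha$, is the delicate part and is what forces the restriction $\alpha \in (1,2)$ and the particular form of $K_\alpha$.
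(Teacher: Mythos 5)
Your proposal follows the paper's route for the first two thirds: the mixed channels $\cM^\delta_k$, the $D^\#_\beta$ chain-rule bound $D^{\epsilon_1}_{\max}(\rho\|\sigma) \leq n z_\beta(\epsilon,\delta) + g_0(\epsilon_1)/(\beta-1)$, the conditioning of the $D_{\max}$ inequality on $\Omega$, and the entropic triangle inequality are all exactly what the paper does. But the step you explicitly leave unresolved --- extracting the observed-frequency bound $nh$ from the flag-weighted Rényi sum --- is a genuine gap, and the way you sketch it would fail as stated. Invoking the min-tradeoff function ``in place of the constant $h_k$'' inside the chain rule (Corollary \ref{cor:Markoc_ch_opt_Halpha_bd}) cannot work directly: the chain rule produces $\inf_{\omega} \tilde{H}^{\downarrow}_{\alpha}(A_k|B_k\tilde{R}_{k-1})_{\cM'_k(\omega)}$, an infimum over \emph{all} input states, which knows nothing about the observed outcome $x_k$; there is no mechanism by which the per-round term becomes $f(\delta_{x_k})$, and the conditioning on $\Omega$ is not available inside the $c_1^n$-weighted decomposition of $\tilde{H}^{\uparrow}_{\alpha}(A_1^n|B_1^nC_1^nE)_\theta$. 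The paper resolves this with the standard Dupuis--Metger dummy-register construction, which your proposal gestures at (the range $\text{Max}(f)-\text{Min}(f)$ in your exponents) but never actually deploys: one appends registers $D_k$ via channels $\mathcal{D}_k$ with $H(D_k)_{\tau_x} = \text{Max}(f) - f(\delta_x)$, so that $\tilde{H}^{\downarrow}_{\alpha}(A_k D_k|B_k\tilde{R}_{k-1})_{\bar{\cM}'_k(\omega)}$ is bounded \emph{uniformly} by $\text{Max}(f)$ minus the variance and $K_\alpha$ terms (this is where \cite[Proposition V.3]{Dupuis19} is cited, not re-derived), and the dependence on the observed statistics enters only through the subtraction $\max_{x_1^n\in\Omega} H_\alpha(D_1^n)_{\bar\sigma_{|x_1^n}} \leq n\,\text{Max}(f) - nh$. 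Notably, the ``delicate interaction'' you worry about dissolves under this construction: since $\mathcal{T}_k$ is common to $\cM_k$ and $\cM'_k$, the states $\tau_{x_k}$ depend only on $x_k$ and the $D_1^n$-entropy accounting on $\Omega$ is insensitive to the flags $c_1^n$; the only new verification needed is that the Markov chain lifts to $A_1^{k-1}D_1^{k-1} \leftrightarrow B_1^{k-1}E \leftrightarrow B_k$, which follows because $X_1^{k-1}$ (hence $D_1^{k-1}$) is determined by $A_1^{k-1}B_1^{k-1}$.

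There is also a bookkeeping error in your conditioning step. You claim the triangle inequality on the conditioned states ``reproduces exactly the last line'' with penalty $\alpha\log\frac{1}{P_\rho(\Omega)-\epsilon_1}$. But conditioning $\tilde\rho \leq 2^{d_\beta}\sigma$ on $\Omega$ gives $D_{\max}(\tilde\rho_{|\Omega}\|\sigma_{|\Omega}) \leq d_\beta + \log\frac{P_\sigma(\Omega)}{P_{\tilde\rho}(\Omega)}$, and the factor $P_\sigma(\Omega)$ in the numerator must be retained: it cancels only later, against the $\frac{\alpha}{\alpha-1}\log\frac{1}{P_\sigma(\Omega)}$ cost of removing the conditioning from $\tilde{H}^{\uparrow}_{\alpha}(\cdot)_{\bar\sigma_{|\Omega}}$ via \cite[Lemma B.5]{Dupuis20} --- a step your outline omits entirely, since without the $D$-register detour you never need to uncondition the auxiliary state. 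If you bound $P_\sigma(\Omega)\leq 1$ early, as your accounting implicitly does, the later unconditioning leaves an uncancelled $\frac{\alpha}{\alpha-1}\log\frac{1}{P_\sigma(\Omega)}$ term that is absent from the theorem statement and cannot be controlled ($P_\sigma(\Omega)$ is a property of the auxiliary state, not of $\rho$). Both issues are repairable by following the paper's sequence --- lift $X$ into the entropy via \cite[Lemma B.7]{Dupuis20}, append $D_1^n$, uncondition, apply the Theorem \ref{th:approx_EAT} machinery to the enlarged alphabet, then cancel the $P_\sigma(\Omega)$ factors --- but as written the proposal does not contain the key construction that makes the testing version go through.
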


\begin{proof}
    Just as in the proof of Theorem \ref{th:approx_EAT}, we define
    \begin{align}
        \cM^\delta_k := (1- \delta) \cM'_k + \delta \cM_k
    \end{align}
    for every $k$ and the state
    \begin{align}
        \sigma_{A_1^n B_1^n X_1^n E} := \cM^\delta_n \circ \cdots \circ \cM^\delta_1(\rho^{(0)}_{R_0 E}). \label{eq:aux_st_defn_wtest}
    \end{align}
    so that for $\beta>1$ and $\epsilon_1 >0$, we have
    \begin{align*}
        D^{\epsilon_1}_{\max}(\rho_{A_1^n B_1^n X_1^n E}||\sigma_{A_1^n B_1^n X_1^n  E})
        &\leq n z_\beta(\epsilon, \delta) + \frac{g_0(\epsilon_1)}{\beta-1}.\numberthis
    \end{align*}
    Define $d_{\beta} := n z_\beta(\epsilon, \delta) + \frac{g_0(\epsilon_1)}{\beta-1}$. The bound above implies that there exists a state $\tilde{\rho}_{A_1^n B_1^n X_1^n E}$, which is also classical on $X_1^n$ such that 
    \begin{align}
        P\rndBrk{\rho_{A_1^n B_1^n X_1^n E}, \tilde{\rho}_{A_1^n B_1^n X_1^n E}} \leq \epsilon_1
    \end{align}
    and 
    \begin{align}
        \tilde{\rho}_{A_1^n B_1^n X_1^n E} \leq 2^{d_\beta} \sigma_{A_1^n B_1^n X_1^n E}. 
        \label{eq:EAT_tilde_rho_op_ineq}
    \end{align}
    The registers $X_1^n$ for $\tilde{\rho}$ can be chosen to be classical, since the channel measuring $X_1^n$ only decreases the distance between $\tilde{\rho}$ and $\rho$, and the new state produced would also satisfy Eq. \ref{eq:EAT_tilde_rho_op_ineq}. As the registers $X_1^n$ are classical for both $\sigma$ and $\tilde{\rho}$, we can condition these states on the event $\Omega$. We will call the probability of the event $\Omega$ for the state $\sigma$ and $\tilde{\rho}$ $P_{\sigma}(\Omega)$ and $P_{\tilde{\rho}}(\Omega)$ respectively. Using Lemma \ref{lemm:dist_cond_states} and the Fuchs-van de Graaf inequality, we have 
    \begin{align}
        P\rndBrk{\rho_{A_1^n B_1^n X_1^n E|\Omega} , \tilde{\rho}_{A_1^n B_1^n X_1^n E|\Omega}} \leq 2 \sqrt{\frac{\epsilon_1}{P_{{\rho}}(\Omega)}}.
    \end{align}
    Conditioning Eq. \ref{eq:EAT_tilde_rho_op_ineq} on $\Omega$, we get 
    \begin{align}
        P_{\tilde{\rho}}(\Omega) \tilde{\rho}_{A_1^n B_1^n X_1^n E|\Omega} \leq 2^{d_\beta} P_{\sigma}(\Omega) \sigma_{A_1^n B_1^n X_1^n E|\Omega}. 
        \label{eq:EAT_tilde_rho_op_ineq_cond}
    \end{align}
    Together, the above two equations imply that 
    \begin{align}
        D^{\epsilon_2}_{\max}(\rho_{A_1^n B_1^n X_1^n E|\Omega} ||\sigma_{A_1^n B_1^n X_1^n E|\Omega}) \leq n z_\beta(\epsilon, \delta) + \frac{g_0(\epsilon_1)}{\beta-1} + \log\frac{P_{\sigma}(\Omega)}{P_{\tilde{\rho}}(\Omega)}
    \end{align}
    for $\epsilon_2 := 2\sqrt{\frac{\epsilon_1}{P_{{\rho}}(\Omega)}}$. \\

    For $\epsilon_3 >0$ and $\alpha \in (1, 2)$, we can plug the above in the bound provided by Lemma \ref{lemm:Hmin_rho_to_Halpha_sigma_using_Dmax} to get
    \begin{align}
        H_{\min}^{\epsilon_2+\epsilon_3}(A_1^n|B_1^n E)_{\rho_{|\Omega}} &\geq \tilde{H}^{\uparrow}_{\alpha}(A_1^n|B_1^n E)_{\sigma_{|\Omega}} - \frac{\alpha}{\alpha-1} n z_{\beta}(\epsilon, \delta) \nonumber \\
        & \qquad - \frac{1}{\alpha-1}\rndBrk{\alpha\log\frac{P_{\sigma}(\Omega)}{P_{\tilde{\rho}}(\Omega)}+ g_1(\epsilon_3, \epsilon_2)+ \frac{\alpha g_0(\epsilon_1)}{\beta-1}}.
        \label{eq:Hmin_rho_to_Halpha_sigma_wtest}
    \end{align}
    Now, note that using Eq. \ref{eq:test_maps} and \cite[Lemma B.7]{Dupuis20} we have 
    \begin{align}
        \tilde{H}^{\uparrow}_{\alpha}(A_1^n|B_1^n E)_{\sigma_{|\Omega}} = \tilde{H}^{\uparrow}_{\alpha}(A_1^n X_1^n|B_1^n E)_{\sigma_{|\Omega}}.
    \end{align}
    For every $k$, we introduce a register $D_k$ of dimension $|D_k| = \lceil 2^{\max(f)- \min(f)} \rceil$ and a channel $\mathcal{D}_k : X_k \rightarrow X_k D_k$ as 
    \begin{align}
        \mathcal{D}_k(\omega) := \sum_x \braket{x|\omega|x} \ket{x}\bra{x}\otimes \tau_x
    \end{align}
    where for every $x$, the state $\tau_x$ is a mixture between a uniform distribution on $\{1, 2, \cdots\ , \lfloor 2^{\max(f)- f(\delta_x)} \rfloor\}$ and a uniform distribution on $\{1, 2, \cdots\ , \lceil 2^{\max(f)- f(\delta_x)} \rceil\}$, so that
    \begin{align}
        H(D_k)_{\tau_x} = \max(f)- f(\delta_x)
    \end{align}
    where $\delta_x$ is the distribution with unit weight at element $x$. \\

    Define the channels $\bar{\mathcal{M}}_k := \mathcal{D}_k \circ \cM_k$, $\bar{\mathcal{M}}'_k := \mathcal{D}_k \circ \cM'_k$ and $\bar{\mathcal{M}}^\delta_k := \mathcal{D}_k \circ \cM^\delta_k = (1-\delta)\bar{\mathcal{M}}'_k + \delta \bar{\mathcal{M}}_k$ and the state 
    \begin{align}
        \bar{\sigma}_{A_1^n B_1^n X_1^n D_1^n E} := \bar{\mathcal{M}}^\delta_n \circ \cdots \bar{\mathcal{M}}^\delta_1(\rho^{(0)}_{R_0 E})
    \end{align}
    Note that $\bar{\sigma}_{A_1^n B_1^n X_1^n E} = {\sigma}_{A_1^n B_1^n X_1^n E}$. \cite[Lemma 4.5]{Metger22} implies that this satisfies
    \begin{align*}
        \tilde{H}^{\uparrow}_{\alpha}(A_1^n X_1^n|B_1^n E)_{{\sigma}_{|\Omega}} &= \tilde{H}^{\uparrow}_{\alpha}(A_1^n X_1^n|B_1^n E)_{\bar{\sigma}_{|\Omega}} \\
        &\geq \tilde{H}^{\uparrow}_{\alpha}(A_1^n X_1^n D_1^n |B_1^n E)_{\bar{\sigma}_{|\Omega}} - \max_{x_1^n \in \Omega} H_{\alpha}(D_1^n)_{\bar{\sigma}_{|x_1^n}}
        \numberthis
        \label{eq:Halpha_AX_to_AXD_bd}
    \end{align*}
    For $x_1^n \in \Omega$, we have 
    \begin{align*}
        H_{\alpha}(D_1^n)_{\bar{\sigma}_{|x_1^n}} &\leq H (D_1^n)_{\bar{\sigma}_{|x_1^n}} \\
        &\leq \sum_{k=1}^n H(D_k)_{\tau_{x_k}} \\
        &= \sum_{k=1}^n \max(f)- f(\delta_{x_k}) \\
        &= n \max(f) - n f(\text{freq}(x_1^n)) \\
        &\leq n \max(f) - n h.
        \numberthis
        \label{eq:Hapha_on_Omega}
    \end{align*}
    We can get rid of the conditioning on the right-hand side of Eq. \ref{eq:Halpha_AX_to_AXD_bd} by using \cite[Lemma B.5]{Dupuis20}
    \begin{align}
        \tilde{H}^{\uparrow}_{\alpha}(A_1^n X_1^n D_1^n |B_1^n E)_{\bar{\sigma}_{|\Omega}} \geq \tilde{H}^{\uparrow}_{\alpha}(A_1^n X_1^n D_1^n |B_1^n E)_{\bar{\sigma}} - \frac{\alpha}{\alpha-1}\log \frac{1}{P_{\sigma}(\Omega)}.
        \label{eq:rem_cond_Halpha_AXD_bd}
    \end{align}
    We now show that the channels $\bar{\mathcal{M}}'_k$ satisfy the second condition in Theorem \ref{th:approx_EAT}. For an arbitrary $k \in [n]$ and a sequence of channels $\cN_i \in \{\bar{\mathcal{M}}_i, \bar{\mathcal{M}}'_i\}$ for every $1\leq i < k$, let 
    \begin{align*}
        \eta_{A_1^k B_1^k X_1^k D_1^k E} = \bar{\mathcal{M}}'_k \circ \cN_{k-1} \cdots \circ \cN_1 (\rho^{(0)}_{R_0 E}).
    \end{align*}
    For this state, we have 
    \begin{align*}
        I(A_1^{k-1} D_1^{k-1}: B_k | B_1^{k-1} E)_{\eta} &= I(A_1^{k-1} : B_k | B_1^{k-1} E)_{\eta} + I(D_1^{k-1}: B_k | A_1^{k-1} B_1^{k-1} E)_{\eta} \\
        &= 0
    \end{align*}
    where $I(A_1^{k-1} : B_k | B_1^{k-1} E)_{\eta} = 0$ because of the condition in Eq. \ref{eq:approx_map_Markov_ch_wtest}, and $I(D_1^{k-1}: B_k | A_1^{k-1} B_1^{k-1} E)_{\eta} = 0$ since $X_1^{k-1}$ and hence $D_1^{k-1}$ are determined by $A_1^{k-1} B_1^{k-1}$. This implies that for this state $A_1^{k-1} D_1^{k-1} \leftrightarrow B_1^{k-1} E \leftrightarrow B_k$. Thus, the maps $\bar{\mathcal{M}}_k$ and $\bar{\mathcal{M}}'_k$ satisfy the conditions required for applying Theorem \ref{th:approx_EAT}. Specifically, we can use the bounds in Eq. \ref{eq:Halpha_sigma_to_theta} and \ref{eq:Halpha_theta_EAT_bd_in_Halpha} for bounding $\alpha$-conditional R\'enyi entropy in Eq. \ref{eq:rem_cond_Halpha_AXD_bd}
    \begin{align*}
        \tilde{H}&^{\uparrow}_{\alpha}(A_1^n X_1^n D_1^n |B_1^n E)_{\bar{\sigma}} \\
        &\geq \tilde{H}^{\uparrow}_{\alpha}(A_1^n D_1^n |B_1^n E)_{\bar{\sigma}} \\
        & \geq \sum_{k=1}^n \inf_{\omega_{R_{k-1} \tilde{R}_{k-1}}} \tilde{H}^{\downarrow}_{\alpha}(A_k D_k | B_k \tilde{R}_{k-1})_{\bar{\mathcal{M}}'_k(\omega)} - \frac{\alpha}{\alpha-1} n \log\rndBrk{1 + \delta \rndBrk{2^{\frac{\alpha-1}{\alpha}2\log(|A||D||B|)}-1}}.
        \numberthis
        \label{eq:Halpha_AXD_lb}
    \end{align*}
    The analysis in the proof of \cite[Proposition V.3]{Dupuis19} shows that the first term above can be bounded as 
    \begin{align}
        \inf_{\omega_{R_{k-1} \tilde{R}_{k-1}}} & \tilde{H}^{\downarrow}_{\alpha}(A_k D_k | B_k \tilde{R}_{k-1})_{\bar{\mathcal{M}}'_k(\omega)} \nonumber \\
        & \geq \text{Max}(f) - \frac{(\alpha-1)\ln(2)}{2}\rndBrk{\log(2 |A|^2 + 1) + \sqrt{2+ \text{Var}(f)}}^2 - (\alpha-1)^2 K_\alpha
        \label{eq:Halpha_single_term_bd}
    \end{align}
    Combining Eq. \ref{eq:Halpha_AX_to_AXD_bd}, \ref{eq:Hapha_on_Omega}, \ref{eq:rem_cond_Halpha_AXD_bd}, \ref{eq:Halpha_AXD_lb} and \ref{eq:Halpha_single_term_bd}, we have
    \begin{align*}
        \tilde{H}^{\uparrow}_{\alpha}(A_1^n X_1^n|B_1^n E)_{\bar{\sigma}_{|\Omega}} & \geq nh - \frac{(\alpha-1)\ln(2)}{2}\rndBrk{\log(2 |A|^2 + 1) + \sqrt{2+ \text{Var}(f)}}^2 - n(\alpha-1)^2 K_\alpha \\
        &\qquad \qquad - \frac{\alpha}{\alpha-1} n \log\rndBrk{1 + \delta \rndBrk{2^{\frac{\alpha-1}{\alpha}2\log(|A||D||B|)}-1}} - \frac{\alpha}{\alpha-1}\log \frac{1}{P_{\sigma}(\Omega)}.
        \numberthis
    \end{align*}
    Plugging this into Eq. \ref{eq:Hmin_rho_to_Halpha_sigma_wtest}, we get
    \begin{align*}
        H_{\min}^{\epsilon_2+\epsilon_3}(A_1^n|B_1^n E)_{\rho_{|\Omega}} &\geq nh - \frac{(\alpha-1)\ln(2)}{2}\rndBrk{\log(2 |A|^2 + 1) + \sqrt{2+ \text{Var}(f)}}^2 - n(\alpha-1)^2 K_\alpha \\
        &\qquad - \frac{\alpha}{\alpha-1} n \log\rndBrk{1 + \delta \rndBrk{4^{\frac{\alpha-1}{\alpha}(\log(|A||B|) + \text{max}(f) - \text{min}(f)+1)}-1}} \\
        & \qquad - \frac{\alpha}{\alpha-1} n z_{\beta}(\epsilon, \delta) - \frac{1}{\alpha-1}\rndBrk{\alpha\log\frac{1}{P_{{\rho}}(\Omega) - \epsilon_1}+ g_1(\epsilon_3, \epsilon_2)+ \frac{\alpha g_0(\epsilon_1)}{\beta-1}}.
        \numberthis
        \label{eq:approx_eat_test_final_bd}
    \end{align*}
    where we have used $P_{\tilde{\rho}}(\Omega)\geq P_{{\rho}}(\Omega) - \epsilon_1$ since $\frac{1}{2}\norm{\rho - \tilde{\rho}}_1 \leq P(\rho,\tilde{\rho})\leq \epsilon_1$. Note that the probability of $\Omega$ under the auxiliary state $\sigma$ cancels out.
\end{proof}

\subsection{Limitations and further improvements}
\label{sec:approx_EAT_lim_and_impr}

As we pointed out previously, the dependence of the entropy loss per round on $\epsilon$ is very poor (behaves as $\sim \epsilon^{1/24}$) in Theorem \ref{th:approx_EAT}. The classical version of this theorem has a much better dependence of $O(\sqrt{\epsilon})$ on $\epsilon$ (see Theorem \ref{th:approx_cl_EAT}). The reason for the poor performance of the quantum version is that our bound on the channel divergence (Lemma \ref{lemm:ch_div_bd}) is very weak compared to the bound we can use classically. It should be noted, however, that if Lemma \ref{lemm:ch_div_bd} were to be improved in the future, one could simply plug the new bound into our proof and derive an improvement for Theorem \ref{th:approx_EAT}. \\

A better bound on the channel divergence would have an additional benefit. It could simplify the proof and the Markov chain assumption in our theorem. In particular, it would be much easier to carry out the proof if the mixed channels $\cM^{\delta}_k$ were defined as $(1-\delta)\cM'_k + \delta \tau_{A_k B_k} \otimes \tr_{A_k B_k}\circ\cM_k$ (which is what is done classically), where $\tau_{A_k B_k}$ is the completely mixed state on registers $A_k B_k$. Here, instead of mixing the channel $\cM'_k$ with $\cM_k$, we mix it with $\tau_{A_k B_k} \otimes \tr_{A_k B_k}\circ\cM_k$, which also keeps $D_{\max}(\cM_k || \cM^{\delta}_k)$ small enough. Moreover, this definition ensures that the registers $B_k$ produced by the map $\cM_k^\delta$ always satisfy the Markov chain conditions. If it were possible to show that the divergence between the real state $\cM_n \circ \cdots \circ \cM_1 (\rho^{(0)}_{R_0 E})$ and the auxiliary state $\cM^\delta_n \circ \cdots \circ \cM^\delta_1 (\rho^{(0)}_{R_0 E})$ is small for this definition of $\cM_k^\delta$, then one could directly use the entropy accumulation theorem for lower bounding the entropy for the auxiliary state. We cannot do this in our proof as this definition of the mixed channel $\cM^\delta_k$ also increases the distance from the original channel $\cM_k$ to $\epsilon+2\delta$ and this makes the upper bound in Lemma \ref{lemm:Dsharp_close_st_bd} large (finite even in the limit $\epsilon \rightarrow 0$).\\

It seems that it should be possible to weaken the assumptions for approximate entropy accumulation. The classical equivalent of this theorem (Theorem \ref{th:approx_cl_EAT}) for instance can be proven very easily and requires a much weaker approximation assumption. It would be interesting if one could remove the ``memory'' registers $R_k$ from the assumptions required for approximate entropy accumulation, since these are not typically accessible to the users in applications. \\

Another troubling feature of the approximate entropy accumulation theorem seems to be that it assumes that the size of the side information registers $B_k$ is constant. One might wonder if this is necessary, since continuity bounds like the Alicki-Fannes-Winter (AFW) inequality do not depend on the size of the side information. It turns out that a bound on the side information size is indeed necessary in this case. We show a simple classical example to demonstrate this in Appendix \ref{sec:size_of_B_approx_EAT}. The necessity of such a bound also rules out a similar approximate extension of the generalised entropy accumulation theorem (GEAT). \\

\section*{Acknowledgments}
We would like to thank Omar Fawzi for interesting discussions and for pointing out Lemma \ref{lemm:hatD_pure_st}. We also thank Ernest Tan and Amir Arqand, whose observations helped improve Theorem \ref{th:approx_EAT}. AM was supported by the J.A. DeS\`eve Foundation and by bourse d'excellence Google. This work was also supported by the Natural Sciences and Engineering Research Council of Canada.

\appendix

\addcontentsline{toc}{section}{APPENDICES}
\section*{APPENDICES}

\section{Entropic triangle inequalities cannot be improved much}

In this section, we will construct a classical counterexample to show that it is not possible to improve Lemma \ref{lemm:Hmin_rho_to_Halpha_sigma_using_Dmax} to get a result like
\begin{align}
    H^{\epsilon'}_{\min}(A|B)_\rho \geq H^{\epsilon}_{\min}(A|B)_\eta - O(D^{\epsilon''}_{\max}(\rho || \eta))
    \label{eq:better_smooth_min_tri_ineq}
\end{align}
where $\epsilon, \epsilon' >0$ and the constant in front of $D^{\epsilon''}_{\max}(\rho || \eta)$ is independent of the dimensions $|A|$ and $|B|$. \\

Consider the probability distribution $p_{AB}$ where $B$ is chosen to be equal to $1$ with probability $1- \epsilon$ and $0$ with probability $\epsilon$, and $A_1^n$ is chosen to be a random $n$-bit string if $B=1$ otherwise $A_1^n$ is chosen to be the all $0$ string. Let $E$ be the event that $B=0$. Then, we have 
\begin{align*}
    p_{AB|E} \leq \frac{1}{p(E)} p_{AB} = \frac{1}{\epsilon} p_{AB}
\end{align*}
or equivalently $D_{\max}(p_{AB|E} || p_{AB}) \leq \log \frac{1}{\epsilon}$. In this case, we have $H^{\epsilon}_{\min}(A|B)_{p} = n$ (where we are smoothing in the trace distance) and $H^{\epsilon'}_{\min}(A|B)_{p_{|E}} = \log \frac{1}{1-\epsilon'} = O(1)$ (independent of $n$). If Eq. \ref{eq:better_smooth_min_tri_ineq}, were true then we would have 
\begin{align*}
    n - O\rndBrk{\log \frac{1}{\epsilon}} & \leq H^{\epsilon}_{\min}(A|B)_{P} - O(D^{\epsilon''}_{\max}(p_{AB|E} || p_{AB})) \\
    & \leq H^{\epsilon'}_{\min}(A|B)_{p_{|E}} =O(1)
\end{align*}
which would lead to a contradiction because $n$ is a free parameter and we can let $n \rightarrow \infty$. \\

The same example can be used to show that it is not possible to improve Corollary \ref{cor:H_rho_to_Halpha_sigma_using_D} to an equation of the form
\begin{align*}
    H(A|B)_{\rho} \geq H(A|B)_{\eta} - O(D(\rho || \eta)).
\end{align*}
For $\rho = P_{|E}$ and $\eta = P$, such a bound would imply that
\begin{align*}
    0 \geq (1-\epsilon)n - \log \frac{1}{\epsilon}
\end{align*}
which is not true for large $n$. 

\section{Bounds for $D^{\#}_{\alpha}$ of the form in Lemma \ref{lemm:Dsharp_close_st_bd} necessarily diverge in the limit $\alpha=1$}
\label{sec:better_Dsharp_bd_disc}

Classically, we have the following bound for R\'enyi entropies. 
\begin{lemma}
    Suppose $\epsilon \in (0,1]$, $d \geq \epsilon^{1/2}$, and $p$ and $q$ are two distributions over an alphabet $\mathcal{X}$ such that $\frac{1}{2} \norm{p-q}_1 \leq \epsilon$ and $D_{\max}(p||q) \leq d < \infty$, for $\alpha>1$ we have 
    \begin{align}
        D_{\alpha}(p||q) \leq \frac{1}{\alpha-1} \log\rndBrk{(1+\sqrt{\epsilon})^{\alpha-1}(1-2\sqrt{\epsilon}) + 2^{d(\alpha-1)+1}\sqrt{\epsilon}}.
        \label{eq:cl_Dalpha_close_st_bd}
    \end{align}
    In the limit, $\alpha \rightarrow 1$, we get the bound 
    \begin{align}
        D(p||q) \leq (1-2\sqrt{\epsilon}) \log(1+ \sqrt{\epsilon}) + 2\sqrt{\epsilon} d.
        \label{eq:cl_D_close_st_bd}
    \end{align}
\end{lemma}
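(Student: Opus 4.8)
The plan is to work directly with the likelihood ratio $r(x) := p(x)/q(x)$, using the elementary identity
\begin{align*}
    2^{(\alpha-1) D_{\alpha}(p \| q)} = \sum_x p(x)^\alpha q(x)^{1-\alpha} = \sum_x p(x)\, r(x)^{\alpha-1},
\end{align*}
so that it suffices to upper bound $\sum_x p(x) r(x)^{\alpha-1}$. This is the classical shadow of the $\hat{D}_\alpha$ computation in the proof of Lemma \ref{lemm:Dsharp_close_st_bd}, with a partition of $\mathcal X$ into a ``good'' and a ``bad'' set playing the role of the projectors $\Pi,\Pi_\perp$ there. First I would invoke Lemma \ref{lemm:small_tr_norm_implies_small_Dmax} with $S := \{x : p(x) \le (1+\sqrt\epsilon) q(x)\}$, which yields both $q(S^c) \le \sqrt\epsilon$ and $p(S^c) \le \sqrt\epsilon + \epsilon \le 2\sqrt\epsilon$, the last step using $\epsilon \le 1$.

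On $S$ the ratio satisfies $r(x) \le 1+\sqrt\epsilon$, while everywhere the hypothesis $D_{\max}(p\|q) \le d$ forces $r(x) \le 2^d$. Splitting the sum and using $\alpha > 1$ to raise both ratio bounds to the power $\alpha-1$ gives
\begin{align*}
    \sum_x p(x) r(x)^{\alpha-1} \le (1+\sqrt\epsilon)^{\alpha-1} p(S) + 2^{d(\alpha-1)} p(S^c).
\end{align*}
The key manipulation is then to substitute $p(S) = 1 - p(S^c)$ and collect the $p(S^c)$ contributions,
\begin{align*}
    \sum_x p(x) r(x)^{\alpha-1} \le (1+\sqrt\epsilon)^{\alpha-1} + p(S^c)\rndBrk{2^{d(\alpha-1)} - (1+\sqrt\epsilon)^{\alpha-1}},
\end{align*}
after which I replace $p(S^c)$ by its upper bound $2\sqrt\epsilon$ to land exactly on $(1+\sqrt\epsilon)^{\alpha-1}(1-2\sqrt\epsilon) + 2^{d(\alpha-1)+1}\sqrt\epsilon$. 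Taking $\log$ and dividing by $\alpha-1>0$ produces Eq. \ref{eq:cl_Dalpha_close_st_bd}.

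For the limiting statement Eq. \ref{eq:cl_D_close_st_bd} I would pass to $\alpha \to 1^+$ using $D(p\|q) = \lim_{\alpha\to 1^+} D_\alpha(p\|q)$. Writing the right-hand side of Eq. \ref{eq:cl_Dalpha_close_st_bd} as $\frac{1}{\alpha-1}\log F(\alpha)$ with $F(\alpha) := (1+\sqrt\epsilon)^{\alpha-1}(1-2\sqrt\epsilon) + 2^{d(\alpha-1)+1}\sqrt\epsilon$, one checks $F(1) = (1-2\sqrt\epsilon)+2\sqrt\epsilon = 1$, so the expression is a $0/0$ form and L'H\^opital returns $F'(1)/\ln 2$. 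Differentiating $F$ at $\alpha=1$ gives $(1-2\sqrt\epsilon)\ln(1+\sqrt\epsilon) + 2\sqrt\epsilon\, d \ln 2$, and dividing by $\ln 2$ recovers precisely $(1-2\sqrt\epsilon)\log(1+\sqrt\epsilon) + 2\sqrt\epsilon\, d$.

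The one delicate point — and the step I expect to be the main obstacle to stating cleanly — is the sign of the coefficient $2^{d(\alpha-1)} - (1+\sqrt\epsilon)^{\alpha-1}$ before replacing $p(S^c)$ by its upper bound: the substitution is only valid if this coefficient is nonnegative, i.e. if $2^d \ge 1+\sqrt\epsilon$. This is automatic in the nontrivial case $S^c \ne \emptyset$, since then some ratio exceeds $1+\sqrt\epsilon$ yet is bounded by $2^d$; the degenerate case $S^c = \emptyset$ makes $p(S^c)=0$ and is immediate. The assumption $d \ge \epsilon^{1/2}$ is what guarantees the bound stays in the regime where this coefficient points the right way. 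Everything else is routine bookkeeping with the likelihood ratio and a single L'H\^opital evaluation.
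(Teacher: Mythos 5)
Your proof is correct and follows essentially the same route as the paper's: the same set $S$ from Lemma \ref{lemm:small_tr_norm_implies_small_Dmax}, the same split of $\sum_x p(x)\,(p(x)/q(x))^{\alpha-1}$ with ratio bounds $1+\sqrt{\epsilon}$ on $S$ and $2^d$ off $S$, and the same maximisation over $p(S^c) \le 2\sqrt{\epsilon}$, with the $\alpha \to 1$ limit evaluated as the paper intends. Your explicit check that the coefficient $2^{d(\alpha-1)} - (1+\sqrt{\epsilon})^{\alpha-1}$ is nonnegative (via a point of $S^c$ witnessing $2^d \ge 1+\sqrt{\epsilon}$) is in fact more careful than the paper's one-line appeal to $d \ge \sqrt{\epsilon} \ge \log(1+\sqrt{\epsilon})$, which as written only holds for the natural logarithm.
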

\begin{proof}
    Classically, we have that the set $S:= \{x \in \mathcal{X}: p(x) \leq (1+\sqrt{\epsilon}) q(x) \}$ is such that $p(S) \geq 1- 2\sqrt{\epsilon}$ using Lemma \ref{lemm:small_tr_norm_implies_small_Dmax}. Thus, for $\alpha>1$ we have 
    \begin{align*}
        \sum_{x \in \mathcal{X}} p(x) \rndBrk{\frac{p(x)}{q(x)}}^{\alpha-1} &= 
        \sum_{x \in S}p(x) \rndBrk{\frac{p(x)}{q(x)}}^{\alpha-1} + \sum_{x \not\in S}p(x) \rndBrk{\frac{p(x)}{q(x)}}^{\alpha-1} \\
        &\leq \sum_{x \in S} (1+ \sqrt{\epsilon})^{\alpha-1} p(x) + \sum_{x \not\in S} 2^{d(\alpha-1)}p(x) \\
        &=(1+ \sqrt{\epsilon})^{\alpha-1} p(S) + 2^{d(\alpha-1)}p(S^c) \\
        &\leq (1+ \sqrt{\epsilon})^{\alpha-1} (1- 2\sqrt{\epsilon}) +  2^{d(\alpha-1)+1}\sqrt{\epsilon} 
    \end{align*}
    where in the second line we used the definition of set $S$ and the fact that $D_{\max}(p||q) \leq d$, in the last line we use the fact that since $d \geq \sqrt{\epsilon} \geq \log(1+ \sqrt{\epsilon})$, the convex sum is maximised for the largest possible value of $p(S^c)$, which is $2\sqrt{\epsilon}$. The bound now follows. 
\end{proof}

We observed in Sec. \ref{sec:div_bd_approx_eq_st} that the bound in Lemma \ref{lemm:Dsharp_close_st_bd} for $D^{\#}_{\alpha}$ tends to $\infty$ as $\alpha \rightarrow 1$ for a fixed $\epsilon>0$. One may wonder if a bound like Eq. \ref{eq:cl_D_close_st_bd} exists for $\lim_{\alpha \rightarrow 1} D^\#_{\alpha}(\rho || \sigma) = \hat{D}(\rho || \sigma)$ \cite{Bergh21}. We show in the following that such a bound is not possible. \\

Suppose, that for all $\epsilon \in [0,a)$ (a small neighborhood of $0$), $1 \leq d < \infty$, states $\rho$ and $\sigma$, which satisfy $\frac{1}{2}\norm{\rho - \sigma}_1 \leq \epsilon$ and $\rho \leq 2^d \sigma$, the following bound holds
\begin{align}
    \hat{D}(\rho || \sigma) \leq f(\epsilon, d)
    \label{eq:supposed_hatD_bd}
\end{align}
where $f(\epsilon, d)$ is such that $\lim_{\epsilon \rightarrow 0} f(\epsilon, d) = f(0,d)=0$ for every $1\leq d <\infty$. Note that the upper bound in Eq. \ref{eq:cl_D_close_st_bd} is of this form. It is known that for pure states $\rho$, $\hat{D}(\rho || \sigma) = D_{\max}(\rho || \sigma)$. We will use this to construct a contradiction. 
\begin{lemma}\footnote{This Lemma was pointed out to us by Omar Fawzi.}
    For a pure state $\rho = \ket{\rho}\bra{\rho}$ and a state $\sigma$, we have 
    \begin{align*}
        \hat{D}(\rho || \sigma) = D_{\max}(\rho || \sigma) = \bra{\rho} \sigma^{-1} \ket{\rho}.
    \end{align*}
    \label{lemm:hatD_pure_st}
\end{lemma}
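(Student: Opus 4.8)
The plan is to prove both equalities by a short direct computation, exploiting the fact that conjugating a rank-one projector by $\sigma^{-1/2}$ again produces a rank-one operator whose single nonzero eigenvalue can be read off from its trace. First I would dispose of the degenerate case: if $\rho \not\ll \sigma$ then $\bra{\rho}\sigma^{-1}\ket{\rho} = \infty$ and, by their definitions, both $\hat{D}(\rho\|\sigma)$ and $D_{\max}(\rho\|\sigma)$ are $+\infty$, so the identity holds trivially. Otherwise $\rho \ll \sigma$, and I may restrict attention to $\mathrm{supp}(\sigma)$ and assume $\sigma$ is invertible there, exactly as in the proof of Lemma \ref{lemm:Dsharp_close_st_bd}. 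Throughout write $\mu := \bra{\rho}\sigma^{-1}\ket{\rho}$; note that both claimed values should be read as $\log \mu$, since $D_{\max}$ and $\hat{D}$ are logarithmic quantities.

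For the $D_{\max}$ equality I start from $D_{\max}(\rho\|\sigma) = \inf\{\lambda : \rho \leq 2^\lambda \sigma\}$. Conjugating the operator inequality $\ket{\rho}\bra{\rho} \leq 2^\lambda \sigma$ by $\sigma^{-1/2}$ turns it into $\sigma^{-1/2}\ket{\rho}\bra{\rho}\sigma^{-1/2} \leq 2^\lambda \Id$. The left-hand side is a positive rank-one operator, so its operator norm equals its trace, namely $\tr(\sigma^{-1/2}\ket{\rho}\bra{\rho}\sigma^{-1/2}) = \bra{\rho}\sigma^{-1}\ket{\rho} = \mu$. Hence the smallest admissible value of $2^\lambda$ is $\mu$, giving $D_{\max}(\rho\|\sigma) = \log \mu$.

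For the $\hat{D}$ equality I would compute $\hat{D}_\alpha(\rho\|\sigma)$ directly for every $\alpha > 1$ and then take $\alpha \to 1$. The rank-one operator $\sigma^{-1/2}\ket{\rho}\bra{\rho}\sigma^{-1/2}$ has eigenvalue $\mu$ on the unit vector $\ket{v} := \mu^{-1/2}\sigma^{-1/2}\ket{\rho}$ (one checks $\braket{v|v}=\mu^{-1}\bra{\rho}\sigma^{-1}\ket{\rho}=1$), so $(\sigma^{-1/2}\rho\sigma^{-1/2})^\alpha = \mu^\alpha \ket{v}\bra{v}$. Substituting into the defining formula gives $\tr\!\big(\sigma (\sigma^{-1/2}\rho\sigma^{-1/2})^\alpha\big) = \mu^\alpha \bra{v}\sigma\ket{v} = \mu^\alpha \cdot \mu^{-1}\braket{\rho|\rho} = \mu^{\alpha-1}$, using $\braket{\rho|\rho}=1$. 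Therefore $\hat{D}_\alpha(\rho\|\sigma) = \tfrac{1}{\alpha-1}\log \mu^{\alpha-1} = \log \mu$ for every $\alpha>1$, and in particular $\hat{D}(\rho\|\sigma) = \lim_{\alpha\to 1}\hat{D}_\alpha(\rho\|\sigma) = \log \mu$, matching the $D_{\max}$ value.

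The computations are short, so the only care points — and hence the ``main obstacle'', such as it is — are the support/invertibility reduction in the degenerate case and the justification that conjugation by $\sigma^{-1/2}$ preserves the rank-one structure and identifies the relevant eigenvalue with $\bra{\rho}\sigma^{-1}\ket{\rho}$. Both are routine, so I expect no genuine difficulty; the value of the lemma lies in the clean closed form it yields, which Appendix \ref{sec:better_Dsharp_bd_disc} then exploits — via the coincidence $\hat{D}(\rho\|\sigma)=D_{\max}(\rho\|\sigma)$ for pure $\rho$ — to rule out a dimension-free continuity bound for $\hat{D}$.
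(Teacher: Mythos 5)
Your proof is correct, and your $D_{\max}$ computation is the same as the paper's in substance: both reduce to the observation that $\sigma^{-1/2}\ket{\rho}\bra{\rho}\sigma^{-1/2}$ is a positive rank-one operator, so its operator norm equals its trace $\bra{\rho}\sigma^{-1}\ket{\rho}$ (the paper starts from $D_{\max}(\rho\|\sigma)=\log\|\sigma^{-1/2}\rho\sigma^{-1/2}\|_\infty$, you derive the same from the infimum definition by conjugation; this is a cosmetic difference). Where you genuinely diverge is the $\hat{D}$ part. The paper works directly at $\alpha=1$ with the closed form $\hat{D}(\rho\|\sigma)=\tr\rndBrk{\rho\log\rndBrk{\rho^{1/2}\sigma^{-1}\rho^{1/2}}}$, exploiting that $\rho^{1/2}=\rho=\ket{\rho}\bra{\rho}$ for a pure state so that $\rho^{1/2}\sigma^{-1}\rho^{1/2}=\bra{\rho}\sigma^{-1}\ket{\rho}\,\ket{\rho}\bra{\rho}$ and the logarithm acts on a scalar multiple of a rank-one projector. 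You instead compute $\hat{D}_\alpha(\rho\|\sigma)=\frac{1}{\alpha-1}\log\tr\rndBrk{\sigma\rndBrk{\sigma^{-1/2}\rho\sigma^{-1/2}}^{\alpha}}=\log\mu$ for every $\alpha>1$ (your eigenvector calculation with $\ket{v}=\mu^{-1/2}\sigma^{-1/2}\ket{\rho}$ and $\bra{v}\sigma\ket{v}=\mu^{-1}$ is right) and then pass to the limit. Your route buys uniformity in $\alpha$ — a slightly stronger statement, namely $\hat{D}_\alpha(\rho\|\sigma)=D_{\max}(\rho\|\sigma)$ for all $\alpha>1$ when $\rho$ is pure — and avoids manipulating the operator logarithm; its cost is one external ingredient the paper does not need, the convergence $\hat{D}(\rho\|\sigma)=\lim_{\alpha\to 1}\hat{D}_\alpha(\rho\|\sigma)$ to the Belavkin--Staszewski divergence, which is standard but which you invoke without proof (the paper only records the analogous limit for $D^{\#}_\alpha$, citing \cite{Bergh21}). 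Two further points in your favor: your handling of the degenerate case $\rho\not\ll\sigma$ is a legitimate addition the paper omits, and you are right that the displayed identity in the lemma statement is missing a logarithm — both proofs actually yield $\log\bra{\rho}\sigma^{-1}\ket{\rho}$, which is also the value the counterexample construction in Appendix \ref{sec:better_Dsharp_bd_disc} uses ($\hat{D}(\rho\|\sigma_\epsilon)=\log\frac{1}{\delta}$).
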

\begin{proof}
    First, we can evaluate $\hat{D}$ as 
    \begin{align*}
        \hat{D}(\rho || \sigma) &= \tr\rndBrk{\rho \log \rndBrk{\rho^{\frac{1}{2}} \sigma^{-1} \rho^{\frac{1}{2}}}} \\
        &= \tr\rndBrk{\ket{\rho}\bra{\rho} \log \rndBrk{\ket{\rho}\bra{\rho} \sigma^{-1} \ket{\rho}\bra{\rho}}} \\
        &= \tr\rndBrk{\ket{\rho}\bra{\rho} \log (\bra{\rho} \sigma^{-1} \ket{\rho}) \ket{\rho}\bra{\rho}}\\
        &= \log \bra{\rho} \sigma^{-1} \ket{\rho}.
    \end{align*}
    Next, we have that 
    \begin{align*}
        D_{\max}(\rho || \sigma) &= \log \norm{\sigma^{-\frac{1}{2}} \rho \sigma^{-\frac{1}{2}}}_{\infty} \\
        &= \log \norm{\sigma^{-\frac{1}{2}} \ket{\rho}\bra{\rho} \sigma^{-\frac{1}{2}}}_{\infty} \\
        &= \log \tr \rndBrk{\sigma^{-\frac{1}{2}} \ket{\rho}\bra{\rho} \sigma^{-\frac{1}{2}}}\\
        &= \log \bra{\rho} \sigma^{-1} \ket{\rho}.
    \end{align*}
\end{proof}
To obtain a contradiction, let $\epsilon \in [0, a^2)$. Define the states 
\begin{align*}
    \rho &:= \ket{0}\bra{0} = \begin{pmatrix}
        1 \ &0 \\
        0 \ &0
    \end{pmatrix}\\
    \sigma'_\epsilon &:= (\sqrt{1- \epsilon} \ket{0} + \sqrt{\epsilon}\ket{1}) (\sqrt{1- \epsilon} \ket{0} + \sqrt{\epsilon}\ket{1})^\dagger \\
    &= \begin{pmatrix}
        1 - \epsilon \ & \sqrt{\epsilon(1- \epsilon)} \\
        \sqrt{\epsilon(1- \epsilon)} \ & \epsilon
    \end{pmatrix}\\
    \sigma_\epsilon &:= (1- \delta) \sigma'_\epsilon + \delta \rho \\
    &= \begin{pmatrix}
        (1 - \epsilon)(1- \delta) + \delta \ & (1- \delta)\sqrt{\epsilon(1- \epsilon)} \\
        (1- \delta)\sqrt{\epsilon(1- \epsilon)} \ & (1- \delta)\epsilon
    \end{pmatrix}
\end{align*}
where $\{\ket{0}, \ket{1}\}$ is the standard basis and $\delta \in (0,1)$ is a parameter, which will be chosen later. Observe that $F(\rho, \sigma_\epsilon) = {\langle e_0, \sigma_\epsilon e_0 \rangle}= {1 - \epsilon (1-\delta)}$, which implies that $\frac{1}{2}\norm{\rho- \sigma_\epsilon}_1 \leq \sqrt{\epsilon} \in [0,a)$. For these definitions, we have
\begin{align*}
    \sigma_\epsilon^{-1} = \frac{1}{(1-\delta)\delta\epsilon}\begin{pmatrix}
        (1- \delta)\epsilon \ & -(1- \delta)\sqrt{\epsilon(1- \epsilon)} \\
        -(1- \delta)\sqrt{\epsilon(1- \epsilon)} \ & (1 - \epsilon)(1- \delta) + \delta
    \end{pmatrix}
\end{align*}
which implies that $\hat{D}(\rho || \sigma_\epsilon) = \log\frac{1}{\delta}$ using Lemma \ref{lemm:hatD_pure_st}. We can fix $\delta = \frac{1}{10}$. Note that $\hat{D}(\rho || \sigma_\epsilon) >0$ is independent of $\epsilon$. Now observe that if the bound in Eq. \ref{eq:supposed_hatD_bd} were true, then as $\epsilon \rightarrow 0$, $\hat{D}(\rho || \sigma_\epsilon) = \log(10) \rightarrow 0$, which leads us to a contradiction. Thus, we cannot have bounds of the form in Eq. \ref{eq:supposed_hatD_bd} (also see \cite{Bluhm22}). Consequently, any kind of bound on $\hat{D}_{\alpha}$ or $D^\#_\alpha$ which results in a bound of the form in Eq. \ref{eq:supposed_hatD_bd} as $\alpha \rightarrow 1$, for example, the bound in Eq. \ref{eq:cl_Dalpha_close_st_bd}, is also not possible at least close to $\alpha=1$.\\

It should be noted that the reason we can have bounds of the form in Lemma \ref{lemm:Dsharp_close_st_bd}, despite the fact that no good bound on $\hat{D} = \lim_{\alpha \rightarrow 1} D^\#_\alpha$ can be produced is that $D^\#_\alpha$, unlike the conventional generalizations of the R\'enyi divergence, is \textbf{not monotone} in $\alpha$ \cite[Remark 3.3]{Fawzi21}(otherwise the above counterexample would also give a no-go argument for $D^\#_\alpha$).

\section{Transforming lemmas for EAT from $\tilde{H}^{\downarrow}_\alpha$ to $\tilde{H}^{\uparrow}_\alpha$}

We have to redo the Lemmas used in \cite{Dupuis20} using $\tilde{H}^{\uparrow}_\alpha$ because we were only able to prove the dimension bound we need ($\tilde{H}^{\uparrow}_\alpha (A|BC)\geq \tilde{H}^{\uparrow}_\alpha(A|B) - 2\log|C|$) in terms of $\tilde{H}^{\uparrow}_\alpha$ 

\begin{lemma}[{\cite[Lemma 3.1]{Dupuis20}}]
    For $\rho_{A_1 A_2 B}$ and $\sigma_B$ be states and $\alpha \in (0, \infty)$, we have the chain rule
    \begin{align}
        \tilde{D}_\alpha (\rho_{A_1 B} || \Id_{A_1} \otimes \sigma_B) - \tilde{D}_\alpha (\rho_{A_1 A_2 B} || \Id_{A_1 A_2} \otimes \sigma_B)= \tilde{H}_\alpha^{\downarrow}(A_2 | A_1 B)_{\nu}
    \end{align}
    where the state $\nu_{A_1 A_2 B}$ is defined as 
    \begin{align*}
        & \nu_{A_1 B} := \frac{\rndBrk{\rho_{A_1 B}^{\frac{1}{2}} \sigma_B^{-\alpha'} \rho_{A_1 B}^{\frac{1}{2}}}^\alpha}{\tr\rndBrk{\rho_{A_1 B}^{\frac{1}{2}} \sigma_B^{-\alpha'} \rho_{A_1 B}^{\frac{1}{2}}}^\alpha}\\
        & \nu_{A_1 A_2 B} := \nu_{A_1 B}^{\frac{1}{2}} \rho_{A_2 | A_1 B} \nu_{A_1 B}^{\frac{1}{2}}
    \end{align*}
    and $\alpha':= \frac{\alpha-1}{\alpha}$.
    \label{lemm:EAT_ch_rule}
\end{lemma}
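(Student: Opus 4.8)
The plan is to reduce the claimed divergence identity to a single multiplicative relation between three un-normalised ``$Q$-quantities'' and then verify that relation by repeated use of the fact that $XX^\dagger$ and $X^\dagger X$ share the same non-zero spectrum. Throughout write $\alpha' = \frac{\alpha-1}{\alpha}$, and since every state in sight is normalised, set $\tilde{Q}_\alpha(\tau\|\gamma) := \tr\bigl[(\gamma^{-\alpha'/2}\tau\gamma^{-\alpha'/2})^\alpha\bigr]$, so that $\tilde{D}_\alpha(\tau\|\gamma) = \frac{1}{\alpha-1}\log \tilde{Q}_\alpha(\tau\|\gamma)$. Recalling $\tilde{H}^{\downarrow}_{\alpha}(A_2|A_1 B)_\nu = -\tilde{D}_\alpha(\nu_{A_1 A_2 B}\|\Id_{A_2}\otimes\nu_{A_1 B})$, the asserted chain rule is, after applying $\frac{1}{\alpha-1}\log$, equivalent to the product identity
\[
\tilde{Q}_\alpha(\rho_{A_1 A_2 B}\|\Id_{A_1 A_2}\otimes\sigma_B) = \tilde{Q}_\alpha(\rho_{A_1 B}\|\Id_{A_1}\otimes\sigma_B)\cdot\tilde{Q}_\alpha(\nu_{A_1 A_2 B}\|\Id_{A_2}\otimes\nu_{A_1 B}),
\]
which I would prove directly.

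First I would introduce the two positive operators $G := \rho_{A_1 B}^{1/2}(\Id_{A_1}\otimes\sigma_B^{-\alpha'})\rho_{A_1 B}^{1/2}$ on $A_1 B$ and $H := \rho_{A_1 A_2 B}^{1/2}(\Id_{A_1 A_2}\otimes\sigma_B^{-\alpha'})\rho_{A_1 A_2 B}^{1/2}$ on $A_1 A_2 B$. Applying the $XX^\dagger \leftrightarrow X^\dagger X$ spectrum identity with $X=\rho^{1/2}(\Id\otimes\sigma_B^{-\alpha'/2})$ shows that $\tilde{Q}_\alpha(\rho_{A_1 B}\|\Id\otimes\sigma_B)=\tr(G^\alpha)=:Q_1$ and $\tilde{Q}_\alpha(\rho_{A_1 A_2 B}\|\Id\otimes\sigma_B)=\tr(H^\alpha)=:Q_2$. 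Observe that the given definition $\nu_{A_1 B}=G^\alpha/\tr(G^\alpha)=G^\alpha/Q_1$ is exactly this normalisation, which is what makes $\nu_{A_1 B}$ a state.

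Next I would compute $Q_3 := \tilde{Q}_\alpha(\nu_{A_1 A_2 B}\|\Id_{A_2}\otimes\nu_{A_1 B})$. Using $1-\alpha'=\tfrac1\alpha$ together with $\nu_{A_1 A_2 B}=\nu_{A_1 B}^{1/2}\rho_{A_2|A_1 B}\nu_{A_1 B}^{1/2}$, the inner operator collapses to $\nu_{A_1 B}^{(1-\alpha')/2}\rho_{A_2|A_1 B}\nu_{A_1 B}^{(1-\alpha')/2}=\nu_{A_1 B}^{1/(2\alpha)}\rho_{A_2|A_1 B}\nu_{A_1 B}^{1/(2\alpha)}$; substituting $\nu_{A_1 B}^{1/(2\alpha)}=Q_1^{-1/(2\alpha)}G^{1/2}$ yields $Q_3 = Q_1^{-1}\tr\bigl[(G^{1/2}\rho_{A_2|A_1 B}G^{1/2})^\alpha\bigr]$. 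Thus the product identity $Q_2=Q_1 Q_3$ reduces to the single trace equality $\tr(H^\alpha)=\tr\bigl[(G^{1/2}\rho_{A_2|A_1 B}G^{1/2})^\alpha\bigr]$. To close this I would use the conditional-state prescription $\rho_{A_2|A_1 B}=(\Id_{A_2}\otimes\rho_{A_1 B}^{-1/2})\,\rho_{A_1 A_2 B}\,(\Id_{A_2}\otimes\rho_{A_1 B}^{-1/2})$ (the same prescription that defines $\nu_{A_1 A_2 B}$ from $\nu_{A_1 B}$). Setting $Y:=(\Id_{A_2}\otimes G^{1/2}\rho_{A_1 B}^{-1/2})\,\rho_{A_1 A_2 B}^{1/2}$ gives $G^{1/2}\rho_{A_2|A_1 B}G^{1/2}=YY^\dagger$, and one last application of the spectrum identity shows this shares its non-zero spectrum with $Y^\dagger Y=\rho_{A_1 A_2 B}^{1/2}\bigl(\rho_{A_1 B}^{-1/2}G\rho_{A_1 B}^{-1/2}\bigr)\rho_{A_1 A_2 B}^{1/2}$. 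Since $\rho_{A_1 B}^{-1/2}G\rho_{A_1 B}^{-1/2}=\Id_{A_1}\otimes\sigma_B^{-\alpha'}$ by definition of $G$, this equals $H$, so $\tr(\cdot^\alpha)$ agrees and the identity follows.

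The main obstacle I anticipate is bookkeeping rather than conceptual: correctly tracking which tensor factor each operator acts on (the silent $\Id_{A_2}$ tensored onto the $A_1 B$-operators $\nu_{A_1 B}^{\pm}$, $G$, and $\rho_{A_1 B}^{-1/2}$), and justifying all the inverse and fractional powers together with the support condition $\mathrm{supp}(\rho_{A_1 A_2 B})\subseteq \mathrm{supp}(\Id_{A_2}\otimes\rho_{A_1 B})$ that makes $\rho_{A_2|A_1 B}$ and the generalised inverses well defined. I would dispatch these by restricting at the outset to the supports of $\sigma_B$, $\rho_{A_1 B}$, and $\nu_{A_1 B}$, on which every inverse is an honest inverse and every use of $XX^\dagger\leftrightarrow X^\dagger X$ is valid.
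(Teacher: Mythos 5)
Your proof is correct: the reduction to the product identity $Q_2 = Q_1 Q_3$, the computation $Q_3 = Q_1^{-1}\tr\bigl[(G^{1/2}\rho_{A_2|A_1B}G^{1/2})^\alpha\bigr]$ using $1-\alpha' = \tfrac{1}{\alpha}$, and the final spectrum swap $YY^\dagger \leftrightarrow Y^\dagger Y$ (with the support projector of $\rho_{A_1 B}$ absorbed into $\rho_{A_1 A_2 B}^{1/2}$, which you correctly flag) all go through. Note that the paper gives no proof of this statement, importing it from \cite[Lemma 3.1]{Dupuis20}; your argument is essentially the original one from that reference — the same conditional-state decomposition $\rho_{A_1 A_2 B} = \rho_{A_1 B}^{1/2}\,\rho_{A_2|A_1 B}\,\rho_{A_1 B}^{1/2}$ and the same $XX^\dagger$ versus $X^\dagger X$ spectral manipulations — merely repackaged as a single multiplicative identity between the three trace quantities.
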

\begin{corollary}[Chain rule for $ \tilde{H}^{\downarrow}_\alpha$ {\cite[Theorem 3.2]{Dupuis20}}]
    For $\alpha \in (0, \infty)$, a state $\rho_{A_1 A_2 B}$, we have the chain rule
    \begin{align}
        \tilde{H}^{\downarrow}_\alpha (A_1 A_2 | B)_\rho = \tilde{H}^{\downarrow}_\alpha (A_1| B)_\rho + \tilde{H}^{\downarrow}_\alpha (A_2 | A_1 B)_\nu
    \end{align}
    where the state $\nu_{A_1 A_2 B}$ is defined as 
    \begin{align*}
        & \nu_{A_1 B} := \frac{\rndBrk{\rho_{A_1 B}^{\frac{1}{2}} \rho_B^{-\alpha'} \rho_{A_1 B}^{\frac{1}{2}}}^\alpha}{\tr\rndBrk{\rho_{A_1 B}^{\frac{1}{2}} \rho_B^{-\alpha'} \rho_{A_1 B}^{\frac{1}{2}}}^\alpha}\\
        & \nu_{A_1 A_2 B} := \nu_{A_1 B}^{\frac{1}{2}} \rho_{A_2 | A_1 B} \nu_{A_1 B}^{\frac{1}{2}}
    \end{align*}
    and $\alpha':= \frac{\alpha-1}{\alpha}$.
    \label{cor:Halpha_ch_rule}
\end{corollary}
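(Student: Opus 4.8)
The plan is to derive this chain rule as an immediate specialisation of Lemma~\ref{lemm:EAT_ch_rule}. Recall from the Background section that the conditional entropy is defined by $\tilde{H}^{\downarrow}_\alpha(A|B)_\rho = -\tilde{D}_\alpha(\rho_{AB} || \Id_A \otimes \rho_B)$; that is, it is simply the negative sandwiched R\'enyi divergence against $\Id_A \otimes \rho_B$ with the \emph{actual} marginal $\rho_B$ appearing in the second argument.

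First I would apply Lemma~\ref{lemm:EAT_ch_rule} with the particular choice $\sigma_B = \rho_B$. With this substitution the auxiliary state $\nu_{A_1 B}$ built from $\sigma_B^{-\alpha'}$ in the lemma becomes exactly $\nu_{A_1 B} = (\rho_{A_1 B}^{1/2} \rho_B^{-\alpha'} \rho_{A_1 B}^{1/2})^\alpha / \tr(\rho_{A_1 B}^{1/2} \rho_B^{-\alpha'} \rho_{A_1 B}^{1/2})^\alpha$, and $\nu_{A_1 A_2 B} = \nu_{A_1 B}^{1/2} \rho_{A_2|A_1 B} \nu_{A_1 B}^{1/2}$, which are precisely the operators stated in the corollary. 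Hence the state $\nu$ appearing in the two statements coincides, and no reconciliation of the definitions is needed.

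The lemma, under this choice, reads
$$\tilde{D}_\alpha(\rho_{A_1 B} || \Id_{A_1} \otimes \rho_B) - \tilde{D}_\alpha(\rho_{A_1 A_2 B} || \Id_{A_1 A_2} \otimes \rho_B) = \tilde{H}^{\downarrow}_\alpha(A_2 | A_1 B)_\nu.$$
Next I would rewrite each term on the left using the definition above: the first divergence equals $-\tilde{H}^{\downarrow}_\alpha(A_1|B)_\rho$ and the second equals $-\tilde{H}^{\downarrow}_\alpha(A_1 A_2 | B)_\rho$. It is exactly here that the choice $\sigma_B = \rho_B$ matters, since it is what makes the second arguments of both divergences equal to the product with the true marginal $\rho_B$, as required by the definition of $\tilde{H}^{\downarrow}_\alpha$. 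Substituting and rearranging immediately yields $\tilde{H}^{\downarrow}_\alpha(A_1 A_2 | B)_\rho = \tilde{H}^{\downarrow}_\alpha(A_1 | B)_\rho + \tilde{H}^{\downarrow}_\alpha(A_2 | A_1 B)_\nu$, which is the claim.

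There is no genuine obstacle here, as the entire content is already contained in Lemma~\ref{lemm:EAT_ch_rule}; the only thing to verify is the bookkeeping that setting $\sigma_B = \rho_B$ simultaneously produces the correct second arguments for both marginal divergences and the correct residual state $\nu$. The one point deserving a moment's care is that the implicit support conditions in the definition of $\nu$ (interpreting $\rho_B^{-\alpha'}$ on the support of $\rho_B$) and the range $\alpha \in (0,\infty)$ carry over verbatim from the lemma, which they do since the lemma is stated for the same $\alpha$ and for arbitrary states $\sigma_B$, in particular the choice $\sigma_B = \rho_B$.
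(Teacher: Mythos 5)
Your proposal is correct and matches the paper's intended argument exactly: the paper states this corollary without a separate proof precisely because it is the direct specialisation $\sigma_B = \rho_B$ of Lemma~\ref{lemm:EAT_ch_rule}, combined with the definition $\tilde{H}^{\downarrow}_\alpha(A|B)_\rho = -\tilde{D}_\alpha(\rho_{AB}\,||\,\Id_A \otimes \rho_B)$, just as you describe. Your bookkeeping checks (that the same $\rho_B$ serves both divergences since $\rho_{A_1 A_2 B}$ and $\rho_{A_1 B}$ share the marginal $\rho_B$, and that the state $\nu$ coincides with the one in the lemma) are the only points needing verification, and you handle them correctly.
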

We can modify {\cite[Theorem 3.2]{Dupuis20}}, which is in terms of $\tilde{H}^{\downarrow}_\alpha$, to the following, which is a chain rule in terms of $\tilde{H}^{\uparrow}_\alpha$. The chain rule in this Corollary was also observed in \cite{Dupuis20}.
\begin{corollary}[Chain rule for $ \tilde{H}^{\uparrow}_\alpha$]
    For $\alpha \in (0, \infty)$, a state $\rho_{A_1 A_2 B}$ and for any state $\sigma_B$ such that $\tilde{H}^{\uparrow}_\alpha (A_1| B)_\rho = -\tilde{D}_\alpha (\rho_{A_1 B} || \Id_{A_1} \otimes \sigma_B)$, we have 
    \begin{align}
        \tilde{H}^{\uparrow}_\alpha (A_1 A_2 | B)_\rho \geq \tilde{H}^{\uparrow}_\alpha (A_1| B)_\rho + \tilde{H}^{\downarrow}_\alpha (A_2 | A_1 B)_\nu
    \end{align}
    where the state $\nu_{A_1 A_2 B}$ is defined as 
    \begin{align*}
        & \nu_{A_1 B} := \frac{\rndBrk{\rho_{A_1 B}^{\frac{1}{2}} \sigma_B^{-\alpha'} \rho_{A_1 B}^{\frac{1}{2}}}^\alpha}{\tr\rndBrk{\rho_{A_1 B}^{\frac{1}{2}} \sigma_B^{-\alpha'} \rho_{A_1 B}^{\frac{1}{2}}}^\alpha}\\
        & \nu_{A_1 A_2 B} := \nu_{A_1 B}^{\frac{1}{2}} \rho_{A_2 | A_1 B} \nu_{A_1 B}^{\frac{1}{2}}
    \end{align*}
    and $\alpha':= \frac{\alpha-1}{\alpha}$. For $\alpha \in (0, \infty)$, state $\rho_{A_1 A_2 B}$ and any state $\sigma_B$ such that $\tilde{H}^{\uparrow}_\alpha (A_1 A_2| B)_\rho = -\tilde{D}_\alpha (\rho_{A_1 A_2 B} || \Id_{A_1 A_2} \otimes \sigma_B)$, we have
    \begin{align}
        \tilde{H}^{\uparrow}_\alpha (A_1 A_2 | B)_\rho \leq \tilde{H}^{\uparrow}_\alpha (A_1| B)_\rho + \tilde{H}^{\downarrow}_\alpha (A_2 | A_1 B)_\nu
    \end{align}
    where the state $\nu_{A_1 A_2 B}$ is defined the same as above.
    \label{cor:opt_Halpha_ch_rule}
\end{corollary}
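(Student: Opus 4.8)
The plan is to derive both inequalities directly from the $\tilde{D}_\alpha$ chain rule in Lemma \ref{lemm:EAT_ch_rule}, combined with the variational characterisation of $\tilde{H}^{\uparrow}_\alpha$ as a supremum over states $\sigma_B$. The crucial point is that Lemma \ref{lemm:EAT_ch_rule} holds for an \emph{arbitrary} $\sigma_B$, and the two parts of the corollary simply instantiate it at the optimiser of one of the two $\tilde{H}^{\uparrow}_\alpha$ quantities in the statement. Because the optimiser for $\tilde{H}^{\uparrow}_\alpha(A_1|B)_\rho$ is generally different from the one for $\tilde{H}^{\uparrow}_\alpha(A_1 A_2|B)_\rho$, replacing an optimal $\sigma_B$ by a suboptimal one costs us an inequality in a definite direction, which is exactly why we obtain a lower bound in one case and an upper bound in the other (rather than the equality of the $\tilde{H}^{\downarrow}_\alpha$ chain rule in Corollary \ref{cor:Halpha_ch_rule}).

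For the first (lower bound) inequality, I would fix a $\sigma_B$ achieving $\tilde{H}^{\uparrow}_\alpha(A_1|B)_\rho = -\tilde{D}_\alpha(\rho_{A_1 B}\|\Id_{A_1}\otimes\sigma_B)$, which is precisely the $\sigma_B$ used to build $\nu$ in this part, and apply Lemma \ref{lemm:EAT_ch_rule} with it. Negating the resulting identity gives
\[
    \tilde{H}^{\uparrow}_\alpha(A_1|B)_\rho = -\tilde{D}_\alpha(\rho_{A_1 A_2 B}\|\Id_{A_1 A_2}\otimes\sigma_B) - \tilde{H}^{\downarrow}_\alpha(A_2|A_1 B)_\nu.
\]
Since this $\sigma_B$ is only one feasible choice in the supremum defining $\tilde{H}^{\uparrow}_\alpha(A_1 A_2|B)_\rho$, the first term on the right is bounded above by $\tilde{H}^{\uparrow}_\alpha(A_1 A_2|B)_\rho$, and rearranging yields the claimed lower bound.

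For the second (upper bound) inequality, I would instead fix a $\sigma_B$ achieving $\tilde{H}^{\uparrow}_\alpha(A_1 A_2|B)_\rho = -\tilde{D}_\alpha(\rho_{A_1 A_2 B}\|\Id_{A_1 A_2}\otimes\sigma_B)$ — again the $\sigma_B$ that defines $\nu$ in this part — and apply Lemma \ref{lemm:EAT_ch_rule} with this same state. Negating the identity now produces
\[
    \tilde{H}^{\uparrow}_\alpha(A_1 A_2|B)_\rho - \tilde{H}^{\downarrow}_\alpha(A_2|A_1 B)_\nu = -\tilde{D}_\alpha(\rho_{A_1 B}\|\Id_{A_1}\otimes\sigma_B),
\]
and since the right-hand side is at most $\tilde{H}^{\uparrow}_\alpha(A_1|B)_\rho$ (this $\sigma_B$ being only one feasible choice in the single-register supremum), rearranging gives the upper bound.

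There is essentially no analytic obstacle: the entire content is bookkeeping the direction of the suboptimality inequality, so the main thing to get right is matching the $\sigma_B$ used to define $\nu$ to the correct one of the two entropies in each part. The only technical point needing a remark is that the optimisers $\sigma_B$ assumed in the hypotheses actually exist, which follows from compactness of the set of density operators on $B$ together with lower semicontinuity of $\sigma_B \mapsto \tilde{D}_\alpha(\cdot\,\|\Id\otimes\sigma_B)$; this justifies that the stated conditions on $\sigma_B$ can be met and is standard.
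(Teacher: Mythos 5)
Your proposal is correct and takes essentially the same route as the paper's proof: both instantiate the exact chain rule of Lemma \ref{lemm:EAT_ch_rule} at the optimiser $\sigma_B$ of the relevant $\tilde{H}^{\uparrow}_\alpha$ quantity (the single-register entropy for the lower bound, the joint entropy for the upper bound) and then use that this $\sigma_B$ is merely feasible in the other supremum to obtain the inequality in the correct direction. Your closing remark on the existence of the optimisers is a fine extra touch but plays no role in the paper's argument, which simply assumes such a $\sigma_B$ as in the corollary's hypotheses.
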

\begin{proof}
    Let $\sigma_B$ be a state such that $\tilde{H}^{\uparrow}_\alpha (A_1| B)_\rho = -\tilde{D}_\alpha (\rho_{A_1 B} || \Id \otimes \sigma_B)$. Then, using Lemma \ref{lemm:EAT_ch_rule}, we have
    \begin{align*}
        \tilde{H}^{\uparrow}_\alpha (A_1 A_2 | B)_\rho &\geq - \tilde{D}_\alpha (\rho_{A_1 A_2 B} || \Id_{A_1 A_2} \otimes \sigma_B)\\
        &= - \tilde{D}_\alpha (\rho_{A_1 B} || \Id_{A_1} \otimes \sigma_B) + \tilde{H}_\alpha^{\downarrow}(A_2 | A_1 B)_{\nu} \\
        &= \tilde{H}^{\uparrow}_\alpha (A_1| B)_\rho + \tilde{H}^{\downarrow}_\alpha (A_2 | A_1 B)_\nu
    \end{align*}
    for $\nu_{A_1 A_2 B}$ defined as in the Lemma. Similarly, if $\tilde{H}^{\uparrow}_\alpha (A_1 A_2| B)_\rho = -\tilde{D}_\alpha (\rho_{A_1 A_2 B} || \Id_{A_1 A_2} \otimes \sigma_B)$, then
    \begin{align*}
        \tilde{H}^{\uparrow}_\alpha (A_1 A_2 | B)_\rho &= - \tilde{D}_\alpha (\rho_{A_1 A_2 B} || \Id_{A_1 A_2} \otimes \sigma_B)\\
        &= - \tilde{D}_\alpha (\rho_{A_1 B} || \Id_{A_1} \otimes \sigma_B) + \tilde{H}_\alpha^{\downarrow}(A_2 | A_1 B)_{\nu} \\
        &\leq \tilde{H}^{\uparrow}_\alpha (A_1| B)_\rho + \tilde{H}^{\downarrow}_\alpha (A_2 | A_1 B)_\nu
    \end{align*}
    for $\nu_{A_1 A_2 B}$ defined as in the Lemma.
\end{proof}
We transform \cite[Theorem 3.3]{Dupuis20} to a statement about $\tilde{H}^{\uparrow}_\alpha$ in the following.
\begin{lemma}
    Let $\alpha \in \left[\frac{1}{2}, \infty\right)$ and $\rho_{A_1 A_2 B_1 B_2}$ be a state which satisfies the Markov chain $A_1 \leftrightarrow B_1 \leftrightarrow B_2$. Then, we have 
    \begin{align}
        \tilde{H}^{\uparrow}_\alpha (A_1 A_2 | B_1 B_2)_\rho \geq \tilde{H}^{\uparrow}_\alpha (A_1| B_1)_\rho + \inf_{\nu} \tilde{H}^{\downarrow}_\alpha (A_2 | A_1 B_1 B_2)_\nu
    \end{align}
    where the infimum is taken over all states $\nu_{A_1 A_2 B_1 B_2}$ such that $\nu_{A_2 B_2 | A_1 B_1} = \rho_{A_2 B_2 | A_1 B_1}$.
\end{lemma}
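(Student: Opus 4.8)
The plan is to apply the $\tilde{H}^{\uparrow}_\alpha$ chain rule of Corollary~\ref{cor:opt_Halpha_ch_rule} with the conditioning register taken to be the composite $B_1 B_2$, and then to simplify the two resulting terms using the Markov chain hypothesis. Concretely, let $\sigma_{B_1 B_2}$ be a state attaining $\tilde{H}^{\uparrow}_\alpha (A_1| B_1 B_2)_\rho = - \tilde{D}_\alpha(\rho_{A_1 B_1 B_2} \| \Id_{A_1}\otimes\sigma_{B_1 B_2})$. The first part of Corollary~\ref{cor:opt_Halpha_ch_rule} then gives
\begin{align*}
    \tilde{H}^{\uparrow}_\alpha (A_1 A_2 | B_1 B_2)_\rho \geq \tilde{H}^{\uparrow}_\alpha (A_1 | B_1 B_2)_\rho + \tilde{H}^{\downarrow}_\alpha (A_2 | A_1 B_1 B_2)_\nu
\end{align*}
with $\nu_{A_1 B_1 B_2} \propto (\rho_{A_1 B_1 B_2}^{1/2}\sigma_{B_1 B_2}^{-\alpha'}\rho_{A_1 B_1 B_2}^{1/2})^\alpha$ and $\nu_{A_1 A_2 B_1 B_2} = \nu_{A_1 B_1 B_2}^{1/2}\rho_{A_2 | A_1 B_1 B_2}\nu_{A_1 B_1 B_2}^{1/2}$. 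It then remains to show (i) that the first term can be replaced by $\tilde{H}^{\uparrow}_\alpha (A_1 | B_1)_\rho$, and (ii) that this specific $\nu$ lies in the feasible set of the infimum, i.e. $\nu_{A_2 B_2 | A_1 B_1} = \rho_{A_2 B_2 | A_1 B_1}$.

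For (i) I would use that a Markov chain $A_1 \leftrightarrow B_1 \leftrightarrow B_2$ admits a recovery channel $\mathcal{R}_{B_1 \to B_1 B_2}$ acting only on $B_1$ with $\mathcal{R}(\rho_{A_1 B_1}) = \rho_{A_1 B_1 B_2}$. Applying the data-processing inequality for $\tilde{H}^{\uparrow}_\alpha$ under a channel on the conditioning system to $\mathcal{R}$ gives $\tilde{H}^{\uparrow}_\alpha(A_1 | B_1 B_2)_\rho \geq \tilde{H}^{\uparrow}_\alpha(A_1 | B_1)_\rho$ (and discarding $B_2$ yields the reverse, so in fact equality holds, though only the lower bound is needed here).

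The crux is (ii). Here I would invoke the structure theorem for quantum Markov chains to decompose $\mathcal{H}_{B_1} = \bigoplus_j \mathcal{H}_{B_1^{L,j}}\otimes\mathcal{H}_{B_1^{R,j}}$ so that $\rho_{A_1 B_1 B_2} = \bigoplus_j p_j\, \rho_{A_1 B_1^{L,j}} \otimes \rho_{B_1^{R,j} B_2}$. The key observation is that the optimal $\sigma_{B_1 B_2}$ may be chosen respecting this block structure and with its $B_1^{R,j}B_2$-factor equal to $\rho_{B_1^{R,j} B_2}$; substituting into the formula for $\nu_{A_1 B_1 B_2}$ and using $\alpha(1-\alpha') = 1$ shows $\nu_{A_1 B_1 B_2} = \bigoplus_j \tilde{p}_j\, \omega_{A_1 B_1^{L,j}} \otimes \rho_{B_1^{R,j} B_2}$, i.e. $\nu$ keeps exactly the same $B_2$-conditional as $\rho$. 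A short computation then shows that $X := \nu_{A_1 B_1 B_2}^{1/2}\rho_{A_1 B_1 B_2}^{-1/2}$ acts, on the support of $\rho_{A_1 A_2 B_1 B_2}$, as $Y_{A_1 B_1}\otimes\Id_{B_2}$ with $Y_{A_1 B_1} = \nu_{A_1 B_1}^{1/2}\rho_{A_1 B_1}^{-1/2}$, so that $\nu_{A_1 A_2 B_1 B_2} = X\rho_{A_1 A_2 B_1 B_2}X^\dagger = (Y_{A_1 B_1}\otimes\Id_{A_2 B_2})\,\rho_{A_1 A_2 B_1 B_2}\,(Y_{A_1 B_1}^\dagger\otimes\Id_{A_2 B_2})$. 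Conditioning on $A_1 B_1$ and using $\nu_{A_1 B_1}^{-1/2}Y_{A_1 B_1} = \rho_{A_1 B_1}^{-1/2}$ on the support then yields $\nu_{A_2 B_2 | A_1 B_1} = \rho_{A_2 B_2 | A_1 B_1}$, as required.

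I expect the main obstacle to be precisely part (ii): justifying that the optimizer $\sigma_{B_1 B_2}$ can be taken block-diagonal with the prescribed $B_2$-factor, and carefully handling the support projections, which arise because $\rho_{A_1 A_2 B_1 B_2}$ may carry coherences between the $B_1$-blocks that cancel only after tracing out $A_2$. Once $X = Y_{A_1 B_1}\otimes\Id_{B_2}$ is established on the support, the marginal identity and hence the bound $\tilde{H}^{\downarrow}_\alpha(A_2 | A_1 B_1 B_2)_\nu \geq \inf_{\nu'}\tilde{H}^{\downarrow}_\alpha(A_2 | A_1 B_1 B_2)_{\nu'}$ is immediate, and combining this with (i) completes the proof.
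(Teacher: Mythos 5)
Your plan is the paper's proof in all essentials: apply Corollary \ref{cor:opt_Halpha_ch_rule} with an optimizer $\sigma_{B_1 B_2}$ for $\tilde{H}^{\uparrow}_\alpha(A_1|B_1 B_2)_\rho$, use the Markov structure theorem to force a block form $\rho_{A_1 B_1 B_2} = \bigoplus_j p(j)\,\rho_{A_1 a_j}\otimes\rho_{c_j B_2}$, verify that the resulting $\nu$ satisfies $\nu_{A_2 B_2|A_1 B_1} = \rho_{A_2 B_2|A_1 B_1}$, and handle step (i) by data processing through the recovery channel. The one step you assert rather than prove --- that the optimizer may be chosen block-diagonal with $B_2$-factor exactly $\rho_{c_j B_2}$ --- is precisely where the paper does its work, and it is closed by an argument whose ingredients you already have in hand from part (i): consider the channel $\mathcal{R}_{B_1\to B_1 B_2}\circ\tr_{B_2}$. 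It leaves $\rho_{A_1 B_1 B_2}$ invariant (by the recovery property) and maps $\Id_{A_1}\otimes\sigma_{B_1 B_2}$ to $\Id_{A_1}\otimes\mathcal{R}(\sigma_{B_1})$, so the data-processing inequality for $\tilde{D}_\alpha$ (valid for $\alpha\geq\tfrac12$) gives
\begin{align*}
    -\tilde{D}_\alpha\bigl(\rho_{A_1 B_1 B_2}\,\big\|\,\Id_{A_1}\otimes\sigma_{B_1 B_2}\bigr) \leq -\tilde{D}_\alpha\bigl(\rho_{A_1 B_1 B_2}\,\big\|\,\Id_{A_1}\otimes\mathcal{R}(\sigma_{B_1})\bigr),
\end{align*}
so the supremum restricted to states of the form $\mathcal{R}(\eta_{B_1}) = \bigoplus_j q(j)\,\omega_{a_j}\otimes\rho_{c_j B_2}$ equals the unrestricted supremum, and an optimizer can be taken in this class. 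With that in place your computation of $\nu$ and of $X = \nu_{A_1 B_1 B_2}^{1/2}\rho_{A_1 B_1 B_2}^{-1/2}$ goes through as you sketch.

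On the support issue you correctly anticipate: the paper resolves it by noting that $-\tilde{H}^{\uparrow}_\alpha(A_1|B_1 B_2)_\rho \leq \log|A_1| < \infty$ forces $\rho_{A_1 a_j} \ll \Id_{A_1}\otimes\omega_{a_j}$ and $q(j)>0$ on the blocks with $p(j)>0$, whence $\bigl(\rho_{A_1 a_j}^{1/2}\,\omega_{a_j}^{-\alpha'}\,\rho_{A_1 a_j}^{1/2}\bigr)^0 = \rho_{A_1 a_j}^0$ (lower-bounding $\omega_{a_j}^{-\alpha'}$ by a multiple of $\omega_{a_j}^0$ and using operator monotonicity of $x\mapsto x^0$). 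This yields $\nu_{A_1 B_1}^{-1/2}\nu_{A_1 B_1 B_2}^{1/2} = \rho_{A_1 B_1}^{-1/2}\rho_{A_1 B_1 B_2}^{1/2}$, which is the precise form of your claim that $X$ acts as $Y_{A_1 B_1}\otimes\Id_{B_2}$ on the support, and the marginal identity follows. So the proposal is the right route and flags the right obstacle; it is incomplete only in that the block-diagonal reduction of the optimizer is asserted where a short recovery-map/data-processing argument is needed.
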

\begin{proof}
    Since, $\rho$ satisfies the Markov chain $A_1 \leftrightarrow B_1 \leftrightarrow B_2$, there exists a decomposition of the system $B_1$ as \cite[Theorem 5.4]{Sutter18}
    \begin{align*}
        B_1 = \bigoplus_{j \in J} a_j \otimes c_j
    \end{align*}
    such that 
    \begin{align}
        \rho_{A_1 B_1 B_2} = \bigoplus_{j \in J} p(j) \rho_{A_1 a_j} \otimes \rho_{c_j B_2}.
    \end{align}
    Let $J' \subseteq J$ be the set $\{j \in J : p(j)>0\}$. Note, that we can replace $J$ by $J'$ in the above equation.\\
    We can define the CPTP recovery map $\mathcal{R}_{B_1 \rightarrow B_1 B_2}$ for $\rho_{A_1 B_1 B_2}$ as
    \begin{align}
        \mathcal{R}_{B_1 \rightarrow B_1 B_2} (X) := \bigoplus_{j \in J} \tr_{c_j} \rndBrk{\Pi_{a_j} \otimes \Pi_{c_j} X \Pi_{a_j} \otimes \Pi_{c_j}} \otimes \rho_{c_j B_2}
    \end{align}
    where $\Pi_{a_j} \otimes \Pi_{c_j}$ is the projector on the subspace $a_j \otimes c_j$. This recovery channel satisfies
    \begin{align}
        \mathcal{R}_{B_1 \rightarrow B_1 B_2} (\rho_{A_1 B_1}) = \rho_{A_1 B_1 B_2}.
        \label{eq:rho_recovery}
    \end{align}
    We can now show that the optimisation for the conditional entropy $\tilde{H}^{\uparrow}_\alpha (A_1| B_1 B_2)_\rho$ can be restricted to states of the form $\mathcal{R}_{B_1 \rightarrow B_1 B_2} \rndBrk{\sigma_{B_1}}$. This follows as
    \begin{align*}
        \tilde{H}^{\uparrow}_\alpha (A_1| B_1 B_2)_\rho &= \sup_{\sigma_{B_1 B_2}} -\tilde{D}_\alpha (\rho_{A_1 B_1 B_2} || \Id_{A_1}  \otimes \sigma_{B_1 B_2}) \\
        &\leq \sup_{\sigma_{B_1 B_2}} -\tilde{D}_\alpha (\mathcal{R}_{B_1 \rightarrow B_1 B_2} \circ \tr_{B_2} \rndBrk{\rho_{A_1 B_1 B_2}} || \mathcal{R}_{B_1 \rightarrow B_1 B_2} \circ \tr_{B_2} \rndBrk{\Id_{A_1} \otimes \sigma_{B_1 B_2}})\\
        &= \sup_{\sigma_{B_1}} -\tilde{D}_\alpha (\rho_{A_1 B_1 B_2} || \Id_{A_1} \otimes \mathcal{R}_{B_1 \rightarrow B_1 B_2} \rndBrk{\sigma_{B_1}})\\
        &\leq \sup_{\sigma_{B_1 B_2}} -\tilde{D}_\alpha (\rho_{A_1 B_1 B_2} || \Id_{A_1}  \otimes \sigma_{B_1 B_2})\\
        &= \tilde{H}^{\uparrow}_\alpha (A_1| B_1 B_2)_\rho
    \end{align*}
    where the second line follows from the data processing inequality for $\tilde{D}_\alpha$ for $\alpha \geq \frac{1}{2}$, the supremum in the fourth line is over all states on the registers $B_1 B_2$,and the last line simply follows from the definition of $\tilde{H}^{\uparrow}_\alpha (A_1| B_1 B_2)_\rho$. As a result, it follows that 
    \begin{align}
        \tilde{H}^{\uparrow}_\alpha (A_1| B_1 B_2)_\rho = \sup_{\sigma_{B_1}} -\tilde{D}_\alpha (\rho_{A_1 B_1 B_2} || \Id_{A_1} \otimes \mathcal{R}_{B_1 \rightarrow B_1 B_2} \rndBrk{\sigma_{B_1}})
    \end{align}
    Let $\sigma_{B_1 B_2} = \mathcal{R}_{B_1 \rightarrow B_1 B_2} \rndBrk{\eta_{B_1}}$ be such that $\tilde{H}^{\uparrow}_\alpha (A_1| B_1 B_2)_\rho = -\tilde{D}_\alpha (\rho_{A_1 B_1 B_2} || \Id_{A_1}  \otimes \sigma_{B_1 B_2})$. Using Corollary \ref{cor:opt_Halpha_ch_rule}, for this choice of $\sigma_{B_1 B_2}$, we have that 
    \begin{align}
        \tilde{H}^{\uparrow}_\alpha (A_1 A_2 | B_1 B_2)_\rho \geq \tilde{H}^{\uparrow}_\alpha (A_1| B_1 B_2)_\rho + \tilde{H}^{\downarrow}_\alpha (A_2 | A_1 B_1 B_2)_\nu
        \label{eq:Halpha_ch_rule_for_Mk_ch}
    \end{align}
    where the state $\nu_{A_1 A_2 B_1 B_2}$ is defined as 
    \begin{align*}
        & \nu_{A_1 B_1 B_2} := \frac{\rndBrk{\rho_{A_1 B_1 B_2}^{\frac{1}{2}} \sigma_{B_1 B_2}^{-\alpha'} \rho_{A_1 B_1 B_2}^{\frac{1}{2}}}^\alpha}{\tr\rndBrk{\rho_{A_1 B_1 B_2}^{\frac{1}{2}} \sigma_{B_1 B_2}^{-\alpha'} \rho_{A_1 B_1 B_2}^{\frac{1}{2}}}^\alpha}\\
        & \nu_{A_1 A_2 B_1 B_2} := \nu_{A_1 B_1 B_2}^{\frac{1}{2}} \rho_{A_2 | A_1 B_1 B_2} \nu_{A_1 B_1 B_2}^{\frac{1}{2}}.
    \end{align*}
    We will now show that $\nu_{A_2 B_2 | A_1 B_1} = \rho_{A_2 B_2 | A_1 B_1}$. For this it is sufficient to show that 
    \begin{align*}
        \nu_{A_1 B_1}^{-\frac{1}{2}} \nu_{A_1 B_1 B_2 }^{\frac{1}{2}}= \rho_{A_1 B_1}^{-\frac{1}{2}} \rho_{A_1 B_1 B_2 }^{\frac{1}{2}}.
    \end{align*}
    We have that 
    \begin{align*}
        \sigma_{B_1 B_2} &= \mathcal{R}_{B_1 \rightarrow B_1 B_2} \rndBrk{\eta_{B_1}} \\
        &= \bigoplus_{j \in J} \tr_{c_j} \rndBrk{\Pi_{a_j} \otimes \Pi_{c_j} \eta_{B_1} \Pi_{a_j} \otimes \Pi_{c_j}} \otimes \rho_{c_j B_2}\\
        &= \bigoplus_{j \in J} q(j) \omega_{a_j} \otimes \rho_{c_j B_2}
    \end{align*}
    where we have defined the probability distribution $q(j):= \tr(\Pi_{a_j} \otimes \Pi_{c_j} \eta_{B_1})$ and states $\omega_{a_j} = \frac{1}{q(j)} \Pi_{a_j} \tr_{c_j} \rndBrk{\Pi_{c_j} \eta_{B_1} \Pi_{c_j}} \Pi_{a_j}$ for every $j \in J$. \\
    
    Since $\tilde{D}_\alpha (\rho_{A_1 B_1 B_2} || \Id_{A_1}  \otimes \sigma_{B_1 B_2})=-\tilde{H}^{\uparrow}_\alpha (A_1| B_1 B_2)_\rho \leq \log |A_1| < \infty$, we have that
    \begin{align*}
        & \rho_{A_1 B_1 B_2} \ll \Id_{A_1} \otimes \sigma_{B_1 B_2}\\
        \Rightarrow & \bigoplus_{j \in J'} p(j) \rho_{A_1 a_j} \otimes \rho_{c_j B_2} \ll \Id_{A_1} \otimes \bigoplus_{j \in J} q(j) \omega_{a_j} \otimes \rho_{c_j B_2} \\
        \Rightarrow & \text{for every } j \in J': \rho_{A_1 a_j} \ll \Id_{A_1} \otimes \omega_{a_j} \text{ and } q(j)>0. \numberthis
        \label{eq:omega_j_ll_cond}
    \end{align*}
    
    This decomposition can be used to evaluate $\nu_{A_1 B_1 B_2}$ as follows
    \begin{align*}
        \nu_{A_1 B_1 B_2} &= \frac{1}{N} \rndBrk{\rho_{A_1 B_1 B_2}^{\frac{1}{2}} \sigma_{B_1 B_2}^{-\alpha'} \rho_{A_1 B_1 B_2}^{\frac{1}{2}}}^\alpha \\
        &= \frac{1}{N} \rndBrk{\bigoplus_{j \in J'} p(j)^{\frac{1}{2}} \rho_{A_1 a_j}^{\frac{1}{2}} \otimes \rho_{c_j B_2}^{\frac{1}{2}} \bigoplus_{j \in J} q(j)^{-\alpha'} \omega^{-\alpha'}_{a_j} \otimes \rho^{-\alpha'}_{c_j B_2} \bigoplus_{j \in J'} p(j)^{\frac{1}{2}} \rho_{A_1 a_j}^{\frac{1}{2}} \otimes \rho_{c_j B_2}^{\frac{1}{2}}}^\alpha \\
        &= \frac{1}{N} \rndBrk{\bigoplus_{j \in J'} p(j) q(j)^{-\alpha'} \rho_{A_1 a_j}^{\frac{1}{2}} \omega^{-\alpha'}_{a_j} \rho_{A_1 a_j}^{\frac{1}{2}} \otimes \rho^{1-\alpha'}_{c_j B_2}}^\alpha \\
        &= \frac{1}{N} \bigoplus_{j \in J'} p(j)^\alpha q(j)^{1-\alpha} \rndBrk{\rho_{A_1 a_j}^{\frac{1}{2}} \omega^{-\alpha'}_{a_j} \rho_{A_1 a_j}^{\frac{1}{2}}}^\alpha \otimes \rho_{c_j B_2}
    \end{align*}
    for $N:= \tr\rndBrk{\rho_{A_1 B_1 B_2}^{\frac{1}{2}} \sigma_{B_1 B_2}^{-\alpha'} \rho_{A_1 B_1 B_2}^{\frac{1}{2}}}^\alpha$. Further, we have
    \begin{align*}
        \nu_{A_1 B_1}^{-\frac{1}{2}}& \nu_{A_1 B_1 B_2 }^{\frac{1}{2}} \\
        &= \frac{1}{N^{-\frac{1}{2}}} \bigoplus_{j \in J'} p(j)^{-\frac{\alpha}{2}}q(j)^{-\frac{1-\alpha}{2}} \rndBrk{\rho_{A_1 a_j}^{\frac{1}{2}} \omega^{-\alpha'}_{a_j} \rho_{A_1 a_j}^{\frac{1}{2}}}^{-\frac{\alpha}{2}}\otimes \rho_{c_j}^{-\frac{1}{2}}  \\
        & \qquad \qquad \qquad \cdot \frac{1}{N^{\frac{1}{2}}} \bigoplus_{j \in J'} p(j)^{\frac{\alpha}{2}}q(j)^{\frac{1-\alpha}{2}} \rndBrk{\rho_{A_1 a_j}^{\frac{1}{2}} \omega^{-\alpha'}_{a_j} \rho_{A_1 a_j}^{\frac{1}{2}}}^{\frac{\alpha}{2}} \otimes \rho_{c_j B_2}^{\frac{1}{2}} \\
        &= \bigoplus_{j \in J'} \rndBrk{\rho_{A_1 a_j}^{\frac{1}{2}} \omega^{-\alpha'}_{a_j} \rho_{A_1 a_j}^{\frac{1}{2}}}^{0} \otimes \rho_{c_j}^{-\frac{1}{2}} \rho_{c_j B_2}^{\frac{1}{2}} \\
        &= \bigoplus_{j \in J'} \rho_{A_1 a_j}^{0} \otimes \rho_{c_j}^{-\frac{1}{2}} \rho_{c_j B_2}^{\frac{1}{2}}
    \end{align*}
    where in the last line we have used that the projector $\rndBrk{\rho_{A_1 a_j}^{\frac{1}{2}} \omega^{-\alpha'}_{a_j} \rho_{A_1 a_j}^{\frac{1}{2}}}^{0}$ is equal to the projector $\rho_{A_1 a_j}^{0}$ for every $j \in J'$ (here $P^0$ is the projector onto the image of positive semidefinite operator $P$). This can be seen since for every $j \in J'$ we first have
    \begin{align}
        \im\rndBrk{\rho_{A_1 a_j}^{\frac{1}{2}} \omega^{-\alpha'}_{a_j} \rho_{A_1 a_j}^{\frac{1}{2}}} \subseteq \im \rndBrk{\rho_{A_1 a_j}}.
        \label{eq:proj_equal_first}
    \end{align}
    Second, we have that Eq. \ref{eq:omega_j_ll_cond} above implies that $\omega_{a_j}^0 \rho_{A a_j}^0 = \rho_{A a_j}^0$ for every $j \in J'$. Now, for $j \in J'$ we have the following inequality 
    \begin{align*}
        \rndBrk{\rho_{A_1 a_j}^{\frac{1}{2}} \omega^{-\alpha'}_{a_j} \rho_{A_1 a_j}^{\frac{1}{2}}} &\geq m \rndBrk{\rho_{A_1 a_j}^{\frac{1}{2}} \omega^{0}_{a_j} \rho_{A_1 a_j}^{\frac{1}{2}}} \\
        &= m \rho_{A_1 a_j}
    \end{align*}
    where $m>0$ is the minimum non-zero eigenvalue of $\omega^{-\alpha'}_{a_j}$. Finally, raising the above to the power of $0$ (this action is operator monotone)
    \begin{align}
        \rndBrk{\rho_{A_1 a_j}^{\frac{1}{2}} \omega^{-\alpha'}_{a_j} \rho_{A_1 a_j}^{\frac{1}{2}}}^0 &\geq \rho_{A_1 a_j}^0.
        \label{eq:proj_equal_sec}
    \end{align}
    Eq. \ref{eq:proj_equal_first} and \ref{eq:proj_equal_sec} together imply that for $j \in J'$
    \begin{align*}
        \rndBrk{\rho_{A_1 a_j}^{\frac{1}{2}} \omega^{-\alpha'}_{a_j} \rho_{A_1 a_j}^{\frac{1}{2}}}^0 = \rho_{A_1 a_j}^0.
    \end{align*}
    Finally, we have that 
    \begin{align*}
        \rho_{A_1 B_1}^{-\frac{1}{2}} \rho_{A_1 B_1 B_2 }^{\frac{1}{2}} &= \bigoplus_{j \in J'} p(j)^{-\frac{1}{2}} \rho_{A_1 a_j}^{-\frac{1}{2}}\otimes \rho_{c_j}^{-\frac{1}{2}} \bigoplus_{j \in J'} p(j)^{\frac{1}{2}} \rho_{A_1 a_j}^{\frac{1}{2}}\otimes \rho_{c_j B_2}^{\frac{1}{2}} \\
        &= \bigoplus_{j \in J'} \rho_{A_1 a_j}^{0}\otimes \rho_{c_j}^{-\frac{1}{2}} \rho_{c_j B_2}^{\frac{1}{2}}. 
    \end{align*}
    This proves that 
    \begin{align}
        \nu_{A_1 B_1}^{-\frac{1}{2}} \nu_{A_1 B_1 B_2 }^{\frac{1}{2}}= \rho_{A_1 B_1}^{-\frac{1}{2}} \rho_{A_1 B_1 B_2 }^{\frac{1}{2}}
        \label{eq:int_eq_cond_st_equal}
    \end{align}
    and hence 
    \begin{align*}
        \nu_{A_2 B_2 | A_1 B_1} &= \nu_{A_1 B_1}^{-\frac{1}{2}} \nu_{A_1 B_1 B_2 }^{\frac{1}{2}} \nu_{A_2 | A_1 B_1 B_2} \nu_{A_1 B_1 B_2 }^{\frac{1}{2}} \nu_{A_1 B_1}^{-\frac{1}{2}} \\
        &= \rho_{A_1 B_1}^{-\frac{1}{2}} \rho_{A_1 B_1 B_2 }^{\frac{1}{2}} \rho_{A_2 | A_1 B_1 B_2} \rho_{A_1 B_1 B_2 }^{\frac{1}{2}} \rho_{A_1 B_1}^{-\frac{1}{2}} \\
        &= \rho_{A_2 B_2 | A_1 B_1}
    \end{align*}
    where we have used the fact that $\nu_{A_2 | A_1 B_1 B_2} = \rho_{A_2 | A_1 B_1 B_2}$ and Eq. \ref{eq:int_eq_cond_st_equal}. We can now modify Eq. \ref{eq:Halpha_ch_rule_for_Mk_ch} to get
    \begin{align*}
        \tilde{H}^{\uparrow}_\alpha (A_1 A_2 | B_1 B_2)_\rho \geq \tilde{H}^{\uparrow}_\alpha (A_1| B_1 B_2)_\rho + \inf_{\nu} \tilde{H}^{\downarrow}_\alpha (A_2 | A_1 B_1 B_2)_\nu
    \end{align*}
    where the infimum is over states $\nu$ such that $\nu_{A_2 B_2 | A_1 B_1} = \rho_{A_2 B_2 | A_1 B_1}$. We can use the data processing inequality to get 
    \begin{align*}
        \tilde{H}^{\uparrow}_\alpha (A_1| B_1 B_2)_\rho &= \tilde{H}^{\uparrow}_\alpha (A_1| B_1 B_2)_{\mathcal{R}_{B_1 \rightarrow B_1 B_2}(\rho_{A B_1})} \\
        &\geq \tilde{H}^{\uparrow}_\alpha (A_1| B_1)_{\rho}.
    \end{align*}
    Together with the above inequality this proves the Lemma. 
\end{proof}
We will use the following modification of \cite[Corollary 3.5]{Dupuis20}. 
\begin{corollary}
    Let $\cM_{R \rightarrow A_2 B_2}$ be a channel and $\rho_{A_1 A_2 B_1 B_2}= \cM(\rho'_{A_1 B_1 R})$ such that the Markov chain $A_1 \leftrightarrow B_1 \leftrightarrow B_2$ holds. Then, we have 
    \begin{align}
        \tilde{H}^{\uparrow}_\alpha (A_1 A_2 | B_1 B_2)_\rho \geq \tilde{H}^{\uparrow}_\alpha (A_1| B_1)_\rho + \inf_{\omega} \tilde{H}^{\downarrow}_\alpha (A_2 | A_1 B_1 B_2)_{\cM(\omega)}
    \end{align}
    where the infimum is taken over all states $\omega_{A_1 B_1 R}$. Moreover, if $\rho'_{A_1 B_1 R}$ is pure then we can restrict the optimisation to pure states. 
    \label{cor:Markoc_ch_opt_Halpha_bd}
\end{corollary}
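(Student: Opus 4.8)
The plan is to derive this Corollary directly from the preceding Lemma by showing that the constrained optimisation over states $\nu$ appearing there only gets relaxed when we replace it by an optimisation over channel outputs $\cM(\omega)$. Recall the Lemma already gives
$$\tilde{H}^{\uparrow}_\alpha (A_1 A_2 | B_1 B_2)_\rho \geq \tilde{H}^{\uparrow}_\alpha (A_1| B_1)_\rho + \inf_{\nu} \tilde{H}^{\downarrow}_\alpha (A_2 | A_1 B_1 B_2)_\nu,$$
where the infimum ranges over the set $C$ of states with $\nu_{A_2 B_2 | A_1 B_1} = \rho_{A_2 B_2 | A_1 B_1}$. Hence it suffices to prove the inclusion $C \subseteq \{\cM(\omega_{A_1 B_1 R}) : \omega\}$: the infimum over the larger set is no larger, so the resulting weaker lower bound is exactly the claimed one.

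First I would establish the inclusion by an explicit construction. Fix any $\nu \in C$; since $\nu_{A_2 B_2 | A_1 B_1}$ must agree with $\rho_{A_2 B_2|A_1 B_1}$, the marginal $\nu_{A_1 B_1}$ is supported inside $\mathrm{supp}(\rho_{A_1 B_1})$, so $\nu_{A_1 B_1} \ll \rho_{A_1 B_1}$. Using that $\rho_{A_1 B_1}= \rho'_{A_1 B_1}$ (the channel does not touch $A_1 B_1$), define
$$\omega_{A_1 B_1 R} := \big(\nu_{A_1 B_1}^{1/2}\rho_{A_1 B_1}^{-1/2}\big)\, \rho'_{A_1 B_1 R}\, \big(\rho_{A_1 B_1}^{-1/2}\nu_{A_1 B_1}^{1/2}\big),$$
where the bracketed operators act only on $A_1 B_1$. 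A short computation gives $\tr \omega = \tr(\nu_{A_1 B_1}^{1/2}\rho_{A_1 B_1}^{0}\nu_{A_1 B_1}^{1/2}) = 1$ and $\omega \geq 0$, so $\omega$ is a valid state. Because the $A_1 B_1$-congruence commutes with $\cM_{R \to A_2 B_2}$,
$$\cM(\omega) = \nu_{A_1 B_1}^{1/2}\rho_{A_1 B_1}^{-1/2}\,\rho_{A_1 A_2 B_1 B_2}\,\rho_{A_1 B_1}^{-1/2}\nu_{A_1 B_1}^{1/2} = \nu_{A_1 B_1}^{1/2}\, \rho_{A_2 B_2|A_1 B_1}\, \nu_{A_1 B_1}^{1/2} = \nu,$$
using $\nu_{A_2 B_2|A_1 B_1}=\rho_{A_2 B_2|A_1 B_1}$ in the last step. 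This establishes $C \subseteq \{\cM(\omega)\}$, hence $\inf_{\nu \in C}\tilde{H}^{\downarrow}_\alpha \geq \inf_{\omega}\tilde{H}^{\downarrow}_\alpha(A_2|A_1B_1B_2)_{\cM(\omega)}$, and the first claim follows. For the \emph{moreover} statement I would observe that the same construction is automatically purity-preserving: if $\rho' = \ket{\psi}\bra{\psi}$ is pure, then $\omega$ is a congruence of a rank-one operator by an operator acting only on $A_1 B_1$, hence rank one and pure. Thus in fact $C \subseteq \{\cM(\omega): \omega \text{ pure}\}$, so restricting the infimum to pure inputs still yields a valid lower bound.

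I expect the only delicate point to be the support and invertibility bookkeeping around $\rho_{A_1 B_1}^{-1/2}$: one must read these as pseudo-inverses on $\mathrm{supp}(\rho_{A_1 B_1})$ and verify that the defining constraint on $\nu$ forces $\nu_{A_1 B_1} \ll \rho_{A_1 B_1}$, so that $\rho_{A_1 B_1}^{0}\nu_{A_1 B_1}^{1/2} = \nu_{A_1 B_1}^{1/2}$ and both the normalisation and the identity $\cM(\omega)=\nu$ go through cleanly. This can be handled exactly as in the support arguments of the preceding Lemma, and the overall structure mirrors \cite[Corollary 3.5]{Dupuis20}.
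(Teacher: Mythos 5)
Your proposal is correct and follows essentially the same route as the paper: both reduce to the preceding Markov-chain lemma and show the constrained set of states $\nu$ with $\nu_{A_2 B_2 | A_1 B_1} = \rho_{A_2 B_2 | A_1 B_1}$ is contained in the image of $\cM$, via the identical construction $\omega_{A_1 B_1 R} = \nu_{A_1 B_1}^{1/2} \rho_{A_1 B_1}^{-1/2}\, \rho'_{A_1 B_1 R}\, \rho_{A_1 B_1}^{-1/2} \nu_{A_1 B_1}^{1/2}$. Your verification of the support condition $\nu_{A_1 B_1} \ll \rho_{A_1 B_1}$, the normalisation, and the rank-one argument for the pure-state claim simply spells out details the paper (following \cite[Corollary 3.5]{Dupuis20}) leaves implicit.
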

\begin{proof}
    The proof is the same as \cite[Corollary 3.5]{Dupuis20}. We include it here for the sake of completeness. \\
    It is sufficient to show that for every state $\nu$ such that $\nu_{A_2 B_2 | A_1 B_1} = \rho_{A_2 B_2 | A_1 B_1}$, there exists an $\omega_{A_1 B_1 R}$ such that $\nu_{A_1 A_2 B_1 B_2} = \cM(\omega)$. For such a $\nu$, we can define 
    \begin{align*}
        \omega_{R A_1 B_1} = \nu_{A_1 B_1}^{\frac{1}{2}} \rho_{A_1 B_1}^{-\frac{1}{2}} \rho'_{A_1 B_1 R} \rho_{A_1 B_1}^{-\frac{1}{2}} \nu_{A_1 B_1}^{\frac{1}{2}}
    \end{align*}
    which can be seen to be a valid state and also satisfy $\nu_{A_1 A_2 B_1 B_2} = \cM(\omega)$. 
\end{proof}

\section{Dimension bounds for conditional R\'enyi entropies}

\begin{lemma}[Dimension bound]
    For $\alpha \in [\frac{1}{2}, \infty]$, a state $\rho_{A_1 A_2 B}$, the following bounds hold for the sandwiched conditional entropies
    \begin{align*}
        \tilde{H}^{\downarrow}_\alpha (A_1| B)_\rho - \log|A_2| &\leq \tilde{H}^{\downarrow}_\alpha (A_1 A_2 | B)_\rho \leq \tilde{H}^{\downarrow}_\alpha (A_1| B)_\rho + \log|A_2| \\
        \tilde{H}^{\uparrow}_\alpha (A_1| B)_\rho - \log|A_2| &\leq \tilde{H}^{\uparrow}_\alpha (A_1 A_2 | B)_\rho \leq \tilde{H}^{\uparrow}_\alpha (A_1| B)_\rho + \log|A_2|. 
    \end{align*}
    For $\alpha \in [0,2]$ and a state $\rho_{A_1 A_2 B}$, the following bounds hold for the Petz conditional entropies
    \begin{align*}
        \bar{H}^{\downarrow}_\alpha (A_1 A_2 | B)_\rho &\leq \bar{H}^{\downarrow}_\alpha (A_1| B)_\rho + \log|A_2| \\
        \bar{H}^{\uparrow}_\alpha (A_1 A_2 | B)_\rho &\leq \bar{H}^{\uparrow}_\alpha (A_1| B)_\rho + \log|A_2|.
    \end{align*}
    \label{lemm:dim_bd}
\end{lemma}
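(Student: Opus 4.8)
The plan is to split the four displayed inequalities into \emph{upper bounds} and \emph{lower bounds} and attack them with two different standard tools: the data-processing inequality (DPI) handles every upper bound, while the Rényi chain rules already proved in Corollaries \ref{cor:Halpha_ch_rule} and \ref{cor:opt_Halpha_ch_rule} handle the two sandwiched lower bounds.

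For the upper bounds, the key observation is that discarding $A_2$ is the channel $\tr_{A_2}$, that $\tr_{A_2}(\Id_{A_1 A_2}\otimes\sigma_B) = |A_2|\,\Id_{A_1}\otimes\sigma_B$, and that both divergences scale as $\mathbb{D}_\alpha(P\|cQ) = \mathbb{D}_\alpha(P\|Q) - \log c$. Applying DPI for $\tilde{D}_\alpha$ (valid for $\alpha\geq \tfrac12$) to $\rho_{A_1 A_2 B}$ and $\Id_{A_1 A_2}\otimes\sigma_B$ gives $\tilde{D}_\alpha(\rho_{A_1 A_2 B}\|\Id_{A_1 A_2}\otimes\sigma_B) \geq \tilde{D}_\alpha(\rho_{A_1 B}\|\Id_{A_1}\otimes\sigma_B) - \log|A_2|$. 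Negating and setting $\sigma_B = \rho_B$ yields the $\downarrow$ upper bound, while negating and taking the supremum over $\sigma_B$ yields the $\uparrow$ upper bound. The two Petz upper bounds are obtained by the identical computation with $\bar{D}_\alpha$ in place of $\tilde{D}_\alpha$, using that DPI for $\bar{D}_\alpha$ holds exactly on the stated range $\alpha\in[0,2]$.

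The hard part will be the two sandwiched lower bounds, since here DPI is useless: ``adding'' $A_2$ to the first slot is not a processing of a fixed state, so no monotonicity argument applies directly. The way around this is to invoke the exact chain rule $\tilde{H}^{\downarrow}_\alpha(A_1 A_2|B)_\rho = \tilde{H}^{\downarrow}_\alpha(A_1|B)_\rho + \tilde{H}^{\downarrow}_\alpha(A_2|A_1 B)_\nu$ of Corollary \ref{cor:Halpha_ch_rule}, and the first inequality of Corollary \ref{cor:opt_Halpha_ch_rule}, $\tilde{H}^{\uparrow}_\alpha(A_1 A_2|B)_\rho \geq \tilde{H}^{\uparrow}_\alpha(A_1|B)_\rho + \tilde{H}^{\downarrow}_\alpha(A_2|A_1 B)_\nu$, where in each case $\nu$ is the twisted state manufactured by the chain rule. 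Both lower bounds then collapse onto a single elementary fact, which I would isolate and prove: for \emph{every} state $\omega_{A_2 C}$ and $\alpha\in[\tfrac12,\infty]$ one has $\tilde{H}^{\downarrow}_\alpha(A_2|C)_\omega \geq -\log|A_2|$. Feeding this fact (with $C = A_1 B$) into the two chain-rule relations immediately produces the desired lower bounds.

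Finally, I would establish the elementary fact from the operator inequality $\omega_{A_2 C}\leq |A_2|\,\Id_{A_2}\otimes\omega_C$, justified in one line by twirling: averaging $(W_i\otimes\Id_C)\,\omega_{A_2 C}\,(W_i\otimes\Id_C)^\dagger$ over the Heisenberg--Weyl operators $\{W_i\}$ of $A_2$ gives $\tfrac1{|A_2|}\Id_{A_2}\otimes\omega_C$, of which $\omega_{A_2 C}$ is a single positive summand. Rearranging to $\Id_{A_2}\otimes\omega_C \geq \tfrac1{|A_2|}\omega_{A_2 C}$ and using monotonicity of $\tilde{D}_\alpha$ in its second argument together with the scaling identity gives $\tilde{D}_\alpha(\omega_{A_2 C}\|\Id_{A_2}\otimes\omega_C) \leq \tilde{D}_\alpha(\omega_{A_2 C}\|\tfrac1{|A_2|}\omega_{A_2 C}) = \log|A_2|$, which is exactly $\tilde{H}^{\downarrow}_\alpha(A_2|C)_\omega \geq -\log|A_2|$. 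Since Corollaries \ref{cor:Halpha_ch_rule} and \ref{cor:opt_Halpha_ch_rule} are stated for $\alpha\in(0,\infty)$, the endpoint $\alpha=\infty$ (the min-entropy) would be treated separately, either directly from the operator definition of $H_{\min}$ or by continuity of the conditional entropies in $\alpha$ on $[\tfrac12,\infty]$.
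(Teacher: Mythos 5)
Your proposal is correct, but it distributes the work differently from the paper, so a comparison is worthwhile. For the two sandwiched \emph{lower} bounds you follow essentially the paper's own route: the chain rules of Corollaries \ref{cor:Halpha_ch_rule} and \ref{cor:opt_Halpha_ch_rule} applied to the twisted state $\nu$, combined with $\tilde{H}^{\downarrow}_\alpha(A_2|A_1B)_\nu \geq -\log|A_2|$ — except that the paper simply cites \cite[Lemma 5.2]{TomamichelBook16} for this fact, whereas you re-derive it from the twirling inequality $\omega_{A_2C}\leq |A_2|\,\Id_{A_2}\otimes\omega_C$ and dominance of $\tilde{D}_\alpha$ in its second argument; both your derivation and your flagged separate treatment of the endpoint $\alpha=\infty$ (which the chain rules, stated for $\alpha\in(0,\infty)$, do not cover, but which follows directly from $\rho_{A_1A_2B}\leq|A_2|\,\Id_{A_2}\otimes\rho_{A_1B}$ or by continuity) are sound. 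Where you genuinely diverge is the \emph{upper} bounds: the paper obtains the sandwiched ones from the same chain rules together with the upper half $\tilde{H}^{\downarrow}_\alpha(A_2|A_1B)_\nu \leq \log|A_2|$ of the cited two-sided bound, and the Petz ones from an operator Jensen computation ($\tr_{A_2}P^\alpha \leq |A_2|^{1-\alpha}P^\alpha_{A_1B}$ for $\alpha\in[0,1]$ by operator concavity of $x\mapsto x^\alpha$, with the reversed inequality on $[1,2]$ by operator convexity), whereas you get all four uniformly from the data-processing inequality under $\tr_{A_2}$, the identity $\tr_{A_2}(\Id_{A_1A_2}\otimes\sigma_B)=|A_2|\,\Id_{A_1}\otimes\sigma_B$, and the scaling $\mathbb{D}_\alpha(P\|cQ)=\mathbb{D}_\alpha(P\|Q)-\log c$, which is valid for both divergences on exactly the stated ranges. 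Your route is more elementary and unified, and it makes the Petz range $[0,2]$ transparent — it is precisely the range on which $\bar{D}_\alpha$ satisfies DPI — while the paper's Jensen computation is self-contained (no appeal to DPI for $\bar{D}_\alpha$) and its chain-rule treatment dispatches both sandwiched inequalities in one stroke; the conclusions proved coincide.
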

\begin{proof}
    For the sandwiched conditional entropies, we simply use the corresponding chain rules (Corollary \ref{cor:Halpha_ch_rule} or Corollary \ref{cor:opt_Halpha_ch_rule}) along with the fact that for all states $\nu$, $\tilde{H}^{\downarrow}_\alpha (A_2| A_1 B)_\nu \in [-\log |A_2|, \log |A_2|]$ \cite[Lemma 5.2]{TomamichelBook16}.\\
    For the Petz conditional entropies, we will make use of the Jensen's inequality for operators \cite[Theorem V.2.3]{Bhatia97}. Suppose, $\{\ket{e_i}\}_{i=1}^{|X|}$ is an orthogonal basis for the space $X$. Then, we have for a positive operator $P_{XY}$ and $\alpha \in [0,1]$
    \begin{align*}
        \tr_{X} P_{XY}^\alpha &= \sum_{i=1}^{|X|} \Id_Y \otimes \bra{e_{i}}_X P_{XY}^\alpha \Id_Y \otimes \ket{e_{i}}_X \\
        &\leq |X| \rndBrk{\sum_{i=1}^{|X|} \frac{1}{|X|} \Id_Y \otimes \bra{e_{i}}_X P_{XY} \Id_Y \otimes \ket{e_{i}}_X}^\alpha \\
        &= |X|^{1-\alpha} P_Y^\alpha \numberthis \label{eq:partial_tr_op_ineq_alpha_leq1}
    \end{align*}
    where in the second step we have used the operator Jensen's inequality with the operators $\curlyBrk{\frac{1}{\sqrt{|X|}} \Id_Y \otimes \ket{e_{i}}_X}_{i=1}^{|X|}$ along with the fact that the map $X \mapsto X^{\alpha}$ is operator concave. For $\alpha \in [1,2]$ and positive operator $P_{XY}$, we can use the same argument as above and the fact that $X \mapsto X^{\alpha}$ is operator convex in this regime and derive
    \begin{align}
        \tr_{X} P_{XY}^\alpha &\geq |X|^{1-\alpha} P_Y^\alpha.
        \label{eq:partial_tr_op_ineq_alpha_geq1}
    \end{align}
    To prove the dimension bound, observe that for a positive state $\sigma_B$ and $\alpha\in [0,2]$, we have 
    \begin{align*}
        -\bar{D}_{\alpha}(\rho_{A_1 A_2 B}|| \Id_{A_1 A_2}\otimes \sigma_B) &= \frac{1}{1-\alpha} \log \tr \rndBrk{\rho_{A_1 A_2 B}^{\alpha} \sigma_B^{1-\alpha}} \\
        &= \frac{1}{1-\alpha} \log \tr \rndBrk{\tr_{A_2}\rndBrk{\rho_{A_1 A_2 B}^{\alpha}} \sigma_B^{1-\alpha}}\\
        &\leq \frac{1}{1-\alpha} \log \tr \rndBrk{|A_2|^{1-\alpha} \rho_{A_1 B}^{\alpha} \sigma_B^{1-\alpha}}\\
        &=  -\bar{D}_{\alpha}(\rho_{A_1 B}|| \Id_{A_1}\otimes \sigma_B) + \log|A_2|.
    \end{align*}
    We can now take a supremum over $\sigma_B$ to prove the dimension bound for $\bar{H}^{\uparrow}_\alpha$ or choose $\sigma_B = \rho_B$ to prove the dimension bound for $\bar{H}^{\downarrow}_\alpha$.
\end{proof}
The following Lemma was originally proven in \cite[Proposition 8]{Lennert13}. We reproduce the proof argument here.
\begin{lemma}
    For $\alpha \in [\frac{1}{2}, \infty]$, a state $\rho_{ABC}$, we have 
    \begin{align}
        \tilde{H}^{\uparrow}_\alpha (A| BC)_\rho \geq \tilde{H}^{\uparrow}_\alpha(AC|B)_\rho - \log|C|
        \label{eq:Halpha_dim_bd_tomamichel1}
    \end{align}
    and for $\alpha \in [0, 2]$
    \begin{align}
        \bar{H}^{\uparrow}_\alpha (A| BC)_\rho \geq \bar{H}^{\uparrow}_\alpha(AC|B)_\rho - \log|C|
        \label{eq:Halpha_dim_bd_tomamichel2}
    \end{align}
    \label{lemm:Halpha_dim_bd_tomamichel}
\end{lemma}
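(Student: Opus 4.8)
The plan is to work directly from the variational definitions, namely $\tilde{H}^{\uparrow}_\alpha(AC|B)_\rho = \sup_{\sigma_B} -\tilde{D}_\alpha(\rho_{ABC} || \Id_{AC} \otimes \sigma_B)$ and $\tilde{H}^{\uparrow}_\alpha(A|BC)_\rho = \sup_{\sigma_{BC}} -\tilde{D}_\alpha(\rho_{ABC} || \Id_{A} \otimes \sigma_{BC})$, and to lower bound the second supremum by evaluating it at a single, carefully chosen feasible point. Given any state $\sigma_B$ on $B$, the product $\sigma_B \otimes \frac{\Id_C}{|C|}$ is a valid normalised state on $BC$ (its trace is $1$), hence an admissible argument in the supremum defining $\tilde{H}^{\uparrow}_\alpha(A|BC)_\rho$. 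The key tool is the scaling identity $\tilde{D}_\alpha(P || cQ) = \tilde{D}_\alpha(P||Q) - \log c$ for any scalar $c>0$, which follows immediately from the definition: pulling $c$ out of $Q^{-\frac{\alpha'}{2}}(\cdot)Q^{-\frac{\alpha'}{2}}$ produces a factor $c^{-\alpha'}$ inside the trace, raising it to the power $\alpha$ gives $c^{-(\alpha-1)}$ since $\alpha\alpha' = \alpha-1$, and the prefactor $\frac{1}{\alpha-1}\log$ converts this into $-\log c$.

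Applying this with $c = \frac{1}{|C|}$ and $Q = \Id_{AC}\otimes\sigma_B$ gives, for every state $\sigma_B$, the identity $-\tilde{D}_\alpha\rndBrk{\rho_{ABC} || \Id_A \otimes \sigma_B \otimes \frac{\Id_C}{|C|}} = -\tilde{D}_\alpha(\rho_{ABC} || \Id_{AC}\otimes\sigma_B) - \log|C|$, where I have used $\Id_A \otimes \sigma_B \otimes \frac{\Id_C}{|C|} = \frac{1}{|C|}\Id_{AC}\otimes\sigma_B$. Since the left-hand side is one feasible value of the supremum defining $\tilde{H}^{\uparrow}_\alpha(A|BC)_\rho$, it is at most $\tilde{H}^{\uparrow}_\alpha(A|BC)_\rho$; thus $\tilde{H}^{\uparrow}_\alpha(A|BC)_\rho \geq -\tilde{D}_\alpha(\rho_{ABC}|| \Id_{AC}\otimes\sigma_B) - \log|C|$ for every $\sigma_B$. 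Taking the supremum over $\sigma_B$ on the right-hand side yields exactly $\tilde{H}^{\uparrow}_\alpha(AC|B)_\rho - \log|C|$, proving Eq. \ref{eq:Halpha_dim_bd_tomamichel1}. Phrasing the argument as ``inequality for every $\sigma_B$, then take the sup'' sidesteps any question of whether the optimiser is attained.

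The Petz case (Eq. \ref{eq:Halpha_dim_bd_tomamichel2}) is identical in structure. One checks the same scaling identity $\bar{D}_\alpha(P||cQ) = \bar{D}_\alpha(P||Q) - \log c$, which here comes from $(cQ)^{1-\alpha} = c^{1-\alpha}Q^{1-\alpha}$ together with the prefactor $\frac{1}{\alpha-1}\log$ turning $c^{1-\alpha}$ into $-\log c$. The same test state $\sigma_B\otimes\frac{\Id_C}{|C|}$ and the same ``for all $\sigma_B$, then sup'' step give $\bar{H}^{\uparrow}_\alpha(A|BC)_\rho \geq \bar{H}^{\uparrow}_\alpha(AC|B)_\rho - \log|C|$, for $\alpha$ in the admissible range.

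I do not anticipate a genuine obstacle: the entire content is the choice of the maximally mixed test state on $C$ together with the elementary scaling property. The only points needing a word of care are (i) that scaling $Q$ by a positive constant preserves the support and orthogonality conditions appearing in the definitions, so the finite-versus-infinite dichotomy of the divergences is unaffected; and (ii) the boundary values $\alpha \in \{\frac{1}{2}, 1, 2, \infty\}$, where the closed-form manipulations involving $\frac{1}{\alpha-1}$ are read in the limiting sense, extending by continuity of the R\'enyi divergences in $\alpha$; in particular at $\alpha=1$ the relation $D(P||cQ) = D(P||Q)-\log c$ holds directly from the definition of $D$.
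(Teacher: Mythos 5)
Your proposal is correct and matches the paper's own proof: both restrict the supremum in $\tilde{H}^{\uparrow}_\alpha(A|BC)_\rho$ to test states of the form $\sigma_B \otimes \frac{\Id_C}{|C|}$ and use the scaling identity $\mathbb{D}(P\,\|\,cQ) = \mathbb{D}(P\,\|\,Q) - \log c$ to pull out the $\log|C|$ term, with the same remark that the argument carries over verbatim to $\bar{H}^{\uparrow}_\alpha$. Your explicit verification of the scaling identity for both divergences and the attention to boundary values of $\alpha$ are more detailed than the paper's one-line justification, but the route is the same.
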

\begin{proof}
    By the definition of the sandwiched conditional entropy, we have
    \begin{align*}
        \tilde{H}^{\uparrow}_\alpha (A| BC) &= \sup_{\eta_{BC} \in D(BC)} -\tilde{D}_\alpha(\rho_{ABC}|| \Id_{A} \otimes \eta_{BC})\\
        &\geq \sup_{\eta_{B} \in D(B)} -\tilde{D}_\alpha\rndBrk{\rho_{ABC}|| \Id_{A} \otimes \frac{\Id_C}{|C|} \otimes \eta_{B}} \\
        &= \sup_{\eta_{B} \in D(B)} -\tilde{D}_\alpha\rndBrk{\rho_{ABC}|| \Id_{AC} \otimes \eta_{B}} - \log|C| \\
        &= \tilde{H}^{\uparrow}_\alpha (AC| B) - \log|C|
    \end{align*}
    where we simply restrict the supremum in the second line to states of the form $\eta_{BC} = \eta_{B} \otimes \frac{\Id_C}{|C|}$ to derive the inequality. The same proof also works with $\bar{H}^{\uparrow}_\alpha$ entropy.
\end{proof}
The following lemma was originally proven in \cite[Proposition 3.3.5]{Leditzky16-thesis}.
\begin{lemma}[Dimension bound for conditioning register]
    For $\alpha \in [\frac{1}{2}, \infty]$ and a state $\rho_{ABC}$ we have 
    \begin{align}
        \tilde{H}^{\uparrow}_\alpha (A| B C)_\rho \geq \tilde{H}^{\uparrow}_\alpha (A| B)_\rho - 2\log|C|.
        \label{eq:cond_reg_dim_bd}
    \end{align}
    Further, if the register $C$ is classical, then we have
    \begin{align}
        \tilde{H}^{\uparrow}_\alpha (A| B C)_\rho \geq \tilde{H}^{\uparrow}_\alpha (A| B)_\rho - \log|C|.
        \label{eq:cond_reg_dim_bd_cl}
    \end{align}
    \label{lemm:cond_reg_dim_bd}
\end{lemma}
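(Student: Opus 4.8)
The plan is to prove both inequalities by first using Lemma~\ref{lemm:Halpha_dim_bd_tomamichel} to move the register $C$ from the conditioning system to the conditioned system, and then separately controlling how much entropy is lost in doing so. Concretely, Eq.~\ref{eq:Halpha_dim_bd_tomamichel1} gives, for $\alpha \in [\frac{1}{2}, \infty]$,
\begin{align*}
    \tilde{H}^{\uparrow}_\alpha(A|BC)_\rho \geq \tilde{H}^{\uparrow}_\alpha(AC|B)_\rho - \log|C|,
\end{align*}
so in both cases it remains to lower bound $\tilde{H}^{\uparrow}_\alpha(AC|B)_\rho$ in terms of $\tilde{H}^{\uparrow}_\alpha(A|B)_\rho$. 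The two bounds differ only in how tightly this last step can be carried out.

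For the general bound (Eq.~\ref{eq:cond_reg_dim_bd}) I would simply invoke the lower dimension bound for the sandwiched $\uparrow$-entropy from Lemma~\ref{lemm:dim_bd} with $A_1 = A$ and $A_2 = C$, namely $\tilde{H}^{\uparrow}_\alpha(AC|B)_\rho \geq \tilde{H}^{\uparrow}_\alpha(A|B)_\rho - \log|C|$. Chaining this with the displayed inequality immediately yields $\tilde{H}^{\uparrow}_\alpha(A|BC)_\rho \geq \tilde{H}^{\uparrow}_\alpha(A|B)_\rho - 2\log|C|$.

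For the classical bound (Eq.~\ref{eq:cond_reg_dim_bd_cl}), the improvement comes from showing that when $C$ is classical no entropy is lost in the second step, i.e.\ $\tilde{H}^{\uparrow}_\alpha(AC|B)_\rho \geq \tilde{H}^{\uparrow}_\alpha(A|B)_\rho$. To see this I would apply the chain rule of Corollary~\ref{cor:opt_Halpha_ch_rule} with $A_1 = A$, $A_2 = C$, which gives
\begin{align*}
    \tilde{H}^{\uparrow}_\alpha(AC|B)_\rho \geq \tilde{H}^{\uparrow}_\alpha(A|B)_\rho + \tilde{H}^{\downarrow}_\alpha(C|AB)_\nu
\end{align*}
for the state $\nu_{ACB} = \nu_{AB}^{1/2}\,\rho_{C|AB}\,\nu_{AB}^{1/2}$ built from the optimizer $\sigma_B$ of $\tilde{H}^{\uparrow}_\alpha(A|B)_\rho$. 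Since $\rho$ is classical on $C$, the conditional operator $\rho_{C|AB}$ is block diagonal in the $C$ basis, and hence so is $\nu_{ACB}$; thus $\nu$ is itself classical on $C$. The remaining ingredient is that $\tilde{H}^{\downarrow}_\alpha(C|AB)_\nu \geq 0$ for classical $C$: writing $\nu_{ACB} = \sum_c \ket{c}\bra{c}_C \otimes \mu^c_{AB}$ with $\sum_c \mu^c_{AB} = \nu_{AB}$, each block satisfies $\mu^c_{AB} \leq \nu_{AB}$, so $\nu_{ACB} \leq \Id_C \otimes \nu_{AB}$ and therefore $\tilde{D}_\alpha(\nu_{ACB} \| \Id_C \otimes \nu_{AB}) \leq D_{\max}(\nu_{ACB} \| \Id_C \otimes \nu_{AB}) \leq 0$. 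Combining these gives $\tilde{H}^{\uparrow}_\alpha(AC|B)_\rho \geq \tilde{H}^{\uparrow}_\alpha(A|B)_\rho$, and with the first displayed inequality the claimed bound $\tilde{H}^{\uparrow}_\alpha(A|BC)_\rho \geq \tilde{H}^{\uparrow}_\alpha(A|B)_\rho - \log|C|$ follows.

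The main point to be careful about is the classical step: one must check that the chain-rule state $\nu$ genuinely inherits the block structure of $C$ (so that the downarrow term is a classical conditional entropy) and that this term is nonnegative. This is essential because the naive data-processing estimate—tracing out $C$ from $AC$—only reproduces the weaker factor $2\log|C|$ and is blind to the classicality of $C$; it is precisely the operator inequality $\nu_{ACB} \leq \Id_C \otimes \nu_{AB}$, valid for classical $C$, that removes one factor of $\log|C|$ uniformly in $\alpha \in [\frac{1}{2}, \infty]$.
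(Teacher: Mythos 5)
Your proof is correct, and for the general $2\log|C|$ bound it is exactly the paper's argument: the paper proves Lemma~\ref{lemm:cond_reg_dim_bd} by the same combination you use, chaining Lemma~\ref{lemm:Halpha_dim_bd_tomamichel} (which gives $\tilde{H}^{\uparrow}_\alpha(A|BC)_\rho \geq \tilde{H}^{\uparrow}_\alpha(AC|B)_\rho - \log|C|$) with the dimension bound of Lemma~\ref{lemm:dim_bd}. Where you genuinely diverge is the classical case: the paper simply cites \cite[Lemma 5.3]{TomamichelBook16} for the fact that $\tilde{H}^{\uparrow}_\alpha(AC|B)_\rho \geq \tilde{H}^{\uparrow}_\alpha(A|B)_\rho$ when $C$ is classical, whereas you re-derive this ingredient internally from the chain rule of Corollary~\ref{cor:opt_Halpha_ch_rule}. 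Your derivation is sound: writing $\rho_{ACB} = \sum_c \ket{c}\bra{c}_C \otimes \tilde{\rho}^c_{AB}$, the conditional operator $\rho_{C|AB} = \rho_{AB}^{-1/2}\rho_{ACB}\,\rho_{AB}^{-1/2}$ is block diagonal in $C$, hence $\nu_{ACB} = \nu_{AB}^{1/2}\rho_{C|AB}\,\nu_{AB}^{1/2} = \sum_c \ket{c}\bra{c}_C \otimes \mu^c_{AB}$ with $\sum_c \mu^c_{AB} = \nu_{AB}$ (this holds exactly, since $\nu_{AB} \ll \rho_{AB}$ by construction), so $\nu_{ACB} \leq \Id_C \otimes \nu_{AB}$ and $\tilde{H}^{\downarrow}_\alpha(C|AB)_\nu = -\tilde{D}_\alpha(\nu_{ACB}\Vert\Id_C\otimes\nu_{AB}) \geq -D_{\max}(\nu_{ACB}\Vert\Id_C\otimes\nu_{AB}) \geq 0$, using monotonicity of $\tilde{D}_\alpha$ in $\alpha$. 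The self-contained route buys independence from the external citation, at the price of one small patch you should make explicit: Corollary~\ref{cor:opt_Halpha_ch_rule} is stated only for $\alpha \in (0,\infty)$, so your argument does not directly cover the endpoint $\alpha = \infty$ included in the lemma's range $[\frac{1}{2},\infty]$; this is closed either by taking the limit $\alpha \rightarrow \infty$ (since $\tilde{H}^{\uparrow}_\alpha \rightarrow H_{\min}$) or by a one-line direct argument for $H_{\min}$, whereas the paper's citation covers the full range outright.
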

\begin{proof}
    This bound can be proven by combining Lemma \ref{lemm:dim_bd} and Lemma \ref{lemm:Halpha_dim_bd_tomamichel}. In the case that $C$ is classical, we have the inequality $\tilde{H}^{\uparrow}_\alpha (A C| B)_\rho \geq \tilde{H}^{\uparrow}_\alpha (A| B)_\rho$ \cite[Lemma 5.3]{TomamichelBook16}. 
\end{proof}

\section{Necessity for constraints on side information size for approximate AEP and EAT and its implication for approximate GEAT}
\label{sec:size_of_B_approx_EAT}

It turns out that it is necessary to place some sort of bound on the size of the side information for an approximate entropy accumulation theorem of the form in Theorem \ref{th:approx_EAT}. The following classical example demonstrates this. This example also demonstrates the necessity for a bound on the size of the side information in an approximate asymptotic equipartition of the form in Theorem \ref{th:weak_approx_AEP}.\\

Let there be $n$ rounds. For $k \in [n]$, the map $\cM_k : A_1^{k-1} \rightarrow A_k B_k C_k$. This map sets the variables as follows:
\begin{enumerate}
    \item Measure $A_1^{k-1}$ in the standard basis. 
    \item Let $A_k \in_R \{0,1\}$ be a randomly chosen bit.
    \item Let $C_k = 0$ with probability $\frac{\epsilon}{2}$ and $C_k = 1$ otherwise.
    \item In the case that $C_k = 1$, let $B_k \in_R \{0,1\}^n$ be a randomly chosen $n$-bit string. Otherwise, let $B_k = A_1^k R_k$, where $R_k$ is an $(n-k)$ bit randomly chosen string from $\{0,1\}$. 
\end{enumerate}
\begin{sloppypar}
Let $\cM'_k$ be the map which always chooses $B_k$ to be a random $n$-bit string. It is easy to see that in this case, we have $H_{\min}(A_1^n | B_1^n C_1^n)_{\cM'_n \circ \cdots \circ \cM'_1(1)} = n$ whereas $H_{\min}(A_1^n | B_1^n C_1^n)_{\cM_n \circ \cdots \circ \cM_1(1)} = O(1)$ even though for every $k \in [n]$, the maps $\cM_k$ are $\epsilon-$close in diamond norm distance to the maps $\cM'_k$. This proves that a bound on the size of the side registers is indeed necessary for approximate entropy accumulation. We show these facts formally in the following. 
\end{sloppypar}
\begin{lemma}
    Suppose $\Phi: R \rightarrow A$ and $\Phi': R \rightarrow A$ are two channels which take a register $R$ and measure it in the standard basis and map the resulting classical register $C$ to the classical register $A$. Then, for every $\rho_{R R'}$, we have 
    \begin{align}
        \norm{\Phi(\rho_{R R'}) - \Phi'(\rho_{R R'})}_{1} \leq \norm{P^{\Phi}_{AC} - P_{AC}^{\Phi'}}_1
    \end{align}
    where $P^{\Phi}_{AC}$ and $P^{\Phi'}_{AC}$ are the classical distributions produced when the maps $\Phi$ and $\Phi'$ are applied to the state $\rho_{R R'}$ respectively. 
\end{lemma}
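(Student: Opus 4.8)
The plan is to exploit the fact that both channels perform the \emph{same} standard-basis measurement on $R$, so that the entire difference between $\Phi$ and $\Phi'$ is confined to the classical post-processing step $C \to A$. First I would fix a standard basis $\{\ket{c}\}_c$ for $R$ and record that measuring $R$ produces, for either channel, the identical outcome distribution $p(c) := \tr\rndBrk{\bra{c}_R \rho_{RR'} \ket{c}_R}$ together with the identical post-measurement reference states $\rho_{R'|c} := \frac{1}{p(c)}\bra{c}_R \rho_{RR'}\ket{c}_R$ (defined whenever $p(c)>0$; terms with $p(c)=0$ drop out). Writing the two classical channels as conditional distributions $\Phi(a|c)$ and $\Phi'(a|c)$, the outputs become the classical-quantum states
\begin{align*}
    \Phi(\rho_{RR'})_{AR'} = \sum_{a,c} \Phi(a|c)\, p(c)\, \ket{a}\bra{a}_A \otimes \rho_{R'|c},
\end{align*}
and similarly for $\Phi'$, with the \emph{same} blocks $\rho_{R'|c}$.

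The cleanest route I would take is via data processing rather than a direct triangle-inequality estimate. I would introduce the channel $\tilde{\Phi}: R \to AC$ that additionally retains the measurement record $C$, so that $\Phi = \tr_C \circ \tilde{\Phi}$ and likewise $\Phi' = \tr_C \circ \tilde{\Phi}'$. Monotonicity of the trace norm under the partial trace then gives
\begin{align*}
    \norm{\Phi(\rho_{RR'}) - \Phi'(\rho_{RR'})}_1 \leq \norm{\tilde{\Phi}(\rho_{RR'}) - \tilde{\Phi}'(\rho_{RR'})}_1.
\end{align*}
Since both $A$ and $C$ are classical in $\tilde{\Phi}(\rho_{RR'})$, this state is block-diagonal in the index $(a,c)$, with block $P^{\Phi}_{AC}(a,c)\,\rho_{R'|c}$ where $P^{\Phi}_{AC}(a,c) = \Phi(a|c)\,p(c)$, and the difference is block-diagonal with block $\rndBrk{P^{\Phi}_{AC}(a,c) - P^{\Phi'}_{AC}(a,c)}\rho_{R'|c}$.

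The trace norm of a block-diagonal operator is the sum of the trace norms of its blocks, and each block is a scalar multiple of the \emph{normalized} state $\rho_{R'|c}$, so its trace norm equals $|P^{\Phi}_{AC}(a,c) - P^{\Phi'}_{AC}(a,c)|$. Summing over $(a,c)$ yields the exact identity $\norm{\tilde{\Phi}(\rho_{RR'}) - \tilde{\Phi}'(\rho_{RR'})}_1 = \norm{P^{\Phi}_{AC} - P^{\Phi'}_{AC}}_1$, which combined with the data-processing bound closes the argument. I do not anticipate a genuine obstacle here; the only point requiring care is the bookkeeping that both channels share the same outcome distribution $p(c)$ and the same conditional reference states $\rho_{R'|c}$ — this is precisely what makes the reference register factor out identically and lets the quantum trace distance collapse exactly to the classical one.
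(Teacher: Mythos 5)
Your proposal is correct and takes essentially the same route as the paper: both arguments decompose the post-measurement state via the common outcome distribution $p(c)$ and the shared conditional states $\rho_{R'|c}$, retain the measurement record $C$, and apply monotonicity of the trace norm under $\tr_C$ to reduce to the state on $ACR'$. The only (cosmetic) difference is in the final step: where the paper invokes data processing a second time --- via the channel that reads $C$ and appends $\rho_{R'|c}$ --- to bound the $ACR'$ distance by the $AC$ distance, you evaluate the trace norm of the block-diagonal difference exactly, observing that each block $\rndBrk{P^{\Phi}_{AC}(a,c) - P^{\Phi'}_{AC}(a,c)}\rho_{R'|c}$ has trace norm $\vert P^{\Phi}_{AC}(a,c) - P^{\Phi'}_{AC}(a,c)\vert$, which in fact shows the paper's second inequality holds with equality.
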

\begin{proof}
    Let $\{\ket{c}\bra{c}\}_c$ represent the measurement in the standard basis. Since, both the channels first measure register $R$ in the standard basis, they produce the state
    \begin{align*}
        \rho_{CR'} &= \sum_{c} \ket{c}\bra{c}_C \otimes \tr_R \rndBrk{\ket{c}\bra{c}_R\rho_{RR'}} \\
        &= \sum_c p(c) \ket{c}\bra{c}_C \otimes \rho_{R'|c}
    \end{align*}
    where we have defined $p(c):= \tr \rndBrk{\ket{c}\bra{c}_R\rho_{R}}$ and $\rho_{R'|c} := \frac{1}{p(c)}\tr_R \rndBrk{\ket{c}\bra{c}_R\rho_{RR'}}$. Now, the action of channel $\Phi$ on register $C$ can be represented using the conditional probability distribution $p^{\Phi}_{A|C}$ and the action of channel $\Phi'$ on register $C$ can be similarly represented using $p^{\Phi'}_{A|C}$. We can define the states
    \begin{align*}
        \rho^{\Phi}_{ACR'} := \sum_{ac} p^{\Phi}_{A|C}(a|c) p(c) \ket{a,c}\bra{a,c} \otimes \rho_{R'|c} \\
        \rho^{\Phi'}_{ACR'} := \sum_{ac} p^{\Phi'}_{A|C}(a|c) p(c) \ket{a,c}\bra{a,c} \otimes \rho_{R'|c}.
    \end{align*}
    Note that $\tr_C \rndBrk{\rho^{\Phi}_{ACR'}} = \Phi(\rho_{RR'})$ and $\tr_C \rndBrk{\rho^{\Phi'}_{ACR'}} = \Phi'(\rho_{RR'})$. Further, we can view the $R'$ register of $\rho^{\Phi}_{ACR'}$ and $\rho^{\Phi'}_{ACR'}$ as being created by a channel which measures the register $C$ and outputs the state $\rho_{R'|c}$ in the register $R'$. Therefore, we have
    \begin{align*}
        \norm{\Phi(\rho_{RR'}) - \Phi'(\rho_{RR'})}_1 &\leq \norm{\rho^{\Phi}_{ACR'} - \rho^{\Phi'}_{ACR'}}_1 \\
        &\leq \norm{\rho^{\Phi}_{AC} - \rho^{\Phi'}_{AC}}_1 \\
        &= \norm{P^{\Phi}_{AC} - P_{AC}^{\Phi'}}_1.
    \end{align*}
\end{proof}
We can use the above lemma to evaluate the distance between the channels $\cM_k$ and $\cM'_k$. Using the above lemma, it is sufficient to suppose that the input of the channels are classical. We can suppose that the registers $A_1^{k-1}$ are classical and distributed as $P_{A_1^{k-1}}$. Let $P_{A_1^k B_k C_k}$ be the output of $\cM_k$ on this distribution and $Q_{A_1^k B_k C_k}$ be the output of applying $\cM'_k$. Then, we have 
\begin{align*}
    \norm{P_{A_1^k B_k C_k} - Q_{A_1^k B_k C_k}}_1 &= \sum_{a_1^k, c_k} P(a_1^{k-1})P(a_k) P(c_k) \norm{P_{B_k| a_1^k, c_k} - Q_{B_k}}_1 \\
    &= \sum_{a_1^k} P(a_1^{k-1})P(a_k) \rndBrk{\rndBrk{1-\frac{\epsilon}{2}}\norm{P_{B_k| a_1^k, c_k=1} - Q_{B_k}}_1 + \frac{\epsilon}{2} \norm{P_{B_k| a_1^k, c_k=0} - Q_{B_k}}_1} \\
    &\leq \sum_{a_1^k} P(a_1^{k-1})P(a_k) \epsilon \\
    &= \epsilon 
\end{align*}
where in the first line we have used the fact that $A_k$ and $C_k$ are chosen independently with the same distribution in both the maps and the fact that $B_k$ is chosen independently in $\cM'_k$, for the third line we have used the fact that $B_k$ is independent and has the same distribution as $Q_{B_k}$ when $c_k=1$. Since, this is true for all input distributions, we have $\norm{\cM_k - \cM'_k}_{\diamond} \leq \epsilon$. \\

Now, let $R_{A_1^n B_1^n C_1^n}$ be the probability distribution created when the maps $\cM_k$ are applied sequentially $n$ times and $S_{A_1^n B_1^n C_1^n}$ be the probability distribution created when the maps $\cM'_k$ are applied sequentially $n$ times. Since, $B_k$ and $C_k$ are independent of $A_k$ in the distribution $S$, we have 
\begin{align*}
    H_{\min}(A_1^n | B_1^n C_1^n)_S = n.
\end{align*}
We will show that $H_{\min}^{\epsilon'} (A_1^n | B_1^n C_1^n)_R = O(1)$ as long as $\epsilon' \leq \frac{1}{4}$. Let $l := \frac{2}{\epsilon} \log\frac{1}{\epsilon'}$. Let $E$ be the event that there exists a $k > n- l$ such that $C_k=0$. For our choice of $l$, we have $p(E) \geq 1- \epsilon'$.

\begin{lemma}
    Let $P_{AB}$ be a subnormalised probability distribution such that $A= f(B)$ for some function $f$ (that is, $P(a,b)>0$ only if $ a= f(b)$). Then, $H^{\epsilon}_{\min} (A|B)_P \leq \log \frac{1}{\tr(P) - \sqrt{2\epsilon}}$. 
    \label{lemm:det_P_Hmin_eps_bd}
\end{lemma}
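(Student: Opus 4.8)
The plan is to bound $H_{\min}(A|B)_{\tilde P}$ for \emph{every} subnormalised state $\tilde P$ in the smoothing ball $B_\epsilon(P)$ and then take the maximum, since $H^\epsilon_{\min}(A|B)_P$ is by definition the largest such value. The key observation is that the hypothesis $A=f(B)$ forces $P$ to be supported entirely on the ``graph'' of $f$, and the min-entropy of any nearby state is controlled by how much of its mass sits on that graph. Accordingly, I would introduce the projector
\begin{align*}
    \Pi_{AB} := \sum_b \ket{f(b)}\bra{f(b)}_A \otimes \ket{b}\bra{b}_B,
\end{align*}
which projects onto the span of the pairs $(f(b),b)$. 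Because $P$ only has weight on such pairs, $\Pi P = P$, and hence $\tr(\Pi P) = \tr(P)$.

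With this in hand the argument is short. Fix any $\tilde P \in B_\epsilon(P)$ and set $\lambda := H_{\min}(A|B)_{\tilde P}$; by definition of the min-entropy for subnormalised states there is a state $\sigma_B$ with $\tilde P_{AB} \le 2^{-\lambda}\, \Id_A \otimes \sigma_B$. Testing this operator inequality against $\Pi \ge 0$ and using $\tr\big(\Pi (\Id_A \otimes \sigma_B)\big) = \sum_b \bra{b}\sigma_B\ket{b} = \tr(\sigma_B) = 1$ gives $\tr(\Pi \tilde P) \le 2^{-\lambda}$. On the other hand, since $0 \le \Pi \le \Id$ we have $|\tr(\Pi X)| \le \norm{X}_1$, so
\begin{align*}
    \tr(\Pi \tilde P) = \tr(\Pi P) + \tr\big(\Pi(\tilde P - P)\big) \ge \tr(P) - \norm{\tilde P - P}_1.
\end{align*}
Combining the two estimates yields $2^{-\lambda} \ge \tr(P) - \norm{\tilde P - P}_1$. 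Finally, since $\tilde P \in B_\epsilon(P)$, the standard relation between the (generalised) trace distance and the purified distance for subnormalised states \cite{TomamichelBook16} gives $\tfrac12\norm{\tilde P - P}_1 \le \epsilon$, so $\norm{\tilde P - P}_1 \le 2\epsilon \le \sqrt{2\epsilon}$ for $\epsilon \le \tfrac12$. Hence $2^{-\lambda} \ge \tr(P) - \sqrt{2\epsilon}$, i.e. $H_{\min}(A|B)_{\tilde P} \le \log\frac{1}{\tr(P) - \sqrt{2\epsilon}}$ whenever the right-hand side is finite. As this holds for every $\tilde P$ in the ball, taking the maximum proves the claim.

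There is no serious obstacle here; the whole proof hinges on choosing the right test operator, namely the ``graph projector'' $\Pi$, which is dictated by the deterministic relation $A=f(B)$. The one point that genuinely requires care is that the maximisation defining $H^\epsilon_{\min}$ ranges over \emph{all} (possibly non-classical, subnormalised) states near $P$, not just classical distributions; casting the whole estimate through the operator inequality $\tilde P \le 2^{-\lambda}\,\Id_A \otimes \sigma_B$ and the single test operator $\Pi$ is exactly what makes it apply uniformly over the ball. I note in passing that this route actually produces the slightly stronger constant $2\epsilon$ in place of $\sqrt{2\epsilon}$; the weaker stated form is retained since the two coincide at $\epsilon=\tfrac12$ and the lemma is only applied in the regime $\epsilon \le \tfrac14$.
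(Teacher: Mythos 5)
Your proof is correct, and it takes a genuinely more self-contained route than the paper's. Both arguments rest on the same core idea --- the trivial strategy ``guess $A=f(B)$'' captures weight $\tr(P)$ up to the smoothing error --- but the paper's proof invokes the operational identity $2^{-H_{\min}(A|B)_{P'}} = P'_{\mathrm{guess}}(A|B)$ and lower-bounds the guessing probability by $\sum_b P'_{AB}(f(b),b)$, whereas you use only the easy direction of that SDP duality: you test the defining operator inequality $\tilde P_{AB} \le 2^{-\lambda}\, \Id_A \otimes \sigma_B$ against the graph projector $\Pi$, whose key property $\tr\bigl(\Pi(\Id_A\otimes\sigma_B)\bigr)=1$ is exactly where the deterministic relation $A=f(B)$ enters. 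This buys you two things. First, no appeal to the guessing-probability characterisation is needed. Second, and more substantively, your estimate applies verbatim to \emph{every} subnormalised state in the purified-distance ball, including non-classical ones; the paper's proof, as written, quantifies only over ``distributions'' $P'$ and implicitly relies on the standard (but unstated) reduction that one may dephase the smoothing state without decreasing $H_{\min}$ or increasing the purified distance. Your bookkeeping of constants is also sound: $P(\tilde P, P)\le\epsilon$ dominates the generalised trace distance, so $\Vert \tilde P - P\Vert_1 \le 2\epsilon$, and $2\epsilon \le \sqrt{2\epsilon}$ exactly when $\epsilon\le\tfrac12$, while for $\epsilon>\tfrac12$ the stated bound is vacuous anyway since $\sqrt{2\epsilon}>1\ge\tr(P)$; hence your observation that the argument actually yields the slightly stronger constant $2\epsilon$ is accurate.
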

\begin{proof}
    Let $P'_{AB}$ be a distribution $\epsilon$-close to $P$ in purified distance. Then, it is $\sqrt{2\epsilon}$ close to $P$ in trace distance. We have that 
    \begin{align*}
        2^{-H_{\min}(A|B)_{P'}} &= P'_{\text{guess}}(A|B) \\
        &\geq \sum_{b} P'_{AB}(f(b), b) \\
        &\geq \sum_{b} P_{AB}(f(b), b) - \sqrt{2\epsilon} \\
        &= \tr(P) - \sqrt{2\epsilon}
    \end{align*}
    which implies that $H_{\min}(A|B)_{P'} \leq \log \frac{1}{\tr(P) - \sqrt{2\epsilon}}$. Since, this is true for every distribution $\epsilon$-close to $P$, it also holds for $H^{\epsilon}_{\min} (A|B)_P$. 
\end{proof}
We then have that
\begin{align*}
    H^{\epsilon'}_{\min}(A_1^n | B_1^n C_1^n)_{R} &\leq H^{\epsilon'}_{\min}(A_1^n | B_1^n C_1^n \wedge E)_{R} \\
    &\leq H^{\epsilon'}_{\min}(A_1^{n-l} | B_1^n C_1^n \wedge E)_{R} + l \\
    &\leq \log \frac{1}{p(E)- \sqrt{2\epsilon'}} + l \\
    &\leq \log \frac{1}{1- \epsilon'- \sqrt{2\epsilon'}} + l \\
    &\leq l + \log 8/3 = O(1)
\end{align*}
where in the first line we have used \cite[Lemma 10]{Tomamichel17} in the first line, dimension bound (can be proven using Lemma \ref{lemm:dim_bd}) in the second line, Lemma \ref{lemm:det_P_Hmin_eps_bd} in the third line and the fact that $p(E)\geq 1- \epsilon'$. \\

Also, note that the example given here satisfies
\begin{align*}
    \norm{P_{A_1^k B_1^k C_1^k} - P_{A_1^{k-1} B_1^{k-1} C_1^{k-1}} P_{A_k B_k C_k}}_1 \leq \epsilon
\end{align*}
for every $k$. This also proves that a bound on the size of the side information registers ($B_k C_k$ here), as we have in Theorem \ref{th:weak_approx_AEP}, is necessary for an approximate version of AEP.\\

Further, this example also rules out the possibility of a natural approximate extension to the generalised entropy accumulation theorem (GEAT) \cite{Metger22} where the maps $\cM_k \approx_\epsilon \cM'_k$ and the maps $\cM'_k$ satisfy the non-signalling conditions because one can write the entropy accumulation scenario in the form of a generalised entropy accumulation scenario where Eve's information contains the side information $B_1^k E$ in each step. Thus, it would not be possible to prove a meaningful bound on the smooth min-entropy without some sort of bound on the information transferred between the adversary's register $E_i$ and the register $R_i$. 

\section{Classical approximate entropy accumulation}

We present a simple proof for the approximate entropy accumulation theorem for classical distributions. This result also requires a much weaker assumption than Theorem \ref{th:approx_EAT}.

\begin{theorem}
    \label{th:approx_cl_EAT}
    Let $p_{A_1^n B_1^n E}$ be a classical distribution such that for every $k \in [n]$, and $a_1^{k-1}, b_1^{k-1}$ and $e$ 
    \begin{align}
        \norm{p_{A_k B_k | a_1^{k-1}, b_1^{k-1}, e} - q^{(k)}_{A_k B_k | a_1^{k-1}, b_1^{k-1}, e}}_{\infty} \leq \epsilon
    \end{align}
    where $\norm{v}_{\infty}:= \max_{i} |v(i)|$ and the $q^{(k)}_{B_k | a_1^{k-1}, b_1^{k-1}, e} = q^{(k)}_{B_k | b_1^{k-1}, e}$ or equivalently $q^{(k)}$ satisfies the Markov chain $A_k \leftrightarrow B_1^{k-1}E \leftrightarrow B_k$. Also, let $|A_k| = |A|$, $|B_k| = |B|$ for every $k \in [n]$.\\

    Then, for $\epsilon' \in  (0,1)$ and $\alpha \in \rndBrk{1, 1+ \frac{1}{\log(1+2|A|)}}$, we have that 
    \begin{align*}
        H_{\min}^{ \epsilon'}(A_1^n |B_1^n E)_{p} &\geq \sum_{k=1}^n \inf_{q} H(A_k | B_k A_1^{k-1}B_1^{k-1}E)_{q^{(k)}_{A_k B_k | A_1^{k-1}B_1^{k-1}E}q_{A_1^{k-1}B_1^{k-1}E}} \\
        &\quad\quad - n(\alpha-1)\log^2(2|A| + 1)- \frac{\alpha}{\alpha-1} n \log(1 + \epsilon |A||B|) -\frac{g_0(\epsilon')}{\alpha-1}. \numberthis
    \end{align*}
    where $g_0(x) := - \log(1 - \sqrt{1-x^2})$. The infimums are taken over all possible input probability distributions.
\end{theorem}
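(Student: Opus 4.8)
The plan is to follow the same template as the quantum proof of Theorem~\ref{th:approx_EAT}---bound a max-relative entropy to a Markov auxiliary, apply the entropic triangle inequality, then run the R\'enyi chain rule on the auxiliary---but to exploit the stronger classical $\infty$-norm hypothesis so that everything collapses into a single clean step. First I would define the renormalised conditionals
\[\hat{q}^{(k)}(a_k,b_k|a_1^{k-1},b_1^{k-1},e):=\frac{q^{(k)}(a_k,b_k|a_1^{k-1},b_1^{k-1},e)+\epsilon}{1+\epsilon|A||B|},\]
which are genuine probability distributions, and set $\hat{\sigma}_{A_1^n B_1^n E}:=p_E\prod_{k=1}^n \hat{q}^{(k)}$. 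The point of this particular mixing (rather than mixing $q^{(k)}$ with $p$ as in the quantum case) is twofold. On one hand, the hypothesis $p_{A_k B_k|\cdots}\le q^{(k)}_{A_k B_k|\cdots}+\epsilon$ gives the pointwise bound $p(a_k,b_k|\cdots)\le(1+\epsilon|A||B|)\,\hat{q}^{(k)}(a_k,b_k|\cdots)$, so multiplying over the $n$ rounds yields $D_{\max}(p\,\|\,\hat{\sigma})\le n\log(1+\epsilon|A||B|)$ with no smoothing needed at all. On the other hand, summing the numerator over $a_k$ shows $\hat{\sigma}(b_k|a_1^{k-1},b_1^{k-1},e)=(q^{(k)}(b_k|b_1^{k-1},e)+|A|\epsilon)/(1+\epsilon|A||B|)$ is independent of $a_1^{k-1}$, so $\hat{\sigma}$ inherits every Markov chain $A_1^{k-1}\leftrightarrow B_1^{k-1}E\leftrightarrow B_k$.

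Next I would feed $D_{\max}(p\,\|\,\hat{\sigma})\le n\log(1+\epsilon|A||B|)$ into Lemma~\ref{lemm:Hmin_rho_to_Halpha_sigma_using_Dmax} with smoothing parameters $\epsilon=0$ and $\delta=\epsilon'$ (note that then $g_1(\epsilon',0)=g_0(\epsilon')$), which immediately gives
\[H_{\min}^{\epsilon'}(A_1^n|B_1^n E)_p\ge \tilde{H}^{\uparrow}_{\alpha}(A_1^n|B_1^n E)_{\hat{\sigma}}-\tfrac{\alpha}{\alpha-1}\,n\log(1+\epsilon|A||B|)-\tfrac{g_0(\epsilon')}{\alpha-1}.\]
Because $\hat{\sigma}$ obeys the Markov chains, I can lower bound the R\'enyi term by iterating the entropy-accumulation chain rule (Corollary~\ref{cor:Markoc_ch_opt_Halpha_bd}) with carried register $R_{k-1}=A_1^{k-1}B_1^{k-1}E$, obtaining $\sum_k\inf_\omega \tilde{H}^{\downarrow}_{\alpha}(A_k|B_k A_1^{k-1}B_1^{k-1}E)_{\hat{q}^{(k)}(\omega)}$, and then convert each R\'enyi term to the von Neumann entropy via \cite[Lemma B.9]{Dupuis20}, paying the $n(\alpha-1)\log^2(2|A|+1)$ that appears in the statement; this is exactly where the restriction $\alpha<1+1/\log(1+2|A|)$ is used.

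The one genuinely new step---and the reason the bound features entropies of $q^{(k)}$ rather than $\hat{q}^{(k)}$ and carries only a single error term---is to check that the uniform admixture costs no entropy. Writing $\hat{q}^{(k)}=(1-\mu)q^{(k)}+\mu\,u_{A_k B_k}$ with $\mu=\epsilon|A||B|/(1+\epsilon|A||B|)$ and $u$ uniform, concavity of the conditional von Neumann entropy gives, for every input $\omega$,
\[H(A_k|B_k\tilde{R})_{\hat{q}^{(k)}(\omega)}\ge(1-\mu)H(A_k|B_k\tilde{R})_{q^{(k)}(\omega)}+\mu\log|A|\ge H(A_k|B_k\tilde{R})_{q^{(k)}(\omega)},\]
the last inequality because $H(A_k|B_k\tilde{R})_{q^{(k)}(\omega)}\le\log|A|$. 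Taking the infimum over $\omega$ then yields $\inf_\omega H(A_k|\cdots)_{\hat{q}^{(k)}(\omega)}\ge\inf_q H(A_k|B_k A_1^{k-1}B_1^{k-1}E)_{q^{(k)}q}$, which is precisely the sum in the statement; assembling the three displays proves the theorem.

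I expect the only delicate points to be verifying that $\hat{\sigma}$ really does inherit all the Markov chains (a one-line marginalisation, sketched above) and that Corollary~\ref{cor:Markoc_ch_opt_Halpha_bd} applies verbatim once the sequential classical conditionals $\hat{q}^{(k)}$ are recast as channels on the carried register. The concavity estimate, although it is the conceptual crux that cleanly separates the classical case from the quantum one, is elementary; the chief advantage over the quantum argument is that the renormalisation trick converts the $\infty$-norm closeness directly into a finite, unsmoothed $D_{\max}$ bound, sidestepping both the sharp-divergence machinery and the $C_k$-register detour of Theorem~\ref{th:approx_EAT}.
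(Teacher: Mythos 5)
Your proposal is correct and follows essentially the same route as the paper's proof: your $\hat{q}^{(k)}$ is exactly the paper's mixed conditional $r^{(k)} = \frac{1}{1+|A||B|\epsilon}\,q^{(k)} + \frac{|A||B|\epsilon}{1+|A||B|\epsilon}\,u_{A_k B_k}$, and the paper likewise derives the unsmoothed bound $D_{\max}(p\,\|\,r) \le n\log(1+\epsilon|A||B|)$, verifies the Markov chains by the same marginalisation, applies the EAT chain rule with \cite[Lemma B.9]{Dupuis20}, uses the same concavity-plus-$H\le\log|A|$ argument to pass from $r^{(k)}$ back to $q^{(k)}$, and finishes with Lemma \ref{lemm:Hmin_rho_to_Halpha_sigma_using_Dmax} at smoothing $\epsilon=0$, $\delta=\epsilon'$. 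The only cosmetic difference is the order of steps (you apply the triangle inequality before the chain rule, the paper after), which is immaterial.
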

For $\alpha = 1 + \sqrt{\epsilon}$ (assuming $\sqrt{\epsilon} \leq 1+ \frac{1}{\log(1+2|A|)}$), and using $\alpha \leq 2$ and $\log(1+x) \leq x$ as long as $x \geq 0$, the above bound gives us
\begin{align*}
    H_{\min}^{ \epsilon'}(A_1^n |B_1^n E)_{p} &\geq \sum_{k=1}^n \inf_{q} H(A_k | B_k A_1^{k-1}B_1^{k-1}E)_{q^{(k)}_{A_k B_k | A_1^{k-1}B_1^{k-1}E}q_{A_1^{k-1}B_1^{k-1}E}} \\
    &\quad\quad - n\sqrt{\epsilon}\rndBrk{\log^2(2|A| + 1)- 2 |A||B|} -\frac{g_0(\epsilon')}{\alpha-1} \numberthis
\end{align*}
\begin{proof}
    For every $k \in [n]$, we modify $q^{(k)}_{A_k B_k | A_1^{k-1} B_1^{k-1} E}$ to create the distributions $r^{(k)}_{A_k B_k | A_1^{k-1} B_1^{k-1} E}$, which are defined as follows
    \begin{enumerate}
        \item Choose a random variable $C_k$ from $\{0,1\}$ with probabilities $\rndBrk{\frac{|A||B|\epsilon}{1+|A||B|\epsilon}, \frac{1}{1+|A||B|\epsilon}}$. 
        \item If $C_k = 1$, then choose random variables $A_k, B_k$ using $q^{(k)}_{A_k B_k | A_1^{k-1} B_1^{k-1} E}$ else choose $A_k, B_k$ randomly with probability $\frac{1}{|\mathcal{A}||\mathcal{B}|}$.
    \end{enumerate}
    That is, we have 
    \begin{align*}
        r^{(k)}_{A_k B_k | A_1^{k-1} B_1^{k-1} E} := \frac{1}{1+|A||B|\epsilon} q^{(k)}_{A_k B_k | A_1^{k-1} B_1^{k-1} E} + \frac{|A||B|\epsilon}{1+|A||B|\epsilon} u_{A_k B_k}
    \end{align*}
    where $u_{A_k B_k}$ is the uniform distribution on the registers $A_k$ and $B_k$. \\

    For every $k, a_1^{k-1}, b_1^{k-1}$, and $e$, we have
    \begin{align*}
        & \norm{p_{A_k B_k | a_1^{k-1}, b_1^{k-1}, e} - q^{(k)}_{A_k B_k | a_1^{k-1}, b_1^{k-1}, e}}_\infty \leq \epsilon\\
        \Rightarrow\ & p_{A_k B_k | a_1^{k-1}, b_1^{k-1}, e} \leq q^{(k)}_{A_k B_k | a_1^{k-1}, b_1^{k-1}, e} + \epsilon \Id_{A_k B_k} \\
        \Rightarrow\ & p_{A_k B_k | a_1^{k-1}, b_1^{k-1}, e} \leq q^{(k)}_{A_k B_k | a_1^{k-1}, b_1^{k-1}, e} + \epsilon |A||B| u_{A_k B_k} \\
        \Rightarrow\ & p_{A_k B_k | a_1^{k-1}, b_1^{k-1}, e} \leq (1+|A||B|\epsilon) r^{(k)}_{A_k B_k | A_1^{k-1} B_1^{k-1} E}
    \end{align*}
    Define the distribution 
    \begin{align}
        r_{A_1^n B_1^n E} = \prod_{k=1}^n r^{(k)}_{A_k B_k | A_1^{k-1} B_1^{k-1} E} p_E.
    \end{align}
    Note that for every $k, a_1^{k-1}, b_1^k$, and $e$, we have 
    \begin{align*}
        r_{B_k | A_1^{k-1} B_1^{k-1} E}(b_k | a_1^{k-1} b_1^{k-1} e) &= \frac{1}{1+|A||B|\epsilon} q^{(k)}_{B_k | A_1^{k-1} B_1^{k-1} E}(b_k | a_1^{k-1} b_1^{k-1} e) + \frac{\epsilon}{1+|A||B|\epsilon} \\
        &= \frac{1}{1+|A||B|\epsilon} q^{(k)}_{B_k | B_1^{k-1} E}(b_k | b_1^{k-1} e) + \frac{\epsilon}{1+|A||B|\epsilon},
    \end{align*}
    which implies
    \begin{align*}
        r_{B_k | B_1^{k-1} E}(b_k | b_1^{k-1} e) &= \sum_{\bar{a}_1^{k-1}} r_{A_1^{k-1} | B_1^{k-1} E}(\bar{a}_1^{k-1}| b_1^{k-1} e ) r_{B_k | A_1^{k-1} B_1^{k-1} E}(b_k | \bar{a}_1^{k-1} b_1^{k-1} e)\\
        &= \sum_{\bar{a}_1^{k-1}} r_{A_1^{k-1} | B_1^{k-1} E}(\bar{a}_1^{k-1}| b_1^{k-1} e )\rndBrk{\frac{1}{1+|A||B|\epsilon} q^{(k)}_{B_k | B_1^{k-1} E}(b_k | b_1^{k-1} e) + \frac{\epsilon}{1+|A||B|\epsilon}}\\
        &= r_{B_k | A_1^{k-1} B_1^{k-1} E}(b_k | a_1^{k-1} b_1^{k-1} e).
    \end{align*}
    Thus, for every $k \in [n]$, $r$ satisfies the Markov chain $A_1^{k-1} \leftrightarrow B_1^{k-1} E \leftrightarrow B_k$. Further, we have 
    \begin{align*}
        p_{A_1^n B_1^n E}(a_1^n, b_1^n, e) &= \prod_{k=1}^n p_{A_k B_k | A_1^{k-1}, B_1^{k-1}, E} (a_k, b_k | a_1^{k-1}, b_1^{k-1}, e) p_E(e)\\
        &\leq (1+\epsilon|A||B|)^n \prod_{k=1}^n r^{(k)}_{A_k B_k | A_1^{k-1}, B_1^{k-1}, E} (a_k, b_k | a_1^{k-1}, b_1^{k-1}, e) p_E(e) \\
        &= (1+\epsilon|A||B|)^n r_{A_1^n B_1^n E}(a_1^n, b_1^n, e)
    \end{align*}
    which shows that $D_{\max}(p_{A_1^n B_1^n E} || r_{A_1^n B_1^n E}) \leq n \log(1 + \epsilon|A||B|)$.\\

    \begin{figure}
        \centering
        \includegraphics[scale=0.2]{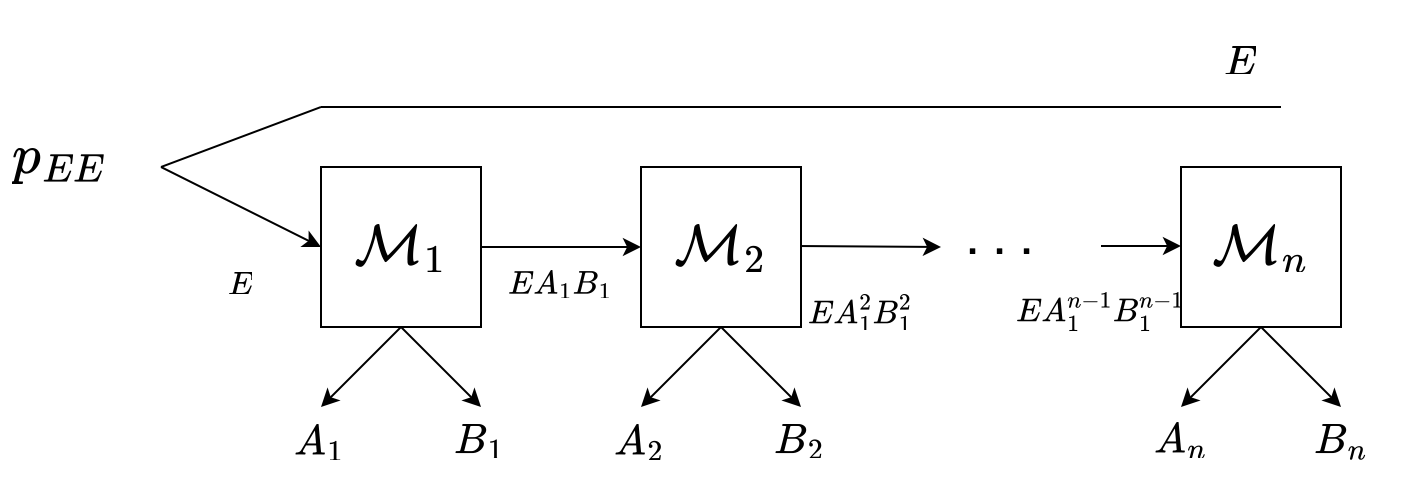}
        \caption{Setting for classical EAT}
        \label{fig:cl_approx_EAT_diag}
    \end{figure}

    The distribution $r_{A_1^n B_1^n E}$ can be viewed as the result of a series of maps as in Fig. \ref{fig:cl_approx_EAT_diag}. We can now use the EAT chain rule \cite[Corollary 3.5]{Dupuis20} along with \cite[Lemma B.9]{Dupuis20} $n$-times to bound the entropy of this auxiliary distribution. We get 
    \begin{align*}
        \tilde{H}_{\alpha}^{\uparrow}(A_1^n | B_1^n E)_r &\geq \sum_{k=1}^n \inf_{q_{A_1^{k-1}B_1^{k-1}E}} \tilde{H}_{\alpha}^{\downarrow}(A_k | B_k A_1^{k-1}B_1^{k-1}E)_{r^{(k)}_{A_k B_k | A_1^{k-1}B_1^{k-1}E}q_{A_1^{k-1}B_1^{k-1}E}} \\
        &\geq  \sum_{k=1}^n \inf_{q_{A_1^{k-1}B_1^{k-1}E}} H(A_k | B_k A_1^{k-1}B_1^{k-1}E)_{r^{(k)}_{A_k B_k | A_1^{k-1}B_1^{k-1}E}q_{A_1^{k-1}B_1^{k-1}E}} - n(\alpha-1)\log^2(2|A| + 1) \\
        &\geq \sum_{k=1}^n \bigg( \inf_{q_{A_1^{k-1}B_1^{k-1}E}}  \frac{1}{1+|A||B|\epsilon} H(A_k | B_k A_1^{k-1}B_1^{k-1}E)_{q^{(k)}_{A_k B_k | A_1^{k-1}B_1^{k-1}E}q_{A_1^{k-1}B_1^{k-1}E}} \\
        & \qquad+ \frac{\epsilon}{1+|A||B|\epsilon} \log|A| \bigg) - n(\alpha-1)\log^2(2|A| + 1) \\
        &\geq \sum_{k=1}^n \inf_{q_{A_1^{k-1}B_1^{k-1}E}} H(A_k | B_k A_1^{k-1}B_1^{k-1}E)_{q^{(k)}_{A_k B_k | A_1^{k-1}B_1^{k-1}E}q_{A_1^{k-1}B_1^{k-1}E}} - n(\alpha-1)\log^2(2|A| + 1)
    \end{align*}
    for $\alpha \in \rndBrk{1, 1+\frac{1}{\log(1+2|A|)}}$. In the third line, we have used the concavity of the von Neumann entropy along with the definition of $r^{(k)}_{A_k B_k | A_1^{k-1}B_1^{k-1}E}$. Using Lemma \ref{lemm:Hmin_rho_to_Halpha_sigma_using_Dmax}, we have 
    \begin{align*}
        H_{\min}^{\epsilon'}(A_1^n |B_1^n E)_{p} &\geq \tilde{H}_{\alpha}^{\uparrow}(A_1^n |B_1^n E)_{r} - \frac{\alpha}{\alpha-1}D_{\max}(p_{A_1^n B_1^n E}||r_{A_1^n B_1^n E}) -\frac{g_1(\epsilon', 0)}{\alpha-1}\\
        &\geq \sum_{k=1}^n \inf_{q} H(A_k | B_k A_1^{k-1}B_1^{k-1}E)_{q^{(k)}_{A_k B_k | A_1^{k-1}B_1^{k-1}E}q_{A_1^{k-1}B_1^{k-1}E}} \\
        &\quad\quad - n(\alpha-1)\log^2(2|A| + 1)- \frac{\alpha}{\alpha-1} n \log(1 + \epsilon |A||B|) -\frac{g_0(\epsilon')}{\alpha-1}.
    \end{align*}    
\end{proof}

\section{Lemma to bound distance after conditioning}

The following Lemma relates the distance of two states conditioned on an event to the distance between them without conditioning. 
\begin{lemma}
    \label{lemm:dist_cond_states}
    Suppose $\rho_{XA} = \sum_{x \in \mathcal{X}} p(x) \ket{x}\bra{x}\otimes \rho_{A|x}$ and $\tilde{\rho}_{XA} = \sum_{x \in \mathcal{X}} \tilde{p}(x) \ket{x}\bra{x} \otimes \tilde{\rho}_{A|x}$ are classical-quantum states such that $\frac{1}{2}\norm{\rho_{XA}- \tilde{\rho}_{XA}}_1 \leq \epsilon$. Then, for $x \in \mathcal{X}$ such that $p(x)>0$, we have 
    \begin{align}
        \frac{1}{2}\norm{\rho_{A|x}- \tilde{\rho}_{A|x}}_1 \leq \frac{2\epsilon}{p(x)}
    \end{align} 
\end{lemma}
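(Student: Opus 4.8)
The plan is to exploit the block-diagonal structure of classical-quantum states to reduce the claim to a single value of $x$, and then to compare the two normalised conditional states by correcting for the discrepancy in their classical weights.

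First I would observe that since the registers $\ket{x}\bra{x}$ are mutually orthogonal, the operator $\rho_{XA}-\tilde{\rho}_{XA}$ is block diagonal in $x$, so its trace norm splits as a sum over blocks:
\[
\norm{\rho_{XA}-\tilde{\rho}_{XA}}_1 = \sum_{x \in \mathcal{X}} \norm{p(x)\rho_{A|x} - \tilde{p}(x)\tilde{\rho}_{A|x}}_1.
\]
In particular each summand is at most $2\epsilon$, so for the fixed $x$ of interest we have $\frac{1}{2}\norm{p(x)\rho_{A|x}-\tilde{p}(x)\tilde{\rho}_{A|x}}_1 \le \epsilon$.

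Next, abbreviating $\sigma := \rho_{A|x}$, $\tilde{\sigma}:=\tilde{\rho}_{A|x}$, $p:=p(x)$ and $q:=\tilde{p}(x)$, I would use the identity $p(\sigma-\tilde{\sigma}) = (p\sigma - q\tilde{\sigma}) - (p-q)\tilde{\sigma}$ together with the triangle inequality and $\norm{\tilde{\sigma}}_1 = 1$ to get $p\norm{\sigma-\tilde{\sigma}}_1 \le \norm{p\sigma-q\tilde{\sigma}}_1 + |p-q|$. The weight gap is then controlled by taking traces: $|p-q| = |\tr(p\sigma-q\tilde{\sigma})| \le \norm{p\sigma-q\tilde{\sigma}}_1$. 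Combining these two estimates gives $p\norm{\sigma-\tilde{\sigma}}_1 \le 2\norm{p\sigma-q\tilde{\sigma}}_1 \le 4\epsilon$, and dividing by $2p$ (which is legitimate since $p=p(x)>0$) yields exactly the claimed bound $\frac{1}{2}\norm{\rho_{A|x}-\tilde{\rho}_{A|x}}_1 \le \frac{2\epsilon}{p(x)}$.

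There is no genuine obstacle in this argument; the only point requiring a moment's care is the degenerate case $\tilde{p}(x)=0$, where $\tilde{\rho}_{A|x}$ is not pinned down by $\tilde{\rho}_{XA}$. The chain of inequalities above goes through for any fixed normalised $\tilde{\sigma}$, and in that case the block bound already forces $p(x)\le 2\epsilon$, so that $\frac{2\epsilon}{p(x)}\ge 1$ and the desired inequality holds trivially because the trace distance between normalised states never exceeds $1$.
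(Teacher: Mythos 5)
Your proof is correct and follows essentially the same route as the paper's: split the trace norm over the orthogonal classical blocks, then use the triangle inequality with the identity $p(\sigma-\tilde{\sigma})=(p\sigma-q\tilde{\sigma})-(p-q)\tilde{\sigma}$ and bound the weight gap $|p-q|$ by the block trace norm, exactly as the paper does (your remark on the degenerate case $\tilde{p}(x)=0$ is a small extra point of care, not a different method).
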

\begin{proof}
    \begin{align*}
        & \frac{1}{2}\norm{\rho_{XA}- \tilde{\rho}_{XA}}_1 = \frac{1}{2}\sum_{x \in \mathcal{X}} \norm{p(x)\rho_{A|x} - \tilde{p}(x) \tilde{\rho}_{A|x}}_1 \leq \epsilon
    \end{align*}
    This implies that for $x \in \mathcal{X}$
    \begin{align*}
        \frac{1}{2}\norm{p(x)\rho_{A|x} - \tilde{p}(x) \tilde{\rho}_{A|x}}_1 \leq \epsilon 
    \end{align*}
    and 
    \begin{align*}
        \frac{1}{2}\vert p(x) - \tilde{p}(x) \vert \leq \epsilon.
    \end{align*}
    Using these inequalities, we have
    \begin{align*}
        \frac{1}{2}\norm{\rho_{A|x}- \tilde{\rho}_{A|x}}_1 &\leq \frac{1}{2}\norm{\rho_{A|x}- \frac{\tilde{p}(x)}{p(x)}\tilde{\rho}_{A|x}}_1 + \frac{1}{2}\left\vert 1- \frac{\tilde{p}(x)}{p(x)} \right\vert \norm{\tilde{\rho}_{A|x}}_1 \\
        &= \frac{1}{p(x)}\frac{1}{2}\norm{p(x)\rho_{A|x}- \tilde{p}(x)\tilde{\rho}_{A|x}}_1 + \frac{1}{p(x)}\frac{1}{2}\left\vert p(x)- \tilde{p}(x) \right\vert \\
        &\leq \frac{2\epsilon}{p(x)}.
    \end{align*}
\end{proof}

\bibliographystyle{halpha}
\bibliography{bib}

\end{document}